\tikzstyle{max}=[thick,draw,minimum size=1.4em,inner sep=0em]
\tikzstyle{min}=[diamond,thick,draw,minimum size=1.4em,
\tikzstyle{ran}=[circle,thick,draw,minimum size=1.4em,
\tikzstyle{act}=[circle,thick,draw,fill,minimum size=.7em,
\tikzstyle{mc}=[rounded corners,thick,draw,minimum size=1.4em,
\tikzstyle{tran}=[thick,draw,->,>=stealth]
\tikzstyle{loop left}=[tran, to path={.. controls +(150:.5)
\tikzstyle{loop right}=[tran, to path={.. controls +(30:.5)
\tikzstyle{loop above}=[tran, to path={.. controls +(60:.5)
\tikzstyle{loop below}=[tran, to path={.. controls +(240:.5)
\newtheorem{theorem}{Theorem}[section]
\newtheorem{lemma}[theorem]{Lemma}
 \newtheorem{proposition}[theorem]{Proposition}
\newtheorem{remark}[theorem]{Remark}
\newtheorem{definition}[theorem]{Definition}
\newcommand{\Nset}{\mathbb{N}}
\newcommand{\Exp}{\mathbb{E}}
\newcommand{\calP}{\mathcal{P}}
\newcommand{\calO}{\mathcal{O}}
\newcommand{\dist}{\Delta}
\newcommand{\game}{\mathcal{G}}
\newcommand{\ch}{M}
\newcommand{\val}{\mathit{val}}
\newcommand{\histories}{\mathcal{H}}
\newcommand{\shistories}{\histories(\sigma)}
\newcommand{\wait}{\mathit{wait}}
\newcommand{\pay}{\mathit{payoff}}
\newcommand{\sig}{\mathit{S}}
\newcommand{\Node}{\mathit{Node}}
\newcommand{\offset}{\mathit{offset}}
\newcommand{\dmax}{\hat{d}}
\newcommand{\aval}{att\textrm{-}\val}
\newcommand{\contrref}[1]{\textbf{#1}}
\newcommand{\runs}{\mathcal{R}}
\newcommand{\Defended}{\mathcal{D}}
\newcommand{\su}{\hat{u}}
\newcommand{\success}{\mathit{succ}}
\newcommand{\suc}{\mathit{succ}}
\newcommand{\fail}{\mathit{fail}}
\newcommand{\last}{\mathit{last}}
\newcommand{\supp}{\mathit{supp}}
\newcommand{\size}[1]{||#1||}
\newcommand{\PTIME}{\textbf{P}}
\newcommand{\NPTIME}{\textbf{NP}}
\newcommand{\PSPACE}{\textbf{PSPACE}}
\newcommand{\Def}{\textsc{Defend}}
\newcommand{\Var}{\mathit{Var}}
\newcommand{\U}[1]{U\langle #1 \rangle}
\newcommand{\Circle}{\mathit{Circle}}
\newcommand{\Char}{\mathbf{Char}}
\newenvironment{claim}[1]{\par\noindent\underline{Claim #1:}\space}{}
\newenvironment{claimproof}[1]{\par\noindent\underline{Proof:}\space#1}{\hfill $\blacksquare$}
\newcommand{\theoremlike}[2]{\par\medskip\penalty-250
\refstepcounter{theorem}{{\bfseries\noindent#2 \ref{#1}.}}}
\newcommand{\thmhelperpre}[2]{\theoremlike{#1}{#2}}
\newcommand{\thmhelperpost}{\par\medskip}
\newenvironment{reftheorem}[1]{\thmhelperpre{#1}{Theorem}\it}{\thmhelperpost}
\setlist{itemsep=-.8ex,topsep=.5ex,leftmargin=3ex}
\title{Strategy Synthesis in Adversarial Patrolling Games}
\author{
Tom\'a\v{s} Br\'azdil \and
Petr Hlin\v{e}n\'{y} \and
Anton\'\i n Ku\v{c}era \and
Vojt\v{e}ch {\v{R}}eh\'ak \and
Mat\'{u}\v{s} Abaffy}
\date{Faculty of Informatics,
      Masaryk University,
      Botanick\'{a} 68a,
      Brno, Czech Republic\\
      \texttt{\{brazdil,hlineny,kucera,rehak\}@fi.muni.cz,bafco@mail.muni.cz}}
\begin{document}

\maketitle

\begin{abstract}
  Patrolling is one of the central problems in operational security.
  Formally, a \emph{patrolling problem} is specified by a set $U$ 
  of nodes (admissible defender's positions), a set $T \subseteq U$ of vulnerable \emph{targets},
  an environment $E \subseteq U \times U$ (admissible defender's moves), and a function $d$ which to every
  target $u$ assigns the time $d(u) \in \Nset$ needed to complete 
  an intrusion at~$u$. The goal is to design an optimal strategy
  for a \emph{defender} who is moving from node to node and
  aims at detecting possible intrusions at the targets. The defender can detect 
  an intrusion at a target $u$ only by visiting $u$ before the intrusion is
  completed. The goal of the \emph{attacker} is to maximize the
  probability of a successful attack, and the defender aims at the opposite. 
  We assume that the attacker
  is \emph{adversarial}, i.e., he knows the strategy of the defender
  and can observe her moves. 

  We prove that the defender has an optimal strategy for every 
  patrolling problem. Further, we show
  that for every $\varepsilon > 0$, there exists a finite-memory
  $\varepsilon$-optimal strategy for the defender constructible
  in exponential time (in the size of the game), and we observe that such a strategy cannot be computed
  in polynomial time unless $\PTIME = \NPTIME$.
  
  Since (sub)optimal strategy synthesis is computationally hard for
  patrolling problems in general environments, we continue our study
  by restricting ourselves to fully connected
  environments where $E = U \times U$ (where we can safely assume that $T = U$).
  Then, a patrolling problem is fully determined by its signature,
  i.e., a function $S$ such that $S(k)$ is the total number of targets with
  attack length equal to~$k$. We assume that $S$ is encoded by using binary numbers,
  i.e., the encoding size of $S$ can be exponentially smaller than
  the number of targets. We start by establishing an upper bound on the value of a given patrolling 
  problem, i.e., we bound the maximal probability of successfully defended attacks 
  that can be achieved by the defender against an arbitrary strategy 
  of the attacker. The bound is valid for an arbitrary patrolling problem such that $T =U$ and depends only on
  the signature~$S$.
  Then, we introduce a decomposition method which allows to split a given patrolling 
  problem $\game$ into smaller subproblems and construct a defender's strategy for $\game$ 
  by ``composing'' the strategies constructed for these subproblems. Using this method,
  we can synthesize (sub)optimal defender's strategies in time which is proportional
  to the encoding size of $S$. Consequently, we can compute (sub)optimal strategies
  for \emph{exponentially larger} patrolling problems then the existing methods based on 
  mathematical programming, where the size of the programs is proportional to the number
  of targets.
  Finally, for patrolling problems with $T = U$ and a well-formed signature, i.e., a signature $S$ such that
  $k$ divides $S(k)$ for every $k \in \Nset$, we give an exact 
  classification of all \emph{sufficiently connected} environments where the defender
  can achieve the same value as in the fully connected uniform environment. This result is useful
  for designing ``good'' environments where the defender can act optimally. 
\end{abstract}  

\pagebreak

\section{Introduction}
\label{sec-intro}

A central problem in security and operational research is how to deploy limited
security resources (such as police patrols, security guards, etc.) 
to maximize their effectiveness. Clearly, police patrols cannot be everywhere 
all the time, security guards cannot check every door every minute, etc., which
raises a crucial question how to utilize them best.
Game theoretic approaches to operational security problems based on Stackelberg model have received much attention
in recent years (see, e.g., \cite{tambe2011}). Informally, 
the  problem is to find the best possible strategy for a \emph{defender}
who is supervising potentially vulnerable targets (such as airports,
banks, or patrol stations) and aims at detecting possible 
\emph{intrusions}. The time needed to complete an intrusion at each 
target is finite, and the aim of the defender is to maximize the 
probability of discovering an intrusion before it is completed. 
An intensive research in this area has led to numerous successful applications (see, e.g., \cite{patrol-aplik-LAX,patrol-aplik-transit_system}). 
Due to high demand for practically usable solutions, the main emphasis has been put on inventing methods that can produce working solutions for large-scale instances quickly.  In most cases, the problem is 
simplified (for example, by restricting the set of defender's strategies
to some manageable subclass), and various tricks are used to avoid non-linear constraints and/or objectives. This approach enables efficient synthesis of strategies that are ``good enough'' for practical purposes (thus, the main engineering goal is achieved), but does not allow for
synthesizing optimal or \mbox{$\varepsilon$-strategies} (for a given $\varepsilon > 0$) in general. Further, the 
size of the resulting mathematical program is usually proportional 
to the number of targets, which influences the scalability of these 
methods. Since developing the basic theory of the underlying game model 
has not received so much attention as designing practically usable 
solutions, many fundamental questions (such as the computability of the Strackelberg value, the existence and computability of an \mbox{optimal/$\varepsilon$-optimal} defender's strategy, etc.) are open or have even been answered
incorrectly. In this paper, we provide a solution for some of these problems. As an unexpected payoff of our study, we also obtain a completely new approach to synthesizing defender's strategies in security games with fully connected environment based on compositional reasoning, which avoids 
the use of mathematical programming and can be applied to exponentially larger
instances than the currently available methods. A detailed explanation of the achieved results is given below.

In this paper, we consider the adversarial variant of patrolling, where the attacker is assumed to
be quite powerful---he can observe defender's moves, and he even knows defender's strategy. However, 
he cannot predict the way of resolving the defender's randomized choice. Formally,
a \emph{patrolling problem} $\game$ is specified by a finite set $U$ 
of nodes (possible defender's positions), a set $T \subseteq U$ of targets, an initial node $\su \in T$ (the initial 
position of the defender), an environment $E \subseteq U \times U$ (admissible moves of the defender) 
and a function $d : T \rightarrow \Nset$  which to every target associates the corresponding attack length. 
The defender starts at $\su$ and then moves from node to node consistently with $E$. We assume that 
traversing every edge takes precisely one unit of time (longer moves can be modeled by inserting
intermediate nodes.)
The defender may choose  the next node randomly and independently of 
her previous choices. Formally, a \emph{defender's strategy} is a function $\sigma : \histories \rightarrow \Delta(U)$ where $\histories$ is the set of all finite 
non-empty sequences of nodes and $\Delta(U)$ is the set of all 
probability distributions over~$U$. We require that $\sigma$ is consistent
with $E$, i.e., the support of $\sigma(h)$ is a subset of nodes that 
are immediate successors of the last node of~$h$. Note that each $\sigma$
determines a unique probability space over all \emph{runs} 
(infinite paths in $(U,E)$) initiated in $\su$ in the standard way, and we use  
$\calP^\sigma$ to denote the associated probability measure.

Depending on the observed walk of the defender, the attacker may choose to attack some
target or wait (we assume that the attacker may attack at most once
during a play). More precisely, an \emph{attacker's strategy} is function
$\pi : \histories \rightarrow T \cup \{\bot\}$ such that whenever $\pi(h)
\neq {\bot}$, then for all proper prefixes $h'$ of $h$ we have that
$\pi(h')={\bot}$. Since the attacker has a complete knowledge about
the current position of the defender, he would \emph{never} attack a
target currently visited by the defender. Still, he may attack this
target immediately after the defender's departure, i.e., long before
the defender arrives to the next node (think of an UAV patrolling
military bases). This assumption is reflected in the definition 
of a discovered attack---if the current location of the
defender is $u$ and the attacker attacks a target $v$, the defender
has to visit the node $v$ within the next $d(v)$ time units to discover
this attack, even if $u = v$. The aim of the defender is to maximize
the probability of successfully detected (or not initiated) attacks,
while the attacker aims at the opposite.  Given a strategy $\sigma$ of
the defender and a strategy $\pi$ of the attacker, we use
$\calP^{\sigma}(\Defended[\pi])$ to denote the probability of all
infinite paths $w$ initiated in $\su$ such that either $\pi(h) = {\bot}$ for
every prefix $h$ of $w$ (i.e., no attack is encountered along $w$), or
$\pi(h) = v \in T$ for some prefix $h$ of $w$ and $v$ is among the nodes visited
after $h$ in $w$ in at most $d(v)$ transitions (i.e., $w$ contains a successfully
defended attack). The \emph{value of $\sigma$} is defined by
$\val(\sigma) = \inf_{\pi} \, \calP^{\sigma}(\Defended[\pi])$, where
$\pi$ ranges over all strategies of the attacker.
The \emph{Stackelberg value} of $\game$ is defined by 
$\val \ = \ \sup_{\sigma} \, \val(\sigma)$, where $\sigma$ ranges over all strategies of the defender.
A defender's strategy $\sigma^*$ is \emph{$\varepsilon$-optimal} (where $\varepsilon \geq 0$) if 
$\val(\sigma^*) \geq \val - \varepsilon$. A \mbox{$0$-optimal} strategy is called \emph{optimal}.

\begin{remark}
   In our definition of the patrolling problem, we assume that all targets are equally important to
   the defender (and the attacker). The results \contrref{A} and \contrref{B} presented below remain valid even if we extend
   the model by assigning numerical \emph{weights} to nodes and modify the game objective so that
   the defender/attacker aims at maximizing/minimizing the expected weight of a discovered attack.
   If the weight (importance) of nodes is \emph{different} for each player, the game is no longer zero-sum,
   and the solution concept becomes somewhat different (consequently, our results do not apply in this case).
\end{remark}

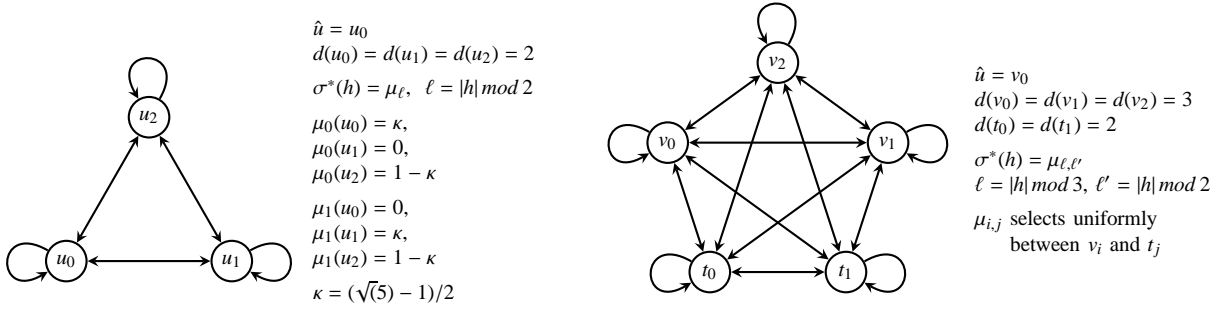
\begin{figure}
\begin{tikzpicture}[x=2.2cm,y=2.2cm,font=\scriptsize]
\node (v) at (0,-.13)    [ran] {$u_2$};
\node (u) at (-.5,-1) [ran] {$u_0$};
\node (s) at (.5,-1)  [ran] {$u_1$};
\draw [tran,<->] (v) -- (u);
\draw [tran,<->] (v) -- (s);
\draw [tran,<->] (u) -- (s);
\path (u) edge[loop left]  node[left=1pt] {} (u);
\path (s) edge[loop right] node[left=1pt] {} (s);
\path (v) edge[loop above] node[left=1pt] {} (v);
\node[text width=4cm,align=left] at (1.9,-.4) 
  {   $\su = u_0$\\
   $d(u_0) = d(u_1) = d(u_2) = 2$\\[1ex]
   $\sigma^*(h) = \mu_\ell$, \ $\ell = |h| \,\mathit{mod}\, 2$\\[1ex]
   $\mu_0(u_0) = \kappa$,\\
   $\mu_0(u_1) = 0$,\\
   $\mu_0(u_2) = 1- \kappa$\\[1ex]
   $\mu_1(u_0) = 0$,\\
   $\mu_1(u_1) = \kappa$,\\
   $\mu_1(u_2) = 1- \kappa$\\[1ex]
   $\kappa = (\!\sqrt(5) - 1)/2$
  };
\path (3.8,-.5) coordinate (origin);
\path (162-0:.7)    ++(origin) coordinate (P0);
\path (162-1*72:.7) ++(origin) coordinate (P1);
\path (162-2*72:.7) ++(origin) coordinate (P2);
\path (162-3*72:.7) ++(origin) coordinate (P3);
\path (162-4*72:.7) ++(origin) coordinate (P4);
\node (V0) at (P0)    [ran] {$v_0$};
\node (V1) at (P1)    [ran] {$v_2$};
\node (V2) at (P2)    [ran] {$v_1$};
\node (T0) at (P3)    [ran] {$t_1$};
\node (T1) at (P4)    [ran] {$t_0$};
\draw [tran,<->] (V0) -- (V1);
\draw [tran,<->] (V0) -- (V2);
\draw [tran,<->] (V0) -- (T0);
\draw [tran,<->] (V0) -- (T1);
\draw [tran,<->] (V1) -- (V2);
\draw [tran,<->] (V1) -- (T0);
\draw [tran,<->] (V1) -- (T1);
\draw [tran,<->] (V2) -- (T0);
\draw [tran,<->] (V2) -- (T1);
\draw [tran,<->] (T0) -- (T1);
\path (V0) edge[loop left]  node[left=1pt] {} (V0);
\path (V1) edge[loop above]  node[left=1pt] {} (V1);
\path (V2) edge[loop right]   node[left=1pt] {} (V2);
\path (T0) edge[loop right]   node[left=1pt] {} (T0);
\path (T1) edge[loop left]  node[left=1pt] {} (T1);
\node[text width=4cm,align=left] at (5.9,-.4) 
  {   $\su = v_0$\\
   $d(v_0) = d(v_1) = d(v_2) = 3$\\
   $d(t_0) = d(t_1) = 2$\\[1ex]
   $\sigma^*(h) = \mu_{\ell,\ell'}$\\
   $\ell = |h| \,\mathit{mod}\, 3$, $\ell' = |h| \,\mathit{mod}\, 2$\\[1ex]
   $\mu_{i,j}$ selects uniformly \\
   ~~~~ between $v_{i}$ and $t_{j}$
  };
\end{tikzpicture}
\caption{Two examples of patrolling problems and the corresponding 
optimal defender's strategies.}
\label{fig-example}
\end{figure}

\noindent
\textbf{Two simple examples.} To get some intuition about the patrolling 
problem, we start with two simple examples that will also be 
used to demonstrate some of our results. Let us first 
consider the patrolling problem of Fig.~\ref{fig-example}~(left).
Here, we need to patrol three nodes with the same attack length~$2$ (i.e., $T = U$), 
where $u_0$ is the initial node, in a fully connected environment. 
Let us try to determine the Stackelberg value and
an optimal strategy of the defender. A naive idea is to pick
a strategy $\sigma$ which always selects each of the
three immediate successors with probability $1/3$.
Consider a strategy
$\pi$ of the attacker such that $\pi(u_0) = u_2$. We have that
$\calP^{\sigma}(\Defended[\pi]) = 1/3 + 2/3 \cdot 1/3 = 5/9$, and one
can easily verify that for \emph{every} attacker's strategy $\pi'$
we have that $\calP^{\sigma}(\Defended[\pi']) \geq 5/9$. Hence,
$\val \geq \val(\sigma) = 5/9$. However, the defender can do 
better. Consider
the strategy $\sigma^*$ defined in Fig.~\ref{fig-example}~(left).
Observe that $\sigma^*$ is \emph{independent} of the currently
visited node; the only relevant information about the history
of a play is whether its \emph{length} is even or odd. If it is 
even (odd), then $\sigma^*$ randomly selects between $u_0$ and $u_2$ 
(or between $u_1$ and $u_2$) where the ratio between the two 
probabilities is the \emph{golden ratio}.
One can check that for every defender's strategy $\pi$ we have that
$\calP^{\sigma^*}(\Defended[\pi]) \geq (\!\sqrt{5}-1)/2$. Hence,
$\val \geq (\!\sqrt{5}-1)/2 > 5/9$. In fact, the 
strategy $\sigma^*$ is \emph{optimal}, i.e., $\val = (\!\sqrt{5}-1)/2$, which is perhaps 
unexpected (see also the paragraph ``\emph{Comments on \contrref{D}}'' below).

Now consider the patrolling problem of Fig.~\ref{fig-example}~(right). 
Here we need to patrol five nodes ($T=U$); two of them have the attack 
length~$2$ and three of them have the attack length~$3$.
Again, we assume a fully connected environment. If we examine a naive strategy $\sigma$ which
always selects the next node uniformly among all immediate successors, we
obtain that $\val(\sigma)= 9/25$. A better strategy $\sigma^*$ for the
defender is shown in Fig.~\ref{fig-example}~(right).  The strategy
$\sigma^*$ depends only on the length of the history modulo~$6$, and it
always chooses uniformly between exactly two nodes. It directly follows from
our subsequent contributions (namely \contrref{C}) that $\val =
\val(\sigma^*)= 1/2$, i.e., $\sigma^*$ is optimal and the Stackelberg value
is equal to $1/2$.

\smallskip 

\noindent
\textbf{Our contribution.} We start by proving the following results about the general
patrolling problem:   

\begin{itemize}
\item[\textbf{A.}] For an arbitrary patrolling problem, there exists an optimal strategy for the defender.
\item[\textbf{B.}] Given a patrolling problem $\game = (U,T,\hat{u},E,d)$ and
   a rational $\varepsilon > 0$, there is a \mbox{finite-memory} \mbox{$\varepsilon$-optimal}
   strategy $\sigma$ for the defender computable in time  exponential in $\size{\game}$ and polynomial in
   $\varepsilon^{-1}$ (here, $\size{\game}$ is the encoding size of $\game$, where
   the attack lengths are encoded in unary). Further, $\val(\sigma)$ is rational and
   can also be computed in exponential time, i.e., we can also approximate $\val$ up to a given $\varepsilon > 0$
   in exponential time. We also observe that $\val$ cannot be approximated up to the error smaller than
   $|U|^{-1}$ in polynomial time unless $\PTIME = \NPTIME$. 
\end{itemize}

\noindent
\emph{Comments on \contrref{A}.} The existence of optimal strategies for
patrolling problems (and their variants) has been claimed in previous works
(see, e.g., \cite{Basilico2009,Basilico2012}) by arguing in 
the following way. For each
$j \in \Nset$, let $\Sigma^j$ be the class of all defender's strategies
$\sigma$ such that $\sigma(h)$ depends only on the last~$j$ nodes of~$h$. 
If we restrict the range of $\sigma$ to the strategies of $\Sigma^j$ in
the definition of Stackelberg value, we obtain an
approximated value, denoted by $\val^j$.  
Obviously, $\val^{j+1} \geq \val^j$ for every $j \in \Nset$. 
By adapting the results of \cite{conitzer2006}, it has been shown
in \cite{Basilico2012} that for every $j \in \Nset$ one can compute a 
strategy $\sigma \in \Sigma^j$ which achieves the outcome $\val^j$
or better against every attacker's strategy. In 
\cite{Basilico2009,Basilico2012}, it has been also claimed that 
$\val = \val^j$ for some sufficiently large~$j$ (without providing any upper bound).
The argument is 
based on applying general results about strategic-form games, but 
a full proof is omitted. Using the techniques of Section~\ref{sec-well-formed},
we prove that this claim is \emph{incorrect}, even for the
simple patrolling problem of Fig.~\ref{fig-example}~(right) where 
the defender has \emph{no} optimal strategy in 
$\bigcup_{j=1}^\infty \Sigma^j$.  
In our proof of~\contrref{A}, we take
an infinite sequence of strategies $\sigma_1,\sigma_2,\ldots$
such that $\lim_{n \rightarrow \infty} \val(\sigma_n) = \val$
and ``extract'' and optimal strategy out of it.

\smallskip

\noindent
\emph{Comments on \contrref{B}.} Our exponential-time algorithm for constructing 
an \mbox{$\varepsilon$-optimal} strategy is based on combining two main ideas.
First, we show that the Stackelberg value of a given game stays the same
when the initial target is changed. This implies that small perturbations in
probability distributions employed by an optimal strategy cause only 
a small change in the strategy value. Hence, we can compute a suitable discretization
scale and safely restrict the range of considered strategies to the discretized 
probability distributions. Let $\dmax = \max_{u\in U} \{d(u)\}$. The next important observation
is that the $\dmax$-step behaviour of every strategy (after some finite history) can be fully 
characterized by a real-valued vector with exponentially many components, where each component corresponds to a probability of visiting some vertex in at most $k \leq \dmax$ transitions. Due to the previous discretization
step, we can safely restrict the range of these vectors to finitely (exponentially) 
many values. It follows that if there is \emph{some} \mbox{$\varepsilon$-optimal} strategy,
then there is also an \mbox{$\varepsilon$-optimal} strategy whose $\dmax$-step behaviour (after every
finite history) can be characterized by one of these exponentially many vectors, and we show how to
check the existence of such a strategy in exponential time (this is perhaps the most difficult part
of the argument).

The lower complexity bound is trivial. Given a patrolling problem with $d(u)=|U|=k$ for all $u\in U$, we have that $\val=1$ iff the environment contains a directed cycle
through all the nodes (i.e., it is a \emph{Hamiltonian digraph}),
which is \NPTIME-hard to decide.
If the game is a negative instance, then for every strategy of the defender, the attacker clearly can launch an
attack at the very beginning of a play with probability of success at least~$1/k$. From this we immediately obtain the second part of~\contrref{B}.  
Although in recent~\cite{DBLP:conf/fossacs/HoO15}, it is shown that the problem whether  $\val =1$
for a given patrolling problem is \PSPACE-complete,
the construction of~\cite{DBLP:conf/fossacs/HoO15} only (for principal reasons) rules out,
unless $\PTIME = \PSPACE$,
the existence of an $\varepsilon$-optimal strategy for the defender
with $\varepsilon\leq c\cdot\mathop{exp}\,(-|U|)$ for some~$c>0$.
\smallskip

Since solving general patrolling problems is computationally hard, we continue our study by restricting ourselves to \emph{fully connected} environments, where $E = U \times U$. Observe that the defender has no reason to visit non-target nodes in fully connected environments, and hence we can further safely assume that $T = U$. For example, think of a surveillance system equipped with several cameras installed in front of various doors, where the footage of the cameras is shown in turns on a single screen (for some small constant amount of time) watched by a human guard. The time needed to break (open and close) different doors can be different. Then, the nodes/targets of the associated 
patrolling problem correspond to the cameras, the environment is fully connected (assuming one can switch between the cameras freely), and the transition time between two nodes is the same (and it can be normalized to $1$). 
Under these assumptions, a patrolling problem is fully specified by its \emph{signature}, i.e., a function
$S : \Nset \rightarrow \Nset_0$ which for a given $k \in \Nset$ returns the number of all $u\in T$ 
with $d(u) = k$. An important subclass of signatures are \emph{well-formed} signatures, where $k$ divides
$S(k)$ for all $k \in \Nset$. For example, the signature of the patrolling problem of Fig.~\ref{fig-example}~(right) is well-formed, while the 
signature of the patrolling problem of Fig.~\ref{fig-example}~(left)
is not. We assume that signatures are represented using \emph{binary} 
numbers, i.e., the encoding size of $S$, denoted by $\size{S}$, can be \emph{exponentially smaller} than the number of nodes.

Before formulating our results about the patrolling problem in a fully connected environment, we need to explain one important \emph{conceptual} contribution of this paper, which is the notion of a \emph{modular} strategy and the associated \emph{compositionality} principle. A defender's strategy $\sigma$ is \emph{modular} if
$\sigma(h)$ depends only on the length of $h$ modulo some constant~$c$
(in particular, note that the current defender's position is irrelevant). For example, the two strategies of Fig.~\ref{fig-example} are modular (the constant $c$ is equal
to $2$ and $6$ for the strategy on the left and on the right, respectively). Let $\game$ be a patrolling problem with a set of nodes~$U$. For every
$U' \subseteq U$, let $\game[U']$ be the patrolling problem obtained
from $\game$ by restricting the set of nodes to $U'$ and the set of transitions to $E \cap U' {\times} U'$ (note that this
makes sense even if the environment of $\game$ is not fully connected). Let $U_1,\ldots, U_k \subseteq U$, and let
$\sigma_1,\ldots,\sigma_k$ be modular defender's strategies in
$\game[U_1],\ldots,\game[U_k]$, respectively. For every probability 
distribution $\nu$ over $\{1,\ldots,k\}$, we can construct the 
\emph{$\nu$-composition} of $\sigma_1,\ldots,\sigma_k$, which is a modular 
defender's strategy $\sigma$ in $\game[U_1\cup \cdots \cup U_k]$ 
defined by $\sigma(h) = \nu_1\cdot\sigma_1(h) + \cdots + \nu_k\cdot\sigma_k(h)$.
Note that $\sigma$ is a correctly defined defender's strategy for 
$\game[U_1\cup \cdots \cup U_k]$ only if the environment
of $\game$ contains all of the required transitions between the nodes
of $U_1,\ldots,U_k$ (if the environment of $\game$ is fully connected, 
this is no issue). It follows immediately that
$\val(\sigma) \geq \min \{\nu_i\cdot \val(\sigma_i) \mid 1 \leq i \leq k \}$ (as we shall see, this inequality can be \emph{strict}).
Thus, one can construct a defender's strategy for a given
patrolling problem $\game$ by splitting the set of nodes into two or
more subsets (not necessarily disjoint), solving the smaller instances recursively, and then 
computing a suitable convex combination of the solutions. As we shall see
momentarily, this approach leads to an efficient algorithm capable 
of computing optimal (or suboptimal) strategies for \emph{very} large patrolling problems in couple of seconds.

Now we can explain our main results about the patrolling problem in a fully connected environment. 
 
\begin{itemize}
\item[\textbf{C.}] Given a patrolling problem $\game$ where $T=U$, we have that \mbox{$\val \leq \left(\sum_{k \in \supp(S)} \frac{S(k)}{k}\right)^{-1}$}  where $S$ is the signature of $\game$ and $\supp(S)$ is the set of all $k \in \Nset$ such that
$S(k) > 0$. This bound is valid for an arbitrary environment~$E$.
\item[\textbf{D.}] There is an algorithm which inputs a signature $S$ of a patrolling problem $\game$ with a fully connected environment (where $T=U$) and outputs a pair $(\theta,V)$ such that the following conditions are satisfied:
  \begin{itemize}
  	 \item The running time of the algorithm in \emph{polynomial in $\size{S}$}. 
  	 \item $\theta$ is a symbolic representation of a modular strategy for $\game$, and $V$ is a symbolic representation
  	    of $\val(\theta)$. Both $\theta$ and $V$ are parameterized by variables $\{p_1,\ldots,p_k\}$, where $k$ is bounded by a polynomial in $\size{S}$. The values of  $\{p_1,\ldots,p_k\}$ correspond to the unique solution (in $[0,1]^k$) 
  	    of a recursive system of polynomial equations that is also constructed by the algorithm. The number of variables $k$
  	    actually depends on the ``Euclid complexity'' of $S$ and can be constant (or even zero) for arbitrarily large~$S$. 
  	 \item If the signature $S$ is well-formed, then $k=0$  and the strategy $\theta$ is optimal.
  	    Since $k=0$, no extra computational time is needed to calculate/approximate the parameters, and hence $\theta$ is ``fully synthesized'' in time polynomial in $\size{S}$. 
  	 \item If the signature $S$ is not well-formed, then the
  	    strategy $\theta$ is a $\nu$-composition of simpler modular strategies and the variables defined via the
  	    system of polynomial equations correspond to the weights used to combine these simpler strategies together. 
  	    Further, we have that $\val_d < \val(\theta) < \val_u$, where
        $\val_d$ and $\val_u$ are the Stackelberg values of the patrolling
        problems with signatures $S\!_d$ and $S\!_u$ defined by
        $S\!_d(k) = k \cdot \lfloor \frac{n}{k} \rfloor$ and
        $S\!_u(k) = k \cdot \lceil \frac{n}{k} \rceil$, respectively.	    
  \end{itemize}
\item[\textbf{E.}] Given a patrolling problem $\game$ 
  with $T = U$ and a well-formed
  attack signature $S$, we say that the environment 
  $E$  of $\game$ is \emph{sufficiently connected} if  $\val$ is equal to the value 
  of $\game$ in the \emph{fully connected} environment. The problem whether $E$ is sufficiently connected  is 
  \mbox{\textbf{NP}-complete}. Further,
  this problem is \mbox{\textbf{NP}-complete} even for
  a subclass of patrolling problems such that $\supp(S) = \{k\}$, where 
  $k \geq 3$ is a fixed constant. For a subclass of patrolling
  problems where $\supp(S) = \{2\}$, the problem is solvable in 
  polynomial time.
                \end{itemize}

\noindent
\emph{Comments on \contrref{C}.} Note that the presented upper bound on
$\val$ does not depend on~$E$. An obvious question is whether this bound is
\emph{tight}. That is, given a function $S : \Nset \rightarrow \Nset_0$ such
that $\supp(S)$ is finite, we ask whether there exists a patrolling problem
$\game$ with $T = U$ such that the signature 
of $\game$ is $S$ and
\mbox{$\val = 1/(\sum_{k \in \supp(S)} S(k)/k)$}.  It follows from our results that
the answer to this question is \emph{yes} if $S$ is well formed. This means that the bound can be potentially lowered
(only) for those $S$ that are not well formed.  As an example, consider the
patrolling problem of Fig.~\ref{fig-example}~(left). Here $\supp(S) = \{2\}$
and $S(2) = 3$, and hence we obtain $\val \leq 2/3$. Since $\val =
(\!\sqrt{5}-1)/2 < 2/3$, the bound is not tight. For the patrolling problem
of Fig.~\ref{fig-example}~(right) we have that $\supp(S) = \{2,3\}$, $S(2) =
2$, and $S(3) = 3$, which gives an upper bound $(2/2 + 3/3)^{-1} =
1/2$. Since $\val = 1/2$, this bound is tight.  \smallskip

\noindent
\emph{Comments on \contrref{D}.}
The strategy $\theta$ is obtained by applying the ``decomposition''
technique described earlier. Since we intend to produce a strategy synthesis algorithm whose
running time is polynomial in $\size{S}$, we also need to design a special 
language allowing for compact representation of modular strategies in space polynomial
in $\size{S}$ (see Section~\ref{sec-modular-strategies}).  First, we split the nodes of $\game$ into
disjoint subsets according to their attack length. Then, we show how to compute a modular strategy for a set of $n$ nodes with the same attack 
length $d$. Here, we use a decomposition technique which resembles Euclid's gcd algorithm. First we check whether $d$ divides $n$. If so, we split the $n$ nodes into pairwise disjoint sets $U_0,\ldots,U_{d-1}$ so that $|U_i| = n/d$ for every
$0\leq i < d$, and define a modular strategy $\sigma$ such that 
$\sigma(h)$ selects uniformly among the elements of $U_i$, where
$i = |h| \mathrm{~mod~} d$. Observe that 
$\val(\sigma) = d/n$, which is optimal by \contrref{C}. If $d$ does not
divide $n$ and $n = k\cdot d + c$ where $1 \leq c < d$, then we split
the $n$ nodes into two disjoint subsets $U_1$ and $U_2$, where
$U_1$ contains $k \cdot d$ nodes and $U_2$ contains $c$ nodes.
A strategy $\sigma_1$ for $U_1$ is constructed as above, and we need
to process the set $U_2$. If $c$ divides $d$, the strategy $\sigma_2$
for $U_2$ is a simple loop over the nodes of $U_2$.
A closer look reveals that an appropriate distribution 
$\nu = (\nu_1,\nu_2)$ for combining $\sigma_1$ and $\sigma_2$ should 
satisfy the equation 
$\nu_1 \cdot \val(\sigma_1) = 1 - \nu_1^{d/c}$ which says that the nodes of
$U_1$ and $U_2$ are defended equally well. If  
$c$ does not divide $d$ and $d = j \cdot c + t$, where $1 \leq t < c$,
then the strategy $\sigma_2$ for $U_2$ spends the first $j\cdot c$ steps by 
performing the simple loop over the nodes of $U_2$, and the next $t$ steps by behaving exactly as
the strategy constructed for $|U_2|$ nodes with attack length $t$ (which is constructed
recursively). Then, $\sigma_2$ just keeps repeating its first $d$~steps. Again, we can setup an equation that should be satisfied by an appropriated distribution which combines $\sigma_1$ and $\sigma_2$ so that all targets are protected
equally well. This procedure eventually produces a modular strategy for defending $n$ nodes with
the same attack length~$d$. If $d$ divides $n$, then this strategy is provably optimal. In fact, we conjecture
that the constructed strategy is \emph{always} optimal, but we leave this hypothesis open (recently,
it has been shown by Lamser \cite{Lamser:BCthesis} that the algorithm produces an optimal strategy for all odd $n$ and 
$d=2$). Further, let us note that the number of variables/equations in the constructed system of polynomial equations 
is bounded by a polynomial in $\size{S}$, but the size of $S$ 
is \emph{not} a good measure for identifying hard instances. 
What really matters is the number of ``swaps'' in the Euclid's algorithm applied to $n$ and $d$; see  
Section~\ref{sec-modular-strategies} for further comments.  
After processing all subsets of nodes with the same attack length, we
combine the resulting strategies using an appropriate distribution.
The details are given in Section~\ref{sec-modular-strategies}. 

As an example, consider the patrolling problems of Fig.~\ref{fig-example}.
In the first case, we have $3$~nodes with the same attack length~$2$.
Since $2$ does not divide $3$, we split the set of nodes into
$U_1 = \{u_0,u_1\}$ and $U_2=\{u_2\}$. The strategy $\sigma_1$ for $U_1$
selects the node $u_1$ or $u_0$ with probability $1$, depending on whether
the length of the history is odd or even, respectively. Note that
$\val(\sigma_1)  = 1$. For the set $U_2$, we have that $|U_2|$ divides
$2$, and so the strategy $\sigma_2$ is a self-loop on $u_2$. The appropriate
distribution  $\nu = (\nu_1,\nu_2)$ for combining $\sigma_1$ and $\sigma_2$ should satisfy the equation $\nu_1 = 1 - \nu_1^{2}$. Thus, we obtain that
$\nu = \kappa = (\!\sqrt{5} -1)/2$, which yields the strategy of Fig.~\ref{fig-example}~(left). The strategy of Fig.~\ref{fig-example}~(right)
is obtained by first splitting the set of nodes into $U_1 = \{t_0,t_1\}$ and
$U_2 = \{v_0,v_1,v_2\}$ according to their attack length, solving these
subproblems (note that the solution for $U_i$ is a strategy which loops 
over the vertices of $U_i$), and then combining them with $\nu = (0.5, 0.5)$. 
\smallskip

\noindent
\emph{Comments on \contrref{E}.} We show that for every patrolling 
problem $\game = (U,T,\su,E,d)$ with $T=U$ and a well formed signature~$S$, there
exists a \emph{characteristic digraph} $\ch_S$ depending only on $S$
and computable in polynomial time, such that $E$ is sufficiently connected \emph{if, and only if,}
$(U,E)$ contains a subdigraph isomorphic (respecting the attack lengths) to $\ch_S$.  From this we immediately obtain that the problem whether a given $E$ is sufficiently connected
is in~\textbf{NP}, and we also provide the matching lower bound.
		Note that the characteristic digraph can be used to \emph{synthesize}
a minimal sufficiently connected environment for solving a given 
patrolling problem.

\smallskip

\noindent
\textbf{Related work.} 
Two player zero-sum stochastic games with both perfect and imperfect
information have been studied very intensively in recent years
(see, e.g., \cite{ChatterjeeH12,HansenMZ13,HansenKLMT11}), 
also for games with infinite state-space
\cite{BBKO:BPA-games-reachability-IC,EY:RSCG,EtessamiWY08,AbdullaCMS13}.
Patrolling games have so far been considered mainly in the context of
operation research. Here, the emphasis is usually put on finding methods
allowing to synthesize a sufficiently good defender's strategy, and 
the basic theoretical questions related to the underlying formal model 
are usually not studied in greater detail. The problem of finding
locally optimal strategies for robotic patrolling units have been studied
either in restricted environments (e.g., on circles in
\cite{AgmonKK08,AgmonKK08-2}), or fully-connected environments with weighted
preference on the targets \cite{Basilico2009,Basilico2009-2}.
Some novel aspects of the problem, such as variants with moving 
targets \cite{bosansky2011aamas,Fang2013}, multiple patrolling 
units \cite{Basilico2010}, or movement 
of the attacker on the graph \cite{Basilico2009-2} and reaction 
to alarms \cite{MunozdeCote2013} have also been considered in recent
works.

\section{The results}
\label{sec-results}

\newcommand{\croot}{\mathbf{r}}
\newcommand{\cfirst}{\mathbf{s}}
\newcommand{\commit}{\mathbf{c}}

\newcommand{\CharA}[3]{\mathbf{H}[#1,#2]_{#3}}
\newcommand{\CharB}[2]{\mathbf{H}[#1,#2]}
\newcommand{\CharC}[1]{\mathbf{H}[#1]}

We assume familiarity with the notions introduced earlier in Section~\ref{sec-intro}.

\subsection{The existence of an optimal defender's strategy}

We start by proving that there exists an optimal strategy for 
the defender. This is a generalization of similar results 
recently achieved in \cite{ABRBK} for a special
type of patrolling games where all nodes share the same attack length
(i.e., $\supp(S)$ is a singleton). The proof technique is completely different.

\begin{theorem}
\label{thm:optimal}
  For every patrolling problem $\game = (U,T,\su,E,d)$,  there exists an optimal defender's  strategy.
\end{theorem}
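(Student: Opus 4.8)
The plan is to obtain an optimal strategy as a limit point of a value‑maximizing sequence of strategies, exploiting that the strategy space is compact and that $\sigma \mapsto \val(\sigma)$ is upper semicontinuous. (This argument avoids the flawed ``$\val = \val^j$ for some finite memory bound $j$'' route mentioned above; indeed, $\val$ is \emph{not} continuous, only upper semicontinuous, which is the reason that route fails but this one does not.)

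First I would topologize the strategy space. A defender's strategy is a function $\sigma : \histories \to \Delta(U)$ consistent with $E$, so $\strategyset$ is naturally a subset of the product $\prod_{h \in \histories} \Delta(U)$. Since $U$ is finite, $\histories$ is countable and each factor $\Delta(U)$ is a compact metric space; hence the product, with the metric $\varrho(\sigma,\sigma') = \sum_{h \in \histories} 2^{-|h|}\,\| \sigma(h) - \sigma'(h) \|$, is a compact metric space, and $\varrho$‑convergence is exactly pointwise (per‑history) convergence. The consistency requirement ``$\supp(\sigma(h))$ is contained in the set of successors of the last node of $h$'' is a closed condition on each factor, so $\strategyset$ is a closed, hence compact, subset; it is nonempty because every node has at least one outgoing edge. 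Consequently every sequence of strategies has a convergent subsequence.

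Next I would show $\sigma \mapsto \val(\sigma)$ is upper semicontinuous on $\strategyset$. Fix an attacker strategy $\pi$ and let $D_\pi$ be the set of minimal histories $h$ with $\pi(h) \neq \bot$; because the attacker attacks at most once, $D_\pi$ is an antichain of pairwise incomparable histories, so each run has at most one prefix in $D_\pi$. Unfolding the definition of $\Defended[\pi]$ yields
\[
   \calP^{\sigma}(\Defended[\pi]) \;=\; 1 \,-\, \sum_{h \in D_\pi} \calP^{\sigma}(\mathrm{reach}\ h)\cdot\bigl(1 - q^{\sigma}(h,\pi(h))\bigr),
\]
where $q^{\sigma}(h,v)$ is the probability, under $\sigma$, of visiting $v$ within $d(v)$ transitions after $h$. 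For a fixed history $h$ (resp.\ fixed $h$ and target $v$), the quantity $\calP^{\sigma}(\mathrm{reach}\ h)$ (resp.\ $q^{\sigma}(h,v)$) is a finite sum of finite products of coordinates $\sigma(h')(u)$ of $\sigma$, hence a continuous function of $\sigma$; so each summand above is a nonnegative continuous function of $\sigma$. A sum of nonnegative continuous functions is the pointwise supremum of its finite partial sums, hence lower semicontinuous, so $\sigma \mapsto \calP^{\sigma}(\Defended[\pi])$ is upper semicontinuous. Since a pointwise infimum of an arbitrary family of upper semicontinuous functions is again upper semicontinuous, $\val(\sigma) = \inf_{\pi} \calP^{\sigma}(\Defended[\pi])$ is upper semicontinuous on $\strategyset$.

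Finally I would conclude: pick strategies $\sigma_n$ with $\val(\sigma_n) \to \val$, pass to a subsequence $\sigma_{n_k} \to \sigma^{*} \in \strategyset$, and apply upper semicontinuity to get $\val(\sigma^{*}) \geq \limsup_{k} \val(\sigma_{n_k}) = \val$; the reverse inequality $\val(\sigma^{*}) \leq \val$ is immediate from the definition of the Stackelberg value, so $\val(\sigma^{*}) = \val$ and $\sigma^{*}$ is optimal. I expect the one genuinely delicate step to be the upper‑semicontinuity claim: it requires identifying the correct finite‑dependence decomposition of $\calP^{\sigma}(\Defended[\pi])$ (the displayed formula in terms of the antichain $D_\pi$ and the one‑shot protection probabilities $q^{\sigma}$) so that the per‑attacker payoff visibly splits into nonnegative continuous pieces — in particular one must resist any naive interchange of the infinite sum/limit with the attacker's infimum, which does not hold.
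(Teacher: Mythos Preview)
Your argument is correct and is essentially the paper's proof recast in topological language: the paper too takes a value-maximizing sequence, extracts a pointwise limit (by an explicit diagonalization over histories rather than by invoking product compactness), and then shows the limit is optimal by a direct $\varepsilon$--$\delta$ contradiction that truncates the attacker to his first $i$ moves and uses that $\calP^\sigma(\Defended[\pi_i])$ depends on only finitely many coordinates of~$\sigma$ --- which is precisely your upper-semicontinuity claim unpacked by hand. One cosmetic slip: your metric $\varrho$ need not converge, since there can be up to $|U|^{n}$ histories of length~$n$; replace $2^{-|h|}$ by $2^{-k(h)}$ for some enumeration $k:\histories\to\Nset$, or simply work with the product topology directly.
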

\begin{proof}[Proof Sketch]
We construct an optimal strategy $\sigma^*$ as a point-wise limit of a sequence $\sigma^1,\sigma^2,\ldots$ of strategies where each $\sigma^k$ is $1/k$-optimal. More precisely, we select $\sigma^1,\sigma^2,\ldots$ in such a way that for each history $h$, the sequence of distributions  $\sigma^1(h),\sigma^2(h),\ldots$ converges to a probability distribution, and we define $\sigma^*(h)$ to be its limit (we obtain $\sigma^1,\sigma^2,\ldots$ by starting with an arbitrary sequence of $1/k$-optimal strategies and successively filtering subsequences that are convergent on individual histories). It is relatively straightforward to show that if $\val(\sigma^*)\leq\val-\delta$ for some $\delta>0$, then for all $k$'s large enough we have $\val(\sigma^k)\leq \val-\delta/2$, which contradicts the fact that each $\sigma^k$ is $1/k$-optimal. For details see Appendix~\ref{app-optimal}.
\end{proof}

\noindent

\subsection{Computing finite-memory $\varepsilon$-optimal strategies}

In this subsection we describe a generic algorithm which for a given patrolling problem computes a finite representation of an $\varepsilon$-optimal strategy. Let us start with the definition of a finite-memory strategy.

\begin{definition}
A {\em finite-memory defender's strategy} is a tuple $(M,N,m_0,\xi)$ where $M$ is a finite set of memory elements, $N:M\times U\rightarrow M$ assigns to every memory element $m\in M$ and a node $u\in U$ a next memory element $N(m,u)$, $m_0$ is an initial memory element, and $\xi:M\times U\rightarrow \dist(U)$ is a function which to every memory element $m\in M$ and a node $u\in U$ assigns a distribution $\xi(m,u)$ on $U$ such that $\supp(\xi(m,u))\subseteq \suc(u)$.

A finite-memory defender's strategy $(M,N,m_0,\xi)$ induces a defender's
strategy $\sigma$ as follows: We extend $N$ to an "empty" history $\varepsilon$ by $N(m_0,\varepsilon)=m_0$, and to all histories $hv\in \histories$,
here $v\in U$, inductively by $N(m_0,hv)=N(N(m_0,h),v)$. Then for $hu\in \histories$ (where $u\in U$) we have that $\sigma(hu)=\xi(N(m_0,h),u)$. 
\end{definition}
\noindent
\begin{theorem}\label{thm:eps-opt}
Let $\varepsilon>0$ and assume that $\su\in T$. There is an $\varepsilon$-optimal finite-memory defender's strategy computable in time 
\[
\left(\frac{\dmax\cdot |U|}{\varepsilon}\right)^{\calO(\dmax^2\cdot |U|^2)}.
\]
\end{theorem}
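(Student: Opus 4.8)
The plan is to combine two reductions. First, I would pass from arbitrary defender's strategies to strategies whose behaviour over the next $\dmax$ steps (after any history) is captured by a bounded-dimensional "look-ahead vector." Concretely, for a history $h$ ending in a node $u$, associate the vector $\rho(h)$ whose components are, for each node $v$ and each $k\le\dmax$, the probability that $\sigma$ visits $v$ within $k$ transitions after $h$; this has $\calO(\dmax\cdot|U|)$ components, each in $[0,1]$. The key observation to be proved (using Theorem~\ref{thm:optimal} together with the fact, noted in Comments on~\contrref{B}, that $\val$ is insensitive to the choice of initial target, so small perturbations of the distributions only perturb $\val(\sigma)$ slightly) is a Lipschitz-type bound: if two strategies agree up to a change of $\le\delta$ in each one-step distribution along a history, their values differ by $\le\delta\cdot\mathrm{poly}(\dmax,|U|)$. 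Choosing $\delta$ of order $\varepsilon/\mathrm{poly}(\dmax,|U|)$ and discretizing all one-step distributions to multiples of $\delta$, I obtain that some $\varepsilon$-optimal strategy uses, after every history, only distributions drawn from a finite grid of size $(\dmax|U|/\varepsilon)^{\calO(\dmax|U|)}$, and correspondingly only finitely many look-ahead vectors $\rho$, at most $N:=(\dmax|U|/\varepsilon)^{\calO(\dmax^2|U|^2)}$ of them.

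Second, I would set up a finite-memory controller whose memory elements are exactly (a sufficient finite subset of) these discretized look-ahead vectors, i.e., $M$ is the set of "reachable" $\rho$-vectors and $|M|\le N$. A memory element $\rho$, together with the current node $u$, prescribes the next one-step distribution $\xi(\rho,u)$ — namely the $k=1$ slice of $\rho$ restricted to $\suc(u)$ — and the memory update $N(\rho,u)$ is the unique discretized vector consistent with having taken one step under $\xi(\rho,u)$ from the $\rho$ prescribed behaviour; this is where one must check that a valid "successor vector" can always be chosen so that the whole family is internally consistent (a plan for the next $\dmax$ steps can always be prolonged by one step, as sketched in the commented-out safety-game remark). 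The existence of an $\varepsilon$-optimal strategy living inside this finite transition system is then a reachability/safety question: build the finite stochastic game in which the defender picks a memory element at the start and the transitions follow $N$, the attacker at each history may launch an attack (leading to an absorbing $\suc$/$\fail$ pair with the appropriate attack-success probability read off from the $\rho$-vector over the next $d(v)$ steps), and ask whether the defender can guarantee probability $\ge\val-\varepsilon$ of avoiding $\suc$. This is a finite turn-based (or one-shot-then-deterministic) game solvable in time polynomial in its size, which is $N$ up to polynomial factors, giving the stated bound; extracting the $\varepsilon$-optimal finite-memory strategy from the winning region is then routine.

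The main obstacle — as the authors themselves flag ("this is perhaps the most difficult part of the argument") — is the third step of the second reduction: proving that restricting to strategies whose $\dmax$-step behaviour always lies in the finite set of discretized vectors does not lose more than $\varepsilon$ in value, and that the resulting existence question is decidable within the claimed time. The subtlety is that an $\varepsilon$-optimal strategy need not have finite memory at all (indeed Comments on~\contrref{A} stress it may lie outside every $\Sigma^j$), so one cannot simply "round" a known good finite-memory strategy; instead one must round the optimal strategy from Theorem~\ref{thm:optimal} history-by-history while controlling the accumulated error, and then argue that among all strategies consistent with the rounded look-ahead vectors the best one is itself realizable by the finite controller above. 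Handling the interaction between the rounding error and the attacker's freedom to attack at an arbitrary (unbounded) time — so that the value is an infimum over infinitely many attacker strategies — is the delicate point, and I would address it by noting that the attacker's best response at any history depends only on the next $\dmax$ steps of the defender's play, hence only on the current look-ahead vector, which reduces the infimum over attacker strategies to a finite minimum at each memory element and makes the finite-game analysis go through.
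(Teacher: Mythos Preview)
Your plan is essentially the paper's proof: your look-ahead vectors are precisely the paper's \emph{characteristics} $c=(\croot,\cfirst,\commit)$, your Lipschitz-and-round step is their Proposition~\ref{cor:eps-opt-discr} (which rests on Proposition~\ref{prop:optimal}, the ``$\val$ is the same from every target'' fact you invoke), and your finite controller with memory $=$ discretized vectors is their finite-memory strategy built from a \emph{closed} set of characteristics, with your final observation---that the attacker's best response depends only on the current vector---being exactly why $\val(\sigma)\ge\min_{c\in C}\val(c)$.

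One slip to fix: the successor $N(\rho,u)$ is \emph{not} unique. The consistency constraint $\rho(k,v)=\rho(1,v)+\sum_{w\neq v}\rho(1,w)\cdot\rho'_w(k-1,v)$ couples the whole tuple $(\rho'_w)_{w\in U}$, so many decompositions are possible, and in particular the discretized set need not contain a canonical one. The paper therefore lets the defender \emph{choose} the successor tuple and searches, by a greatest-fixpoint iteration, for a closed subset $C$ of the discretized characteristics (every $c\in C$ has some successor tuple in $C^U$, and some $c$ has $c_{\croot}=\su$) maximizing $\min_{c\in C}\val(c)$; your safety-game formulation is equivalent once you make the successor choice an explicit defender action, and the running time matches because checking all successor tuples costs $|\Char_\varepsilon|^{\calO(|U|)}$ either way.
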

\noindent
We construct our strategy using the so-called \emph{characteristics} (some intuition is given below). 
\begin{definition}
  A {\em characteristic} $c$ is a triple $(\croot,\cfirst,\commit)$ where $\croot\in U$, $\cfirst$ is a probability distribution on $U$, and $\commit:\{2,\ldots,\dmax\}\times
  T\rightarrow [0,1]$. Denote by $\Char$ the
  set of all characteristics.  Given $c=(\croot,\cfirst,\commit)\in \Char$, we denote by $\val(c)$
  the value $\min_{u\in T} \commit(d(u),u)$ of $c$.
\end{definition}
\noindent
Given a characteristic $c$, we use $c_{\croot}, c_{\cfirst}, c_{\commit}$ to denote the three components of $c=(\croot,\cfirst,\commit)$, respectively.

Intuitively, we interpret a given characteristic $c$ as a "local" plan of defence for next $\dmax$ steps where
\begin{itemize}
\item $c_{\croot}$ is the current node, 
\item $c_{\cfirst}$ is the current assignment of probabilities to the successors
  of $c_{\croot}$, and
\item for every $2\leq k\leq \dmax$ and every $u\in T$, we interpret $c_{\commit}(k,u)$ as the probability of visiting $u$ in at least one, and at most $k$ steps from
  $c_{\croot}$.~\footnote{Note that many characteristics are not
   ``consistent'' (if e.g. $c_{\cfirst}(u)=1/2$ and $c_{\commit}(1,u)=1/4$). But later we make
    sure that only consistent characteristics are used.}
\end{itemize}
To simplify our notation, we denote by $c_{\commit}(1,u)$ the probability $c_{\cfirst}(u)$ for every $u\in T$.

Now assume that the current plan is formalized by a characteristic $c$, and suppose that the defender makes one step to a next vertex $v$ chosen randomly with probability $c_{\cfirst}(v)$. Now the defender declares a new plan, $c^v\in \Char$ where $c^v_{\croot}=v$. However, the crucial observation is that the new plans $(c^v)_{v\in U}$ must be consistent with the original plan $c$ in the following sense for all $2\leq k\leq \dmax$ and all $u\in T$ :
\[
c_{\commit}(k,u) =	c_{\cfirst}(u)+\sum_{v\not = u} c_{\cfirst}(v)\cdot c^v_{\commit}(k-1,u) 
\]
We say that such a~vector $(c^v)_{v\in U}\in \Char^U$ of characteristics is a {\em successor} of $c$.

Now let $C$ be a finite set of characteristics such that every $c\in C$ has
a successor $(c^v)_{v\in U}\in C^U$ (i.e., $c^v\in C$ for all $v\in U$), and
there is at least one $\hat{c}\in C$ such that $\hat{c}_{\croot}=\su$. We say that such $C$ is {\em closed}. We construct a~finite-memory strategy $(M,N,m_0,\xi)$ where $M=C$, $N(c,v)=c^v$, $m_0=\hat{c}$, and $\xi(c)=c_{\cfirst}$. Intuitively, the strategy follows the plans in $C$ and always proceeds to the next plan according to a fixed successor in $C^U$. We prove that this strategy works consistently with the characteristics of $C$, i.e., whenever the current history is $h$ and the current memory element is $c$, then, subsequently, the probability of reaching $u$ in at least one, and at most $k$ steps is equal to $c_{\cfirst}(k,u)$. Thus the value of the finite-memory strategy cannot be worse than $\min_{c\in C} \val(c)$.

So, the computation of a finite-memory strategy reduces to a computation of a
finite closed set of characteristics. We show that one such set can be
extracted from a carefully selected $\varepsilon$-optimal strategy. Given a
defender's strategy $\sigma$, we denote by $\histories(\sigma)$ the set of
all histories that $\sigma$ may follow with a positive probability. Given a
strategy $\sigma$ and a history $h\in \histories(\sigma)$, we define a
characteristic $c[\sigma,h]$ such that $c[h]_{\croot}$ is the last node of
$h$, $c[h]_{\cfirst}=\sigma(h)$, and each $c[h]_{\commit}(k,u)$ is the
probability of reaching $u$ in at least one, and at most $k$ steps starting
with the history $h$ using $\sigma$. Now let $\sigma^*$ be an optimal
strategy. The crucial observation  (see also Proposition~\ref{cor:eps-opt-discr} in
Appendix~\ref{app-eps-opt}) is that for every $h\in \histories(\sigma)$ it
holds that $\val(c[\sigma^*,h])\geq \val$. By appropriately rounding probabilities in $\sigma^*$, we obtain an $\varepsilon$-optimal strategy $\sigma^{\varepsilon}$ such that for every history $h$ and every $u\in U$ :
\[
\sigma_{\varepsilon}(h)(u)=k\cdot \lceil \dmax\cdot|U|/\varepsilon\rceil^{-1}\text{ for a suitable }k\in \Nset 
\]
and $c[\sigma^{\varepsilon},h]\geq \val-\varepsilon$ for all $h\in \histories(\sigma^{\varepsilon})$.

Now it is rather straightforward to show that for each $h$, the vector $(c[hv])_{v\in U}$ is a successor of $c[h]$. Thus the set $\Char[\sigma^{\varepsilon}]$ of all $c[h]$, here $h\in \histories$, is a closed set. It is also finite, of size that is bounded by $\left(\dmax\cdot |U|/\varepsilon\right)^{\calO(\dmax^2\cdot |U|)}$, and every $c\in \Char[\sigma^{\varepsilon}]$ satisfies $\val(c)\geq \val-\varepsilon$. This shows that there always exists a $\varepsilon$-optimal finite-memory strategy of the size bounded by $\left(\dmax\cdot |U| /\varepsilon\right)^{\calO(\dmax^2\cdot |U|)}$. 

Our algorithm computes a closed subset $C$ of a (finite) set of appropriately rounded characteristics that maximizes $\min_{c\in C}\val(c)$. This is done by a simple iterative procedure which maintains a growing pool of characteristics (in order of decreasing value) and tries to find its closed subset. For details see Appendix~\ref{app-eps-opt}.

\subsection{A bound on the Stackelberg value}

Now we establish an upper bound on $\val$ which depends only
on the attack signature $\sig$ of $\game$. The simplicity of the argument 
is due to Proposition~\ref{prop-abafy}.       
\begin{theorem}
\label{thm-upper}
  For every patrolling problem $\game = (U,T,\su,E,d)$ 
  such that $T = U$,
  we have that \mbox{$\val \leq  \left(\sum_{k \in \supp(S)} \frac{S(k)}{k}\right)^{-1}$}
  where $S$ is the attack signature of $\game$.
\end{theorem}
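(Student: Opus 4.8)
The plan is to show that $\val(\sigma)\le\beta$ for \emph{every} defender's strategy $\sigma$, where $\beta=\bigl(\sum_{k\in\supp(S)}S(k)/k\bigr)^{-1}$; since $\val=\sup_\sigma\val(\sigma)$ this is the theorem, and the environment $E$ will play no role. Fix $\sigma$ and write a run as $w=w_0w_1w_2\cdots$ with $w_0=\su$. For $j\ge 1$ and $u\in T$ put $q_j(u)=\calP^{\sigma}(w_j=u)$; because the defender occupies exactly one node at every time step and $T=U$, we have $\sum_{u\in T}q_j(u)=1$ for each $j\ge 1$. For a target $u$ and a time $m\ge 0$, let $\pi_{m,u}$ be the attacker's strategy that attacks $u$ at every history of length $m+1$ and waits elsewhere; this is a legal attacker's strategy, and directly from the definition of a defended attack together with a union bound,
\[
\calP^{\sigma}(\Defended[\pi_{m,u}])\ =\ \calP^{\sigma}\bigl(u\in\{w_{m+1},\dots,w_{m+d(u)}\}\bigr)\ \le\ \sum_{j=m+1}^{m+d(u)}q_j(u).
\]
Since $\val(\sigma)=\inf_\pi\calP^\sigma(\Defended[\pi])$, this gives $\val(\sigma)\le\sum_{j=m+1}^{m+d(u)}q_j(u)$ for all $m\ge 0$ and $u\in T$.

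The core of the argument is then an amortization over a long horizon, and this is exactly where Proposition~\ref{prop-abafy} is applied (to the sequence of distributions $q_j$). Suppose for contradiction that $\val(\sigma)\ge\beta+\varepsilon$ for some $\varepsilon>0$, and fix a large $L\in\Nset$. For each fixed $u\in T$, summing the inequality above over $m=0,1,\dots,L-1$ and reordering the double sum yields
\[
L(\beta+\varepsilon)\ \le\ \sum_{m=0}^{L-1}\sum_{j=m+1}^{m+d(u)}q_j(u)\ \le\ d(u)\cdot\sum_{j=1}^{L+\dmax}q_j(u),
\]
since each index $j$ is counted by at most $d(u)$ of the values $m$ and only indices $j\le L+\dmax$ occur. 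Hence $\sum_{j=1}^{L+\dmax}q_j(u)\ge L(\beta+\varepsilon)/d(u)$. Summing this over $u\in T$, the left-hand side collapses to $\sum_{j=1}^{L+\dmax}\sum_{u\in T}q_j(u)=L+\dmax$, while the right-hand side equals $L(\beta+\varepsilon)\sum_{u\in T}1/d(u)=L(\beta+\varepsilon)\sum_{k\in\supp(S)}S(k)/k=L(\beta+\varepsilon)/\beta$. Therefore $L+\dmax\ge L+L\varepsilon/\beta$, i.e.\ $L\le\dmax\,\beta/\varepsilon$, which fails as soon as $L>\dmax\,\beta/\varepsilon$. This contradiction shows $\val(\sigma)<\beta+\varepsilon$ for every $\varepsilon>0$, so $\val(\sigma)\le\beta$; taking the supremum over $\sigma$ finishes the proof.

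I expect no serious obstacle here; the only point requiring care — and the reason the combinatorial step is isolated as Proposition~\ref{prop-abafy} — is that one cannot hope to derive a contradiction directly from $\val(\sigma)\ge\beta$: the boundary slack $\dmax$ is genuinely present, so one must argue with a strict gap $\varepsilon>0$ and let the horizon $L$ grow before the $\dmax$ term becomes negligible. The hypothesis $T=U$ enters exactly once, but decisively, through the identity $\sum_{u\in T}q_j(u)=1$, which is what turns the final double sum into precisely $L+\dmax$.
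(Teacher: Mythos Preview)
Your argument is correct, but your commentary about it is not. You write that the amortization step ``is exactly where Proposition~\ref{prop-abafy} is applied,'' yet nowhere in your proof do you use that proposition: you work with an \emph{arbitrary} strategy $\sigma$ and only ever invoke the trivial bound $\val(\sigma)\le\calP^\sigma(\Defended[\pi_{m,u}])$. Proposition~\ref{prop-abafy} is a statement about \emph{optimal} strategies (that $\sigma_h$ remains optimal after every reachable history), and your proof never touches optimality. This is in fact a virtue --- your route is more elementary than the paper's, which does restrict to an optimal $\sigma$ and invokes Proposition~\ref{prop-abafy} to get $\sum_{j=i}^{i+d(u)-1}\mu_{\su,j}(u)\ge\val$ at every offset~$i$.

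The second point of difference is your $\varepsilon$-slack limiting argument versus the paper's exact computation. The paper takes $\ell=\prod_{k\in\supp(S)}k$, partitions $\{1,\dots,\ell\}$ into $\ell/d(u)$ \emph{disjoint} windows of length $d(u)$ for each $u$, sums to get $\sum_{j=1}^{\ell}q_j(u)\ge\val\cdot\ell/d(u)$ with no boundary loss, and concludes $\ell\ge\val\cdot\ell\sum_u 1/d(u)$ directly. Your remark that ``one cannot hope to derive a contradiction directly from $\val(\sigma)\ge\beta$'' is therefore not quite right: the $\dmax$ slack is an artifact of summing over \emph{all} $m=0,\dots,L-1$ (overlapping windows) rather than over the $\ell/d(u)$ disjoint ones. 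Your inequality $\val(\sigma)\le\sum_{j=m+1}^{m+d(u)}q_j(u)$, which you already have for every $m$, is all that is needed to run the paper's cleaner endgame verbatim, with no $\varepsilon$ and no limit.
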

\begin{proof}[Proof Sketch] Intuitivelly, every node $u$ has to be visited by the
  defender with propability at least $\val$ during each $d(u)$ consecutive
  steps. Hence, summing the probabilities of visiting $u$ in each of the
  steps from $1$ to $\ell = \Pi_{k \in \supp(S)}\, k$ we need to reach a
  value greater than or equal to $\val \cdot\ell/d(u)$. Summing these values
  for all nodes we have at least $\sum_{u\in U}\val \cdot \ell/d(u)$. Note
  that in each step we visit some node with probability one and so,
  the sum for all nodes and $\ell$ steps is just $\ell$. This implies the
  theorem due to $\ell\geq\sum_{u\in U}\val\cdot\ell/d(u)=\ell \cdot \val
  \cdot \sum_{k \in \supp(S)} S(k)/k$. For more details see
  Appendix~\ref{app-val-bound}.
\end{proof}

\subsection{Solving patrolling problems with a fully connected environment}
\label{sec-modular-strategies}

Let $\game = (U,T,\su,E, d)$ be a patrolling problem where
$T = U$ and $E = U \times U$, and let $S$ be the signature of $\game$. Recall the notion of
modular strategy and the associated decomposition principle introduced in
Section~\ref{sec-intro}. In particular, recall that a $d$-modular strategy
$\sigma$ for $\game$ is fully represented by probability distributions 
$\mu_0,\ldots,\mu_{d-1}$ over $U$ such that $\sigma(h) = \mu_i$ where 
$i = |h|~\mathrm{mod}~d$. 

We start by considering the case when $\game$ has $n$ nodes with the same
attack length~$d$. Since we aim at developing a strategy synthesis 
algorithm \emph{polynomial in $\size{S}$}, we need to invent a compact representation of modular strategies which is sufficiently
expressive for our purposes. 
We assume that the nodes of $U$ are 
indexed by numbers from $1$ to $|U|$, and we use $\U{i,N}$ to denote 
the subset of $U$ consisting of $N$ subsequent nodes starting from $i$,
i.e., all $u_{\ell}$ where $i \leq \ell < i+N$ and $1 \leq i \leq i+N-1 \leq |U|$.
Let us consider the
class of expressions determined by the following abstract syntax equation:
\[
\theta ~~::=~~   \Circle(\U{i,N},M,L)  ~~\mid~~ \theta_1;\theta_2 ~~\mid~~ \nu_p[\theta_1,\theta_2]
\]
Here, $M,L \in \Nset$ such that $M$ divides $N$, and $p$ ranges over a countable set of variables $\Var$. Assuming some valuation $\alpha: \Var \rightarrow [0,1]$, every expression $\theta$ determines a modular strategy for $U$ defined inductively as follows: 
$\Circle(\U{i,N},M,L)$ is a modular strategy which splits
$\U{i,N}$ into pairwise disjoint subsets of size~$M$
and then ``walks around'' these sets $L$ times,
$\theta_1;\theta_2$ is a modular strategy which ``sequentially alternates'' between $\theta_1$ and $\theta_2$, and $\nu_p[\theta_1,\theta_2]$ is a strategy
which ``composes'' $\theta_1$ and $\theta_2$
using the distribution $(1-\alpha(p),\alpha(p))$.
A detailed description of the semantics is given in 
Appendix~\ref{app-algorithm}.

Our strategy synthesis algorithm is a recursive procedure
$\Def$ which inputs a triple $(\U{i,N},D,e)$, where $\U{i,N}$ is the set of nodes to be defended, 
$D$ is the number of steps available for defending $\U{i,N}$, and $e$ is an expression
which represents the ``weight'' of the constructed defending strategy
in the final distribution $\nu$. The procedure outputs a pair $(\theta,V)$ where
$\theta$ is an expression specifying a $D$-modular strategy for $\U{i,N}$, and $V$ is 
an arithmetic expression representing the guaranteed ``coverage'' of the targets in $\U{i,N}$ when using $\theta$ with 
the weight~$e$. 
As a side effect, the function $\Def$ may produce  
equations for the variables that are employed in
symbolic strategy compositions of the form $\nu_p[\theta_1,\theta_2]$. The algorithm is invoked by 
$\Def(\U{1,|U|},d,1)$, and the system of equations is initially empty. The recursion is stopped when $D$ divides
$N$ or $N$ divides $D$, and in these cases $\Def$ provably
produces strategies that achieve the best coverage
for every value of~$e$. In the other cases, $\Def$ proceeds
recursively by splitting either the set of
nodes or the number of steps available to protect the nodes. In both cases, $\Def$ tries to exploit
the available resources in the best possible way.
A full description is given in Appendix~\ref{app-algorithm}.  
At the very end, we obtain a $d$-modular strategy $\sigma$ for $\game$
specified by an expression $\theta$ whose size is polynomial in $\size{S}$,
an expression $V$ which represents $\val(\sigma)$, and we also obtain a system of polynomial equations for the variables 
which parameterize $\theta$ and $V$. The system has a unique solution in $[0,1]^k$ (where $k$ is the number of variables) that
corresponds to the intended valuation. The size of $k$ can be, for given $n > d$, computed as follows: we put $n_0 = n$ and $d_0 =d$,
and then $n_{i+1} = n_i~\mathrm{mod}~d_i$ and $d_{i+1} = d_i~\mathrm{mod}~n_{i+1}$. The number
of variables for $n$ and $d$ is equal to the least index $j$ such that $d_j$ divides $n_j$.
In particular, if $d$ divides $n$, there is no variable at all, and our algorithm immediately
produces a strategy which achieves the value $d/n$, which is optimal by Theorem~\ref{thm-upper}.
As an example of a ``hard'' instance, consider $n = 709793170386861531$ and $d = 37248973638339152$, which requires $30$
variables and equations. The solution (producing $\val(\sigma) = 0.05247471678$) can be computed by Maple
in fractions of a second.
It has been recently proved by Lamser \cite{Lamser:BCthesis} that our algorithm produces na optimal
strategy also when $d=2$ (for arbitrary $n$), which includes the example of 
Fig.~\ref{fig-example}~(left). Since the algorithm seems to exploit
the available resources optimally, we conjecture that it actually outputs an optimal strategy for all parameters. 

To solve a patrolling problem with a general signature $S$, we simply split the nodes into
disjoint subsets according to their attack lengths, solve these subproblems by the above algorithm,
and then compose the modular strategies so that all nodes are defended equally well. One can easily check
that if $S$ is well formed, this leads to a strategy whose value matches the bound of Theorem~\ref{thm-upper}.
Thus, we obtain the following:
 
\begin{theorem}
\label{thm:well-formed-optimal}
  Let $\game$ be a patrolling problem with $T=U$, a fully connected environment, and a well formed
  signature~$S$. Then there is an optimal modular strategy $\sigma$ computable in time polynomial
  in $\size{S}$. 
\end{theorem}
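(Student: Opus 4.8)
The plan is to combine the two ingredients that have already been put in place: the upper bound on the Stackelberg value from Theorem~\ref{thm-upper}, which for a signature $S$ gives $\val \leq \bigl(\sum_{k \in \supp(S)} S(k)/k\bigr)^{-1}$, and the explicit decomposition algorithm $\Def$ together with the compositionality principle for modular strategies. It suffices to produce, in time polynomial in $\size{S}$, a modular strategy $\sigma$ whose value \emph{matches} this bound; optimality then follows immediately from Theorem~\ref{thm-upper}.

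First I would treat the single-attack-length case, i.e.\ $\game$ has $n$ nodes all with attack length $d$. Since $S$ is well formed, $d$ divides $n$. In this case the recursion in $\Def$ stops at the top level: we partition the $n$ nodes into $d$ disjoint blocks $U_0,\ldots,U_{d-1}$, each of size $n/d$, and let $\sigma$ be the $d$-modular strategy that at a history $h$ chooses uniformly among the nodes of $U_i$ with $i = |h| \bmod d$. A direct calculation shows that any attack on a target $u$, launched just after the defender leaves some node, is discovered with probability exactly $d \cdot (d/n) \cdot (1/d)$? — more carefully, over the next $d$ steps the defender visits the block containing $u$ exactly once, and within that visit picks $u$ with probability $d/n$, so $\val(\sigma) = d/n$. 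This equals $(S(d)/d)^{-1}$, so by Theorem~\ref{thm-upper} $\sigma$ is optimal for this subproblem, and it is clearly constructible (as a symbolic expression $\Circle(\U{1,n},n/d,1)$) in time polynomial in $\size{S}$, since no auxiliary variables or equations are generated.

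Next I would handle a general well-formed signature $S$ by splitting $U$ according to attack lengths: write $\supp(S) = \{k_1,\ldots,k_m\}$ and let $\game_j = \game[U_j]$ be the subproblem on the $S(k_j)$ nodes with attack length $k_j$. By the previous paragraph each $\game_j$ has an optimal modular strategy $\sigma_j$ with $\val(\sigma_j) = k_j/S(k_j)$, computable in time polynomial in $\size{S}$. Now form the $\nu$-composition $\sigma = \sum_j \nu_j \sigma_j$ of these strategies. The key point is to choose $\nu$ so that all targets are protected equally well: a target in $U_j$ is attacked successfully against $\sigma$ with probability $1 - \nu_j\,\val(\sigma_j)$ (the defender ``runs'' $\sigma_j$ only a $\nu_j$ fraction of the time, and within that schedule $\sigma_j$ catches the attack with its own value), so we set $\nu_j$ so that $\nu_j\,\val(\sigma_j)$ is the same quantity $v$ for all $j$. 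Since $\sum_j \nu_j = 1$ this forces $v = \bigl(\sum_j \val(\sigma_j)^{-1}\bigr)^{-1} = \bigl(\sum_{k\in\supp(S)} S(k)/k\bigr)^{-1}$, with each $\nu_j = v \cdot k_j^{-1} S(k_j) / 1$ wait — $\nu_j = v/\val(\sigma_j) = v\cdot S(k_j)/k_j$, which is a rational number computable in time polynomial in $\size{S}$. Thus $\val(\sigma) = v$ matches the bound of Theorem~\ref{thm-upper}, and $\sigma$ is optimal. Note that because $E = U\times U$, the composition is a legitimate defender's strategy (all required transitions are present), and $\sigma$ is represented by an expression of the form $\nu_{p_1}[\,\cdots,\nu_{p_{m-1}}[\theta_{m-1},\theta_m]\,]$ with concrete rational values substituted for the $p_i$'s, of size polynomial in $\size{S}$.

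The only subtlety I anticipate — and the place where one must be careful rather than routine — is justifying the two value computations exactly: that the single-length uniform block strategy really achieves $d/n$ against an \emph{adversarial} attacker who sees the defender's moves (this uses that the strategy is modular, so the attacker's observation of the current position gives no advantage beyond knowing $|h|\bmod d$, which is already public), and that the $\nu$-composition value is exactly $\min_j \nu_j\val(\sigma_j)$ rather than merely $\geq$ (equality holds here precisely because the $\nu_j$ were tuned to equalize, so the attacker's best response against each component-target yields the same value). Both of these are clean consequences of the semantics of modular strategies and the composition lemma already sketched in Section~\ref{sec-intro}; the formal bookkeeping is deferred to Appendix~\ref{app-algorithm}.
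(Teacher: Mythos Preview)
Your proposal is correct and follows essentially the same route as the paper: solve each single-attack-length subproblem via the uniform $d$-modular ``circle'' strategy (giving value $d/n$ when $d\mid n$), then take a $\nu$-composition with weights $\nu_j = v\cdot S(k_j)/k_j$ so that every target is covered with probability exactly $v = \bigl(\sum_k S(k)/k\bigr)^{-1}$, matching the upper bound of Theorem~\ref{thm-upper}. The momentary self-corrections in your write-up (the $d\cdot(d/n)\cdot(1/d)$ aside and the $\nu_j$ computation) land on the right answers, so apart from tidying the exposition there is nothing substantive to add.
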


\subsection{A characterization of sufficiently connected environments}
\label{sec-well-formed}

\noindent
For the rest of this subsection, we fix a patrolling problem
$\game = (U,T,\su,E,d)$ with $T =U$ and a well-formed signature~$S$.
We classify the conditions under which
$E$ is sufficiently connected (recall that
$E$ is sufficiently connected iff the value for $\game$  is the same as the value for $\game$ when $E$ is
replaced with the fully connected environment
$U \times U$. Let $\ch_S$ be a digraph with vertex labelling $d$ constructed as follows:
\begin{itemize}
\item For all $k \in \supp(S)$, $i \in \{0,\ldots,k{-}1\}$, and 
   $j \in \{1,\ldots,S(k)/k\}$, we add a fresh vertex $v_k[i,j]$
   and set $d(v_k[i,j]):=k$. Hence,
   $\ch_S$ has exactly $\sum_{k \in \supp(S)} S(k)$ vertices.
\item For every pair of vertices $v_k[i,j]$ and $v_{k'}[i',j']$,
   there is an arc from  $v_k[i,j]$ to $v_{k'}[i',j']$ in $\ch_S$ iff
   there is some $0 \leq \ell < k\cdot k'$ such that 
   $i = \ell \,\mathit{mod}\, k$ and 
   $i' = (\ell {+} 1) \,\mathit{mod}\, k'$. 
\end{itemize}
Note that $\ch_S$ is computable in polynomial time. We prove the following: 

\begin{theorem}
\label{thm-subdigraph} 
  Let $\game = (U,T,\su,E,d)$ be a patrolling problem 
  such that $T=U$ and the signature~$S$ of $\game$ is
  well formed. Then $E$ is sufficiently connected iff
  $(U,E)$ contains a subdigraph $H$ which is $d$-preserving isomorphic to~$\ch_s$
  (i.e., if $x$ of $H$ is mapped to $y$ of $\ch_S$ then $d(x)=d(y)$).
\end{theorem}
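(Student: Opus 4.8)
The plan is to prove the two implications separately; the forward direction (existence of a $d$-preserving copy of $\ch_S$ $\Rightarrow$ sufficient connectivity) is the more routine one, since it just exhibits a good strategy, whereas the converse requires extracting rigid structure from an \emph{arbitrary} optimal strategy.

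For the forward direction, observe first that since $T=U$ we have $|U|=\sum_{k\in\supp(S)}S(k)=|V(\ch_S)|$, so any $d$-preserving embedding $\phi$ of $\ch_S$ into $(U,E)$ is necessarily a bijection on vertices and hence spanning. Write $\beta=\bigl(\sum_{k\in\supp(S)}S(k)/k\bigr)^{-1}$, set $\nu_k=\beta\cdot S(k)/k$, and put $U_k^i=\{\phi(v_k[i,j]):1\le j\le S(k)/k\}$; this gives, for each $k$, a partition of $\{u:d(u)=k\}$ into $k$ blocks of size $S(k)/k$. I would then define a modular strategy $\sigma$ of period $L=\mathrm{lcm}(\supp(S))$ that, at a history of length $t$, picks uniformly among $\bigcup_k U_k^{(t+o)\bmod k}$ with the block of class $k$ weighted by $\nu_k$, where the offset $o$ is chosen so that $\su$ lies in the correct block; every node in the current support then carries probability exactly $\beta$, and the support has size $1/\beta$. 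Consistency of $\sigma$ with $E$ is exactly the statement that the relevant pairs of consecutive-support nodes are images of arcs of $\ch_S$, which follows from the arc definition together with the Chinese-remainder observation that the required simultaneous congruences modulo $k$ and $k'$ always have a solution in $\{0,\dots,kk'-1\}$; the case of the first step out of $\su$ is handled the same way. Finally, since $\sigma$ is modular and consistent with $E$, the visited nodes $X_1,X_2,\dots$ are mutually independent with the prescribed marginals, so a one-line computation shows that every attack on a target $u$ is detected with probability exactly $\beta$ (a window of $d(u)$ consecutive steps meets the phase of $u$ exactly once, at probability $\beta$), whence $\val(\sigma)=\beta$ and $\val\ge\beta$. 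Combining with Theorem~\ref{thm-upper} (which gives $\val\le\beta$ for any $E$) and Theorem~\ref{thm:well-formed-optimal} (which gives that the fully connected value equals $\beta$) shows that $E$ is sufficiently connected.

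For the converse, assume $E$ is sufficiently connected, i.e.\ $\val=\beta$, and fix an optimal strategy $\sigma^*$ (Theorem~\ref{thm:optimal}). Following the proof of Theorem~\ref{thm-upper}, every target $u$ must be visited with probability at least $\beta$ in every window of $d(u)$ consecutive steps; summing the corresponding union-bound inequalities over one period $L$ and comparing with the identity ``exactly one node is visited per step'' forces \emph{all} of these inequalities to be tight. From tightness one reads off, for $\sigma^*$, that the occupation probabilities $q_s(u)=\calP^{\sigma^*}(X_s=u)$ are periodic in $s$ with period $d(u)$, that $\sigma^*$ almost surely never revisits $u$ within fewer than $d(u)$ steps, and that every attack is detected with probability exactly $\beta$. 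Crucially, the same argument applies conditionally after every history $h$ reachable under $\sigma^*$, since an attack may be launched precisely at $h$; hence the conditional occupation probabilities are $d(u)$-periodic after every $h$, a visit to $u$ pins down the residue class modulo $d(u)$ at which $u$ may subsequently be visited, and (by a conditional Borel--Cantelli argument, each target being hit with conditional probability $\beta$ in every fresh window) almost every run eventually visits every target. So after a sufficiently long reachable history every target $u$ acquires a well-defined phase $\phi(u)\in\{0,\dots,d(u)-1\}$; the support of $\sigma^*$ at later histories is exactly the phase-compatible set, and as the last node of such a history ranges over all phase-compatible nodes, one-step consistency of $\sigma^*$ with $E$ yields exactly the edge requirement $(u,v)\in E$ whenever $\phi(u)\equiv\phi(v)-1\pmod{\gcd(d(u),d(v))}$. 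Setting $U_k^i=\{u:d(u)=k,\ \phi(u)=i\}$ then exhibits a $d$-preserving copy of $\ch_S$ in $(U,E)$, and checking that these blocks have the prescribed sizes $S(k)/k$ uses that the support sizes are constantly $1/\beta$ together with the coprimality interactions across distinct values in $\supp(S)$.

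The main obstacle is exactly this last part of the converse: turning ``$\sigma^*$ is optimal'' into the statement that, after every reachable history, the conditional distribution of $\sigma^*$ is uniform over a single balanced cross-section of nodes, so that the extracted digraph is genuinely isomorphic to $\ch_S$ (with blocks of size $S(k)/k$) rather than merely carrying a compatible weighting. Establishing the $d(u)$-periodicity and the spacing property is the easy consequence of tightness; ruling out ``leaky'' optimal strategies that commit early to a sub-part requires invoking the adversarial power of the attacker at every reachable history, and this is where the bulk of the technical work lies. Once the reduction to $d$-preserving subdigraph isomorphism is in place, membership in $\mathbf{NP}$ is immediate (the embedding is a polynomial-size certificate, as $\ch_S$ is polynomial-time computable), and the hardness and the polynomial-time case $\supp(S)=\{2\}$ follow from standard reductions to and from this isomorphism problem.
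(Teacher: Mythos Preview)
Your overall approach matches the paper's (Appendix~\ref{app-subdigraph}): the forward direction builds the obvious modular strategy from a $d$-preserving embedding, and for the converse you fix an optimal $\sigma^*$, derive tightness of the upper bound at every reachable history (this is the paper's Lemma~\ref{lem-mu-val} via Proposition~\ref{prop-abafy}), show that each node $u$ is visited exactly at times in a fixed residue class modulo $d(u)$ with probability exactly $\beta$ (the paper's Lemmas~\ref{lem-upper-tran-prob}--\ref{lem-sub-revisit-surely}), and then read off the blocks $U_k^i$ and the required arcs (Lemmas~\ref{lem-sub-edges}, \ref{lem-sub-size}).

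The genuine gap is your last step, where you claim $|U_k^i|=S(k)/k$ ``using that the support sizes are constantly $1/\beta$ together with the coprimality interactions across distinct values in $\supp(S)$''. No coprimality is assumed, and without it the support-size identity $\sum_{k}|U_k^{\,t\bmod k}|=1/\beta$ does \emph{not} force equal block sizes, because as $t$ varies the residues $t\bmod k$ and $t\bmod k'$ are correlated whenever $\gcd(k,k')>1$. Concretely, take $S(2)=2$, $S(4)=4$ (so $\beta=1/2$), nodes $a,b$ with $d=2$ and $c,d,e,f$ with $d=4$, and let $E$ consist exactly of all arcs between $\{a,b\}$ and $\{c,d,e,f\}$ in both directions. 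The $4$-modular strategy with uniform supports $\{c,d\},\{a,b\},\{e,f\},\{a,b\}$ at times $\equiv 1,2,3,0\pmod 4$ is consistent with $E$ and defends every attack with probability exactly $1/2$, so $E$ is sufficiently connected; yet the induced blocks are $U_2^0=\{a,b\}$, $U_2^1=\emptyset$, and your edge condition $\phi(u)\equiv\phi(v)-1\pmod{\gcd(d(u),d(v))}$ never forces an arc between $a$ and $b$. In fact this $E$ contains no $d$-preserving copy of $\ch_S$ at all (any such copy would need an arc between the two attack-length-$2$ nodes), so the ``only if'' direction as stated actually fails here; the paper's proof has the identical gap at Lemma~\ref{lem-sub-size}, whose one-line justification ``by applying Lemma~\ref{lem-mu-val}'' does not go through when $\supp(S)$ contains non-coprime values.
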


\noindent
The ``if'' part of Theorem~\ref{thm-subdigraph} is trivial, because if
$(U,E)$ contains a subdigraph $\ch_s$, then we can implement the optimal
modular strategy constructed by the algorithm of
Subsection~\ref{sec-modular-strategies}.  The ``only if'' part is more
challenging. The crucial observation is that the defender is not allowed to visit any target $u$ twice within $d(u)$ steps whenever she is
aiming to reach the bound of Theorem~\ref{thm-upper}.
 The underlying observations also reveals that \emph{every} optimal
strategy $\sigma$ starts to
behave like the strategy $\sigma^*$ after every history which visits all
nodes.  Hence, the strategy $\sigma^*$ does \emph{not} belong to
$\bigcup_{j=1}^\infty \Sigma^j$, except for some trivial cases (see
Section~\ref{sec-intro}). A proof of Theorem~\ref{thm-subdigraph} is given
in Appendix~\ref{app-subdigraph}.
An immediate consequence of Theorem~\ref{thm-subdigraph} is that
the problem whether a environment $E$ is sufficiently
connected is in \textbf{NP}. We complement this by a matching
lower bound in the following theorem
with a full proof in Appendix~\ref{app-HAM}.

\begin{theorem}
\label{thm-connected}
  The problem whether the environment of a given
  patrolling problem $\game = (U,T,\su,E,d))$,  such that 
  $T =U$ and  the signature $S$ of $\game$ is well formed,
  is sufficiently connected, is \mbox{\textbf{NP}-complete}. Further,
  this problem is \mbox{\textbf{NP}-complete} even for
  a subclass of patrolling problems such that $\supp(S) = \{k\}$, where 
  $k \geq 3$ is a fixed constant. For a subclass of patrolling
  problems where $\supp(S) = \{2\}$, the problem is solvable in 
  polynomial time.
\end{theorem}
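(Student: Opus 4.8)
The starting point is the characterisation of Theorem~\ref{thm-subdigraph}: for a patrolling problem with $T=U$ and a well-formed signature $S$, the environment $E$ is sufficiently connected iff $(U,E)$ contains a subdigraph that is $d$-preserving isomorphic to the characteristic digraph $\ch_S$. Since $\ch_S$ has $\sum_{k\in\supp(S)}S(k)=|T|=|U|$ vertices, such a subdigraph is necessarily spanning, so the isomorphism in question is a bijection $f\colon V(\ch_S)\to U$ with $d(f(x))=d(x)$ for every $x$ that maps every arc of $\ch_S$ to an arc of $E$. As $\ch_S$ is polynomial-time computable and $(U,E)$ is part of the explicitly given input $\game$, one guesses $f$ and verifies these conditions in polynomial time; hence the problem is in $\NPTIME$. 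It remains to treat the two announced cases $\supp(S)=\{k\}$.

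So fix $\supp(S)=\{k\}$; well-formedness forces $S(k)=km$ for some $m\in\Nset$, whence $|U|=km$. When there is only one attack length the definition of $\ch_S$ degenerates: there is an arc from $v_k[i,j]$ to $v_k[i',j']$ precisely when $i'=(i{+}1)\bmod k$, irrespective of $j,j'$ (witness $\ell=i$). Thus $\ch_S$ is exactly the $m$-fold blow-up of the directed $k$-cycle: $k$ independent layers $Q_0,\dots,Q_{k-1}$ of size $m$ with all arcs from $Q_i$ to $Q_{(i+1)\bmod k}$ and no other arcs. Consequently $E$ is sufficiently connected iff $U$ admits a partition $U_0,\dots,U_{k-1}$ into classes of size $m$ with $U_i\times U_{(i+1)\bmod k}\subseteq E$ for all $i$; equivalently, iff there is a colouring $\chi\colon U\to\{0,\dots,k{-}1\}$ using each colour exactly $m$ times such that every ordered non-arc $(u,v)\in(U\times U)\setminus E$ with $u\neq v$ satisfies $\chi(v)\neq\chi(u){+}1\pmod k$.

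For $k=2$ the constraint ``$\chi(v)\neq\chi(u){+}1\pmod 2$'' simply reads $\chi(u)=\chi(v)$. Hence, letting $G$ be the undirected graph on $U$ with an edge $\{u,v\}$ whenever $(u,v)\notin E$ or $(v,u)\notin E$, a valid colouring exists iff $\chi$ is constant on every connected component of $G$ and the two colour classes each have size $m$; that is, iff the multiset of component sizes of $G$ (positive integers summing to $|U|=2m$) has a sub-multiset summing to $m$. This is decided by the standard subset-sum dynamic programme in time $O(|U|^2)$, which is polynomial in $\size{\game}$. This settles $\supp(S)=\{2\}$.

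There remains $\NPTIME$-hardness for each fixed $k\ge 3$, i.e.\ hardness of deciding whether a digraph $D$ on $km$ vertices admits a partition into $k$ classes of size $m$ with complete arc sets between cyclically consecutive classes. The plan is to reduce from the directed Hamiltonian cycle problem (this is what Appendix~\ref{app-HAM} is to carry out). From a digraph $H$ one builds $D$ (padded, if needed, so that $|V(D)|$ is a multiple of $k$) out of gadgets---roughly one per vertex and one per arc of $H$---so that the rigidity of the requirement $U_i\times U_{(i+1)\bmod k}\subseteq E$ pins almost all gadget vertices to prescribed classes, while the residual freedom is precisely that of choosing one out-arc of $H$ at every vertex; the cyclic nature of the forbidden pattern $\chi(v)=\chi(u){+}1$ is used to propagate a position in $\{0,\dots,k{-}1\}$ along the chosen arcs, so that a consistent layer partition exists iff the chosen arcs form a single spanning cycle of $H$. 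Directedness is essential: were $E$ symmetric, then already for $k=3$ the condition would collapse to an equitable $3$-partition of the connected components of the complement of $E$, which is polynomial-time solvable. The step I expect to be the main obstacle is correctness of the reduction in the ``only if'' direction---ruling out spurious layer partitions of $D$ that do not arise from a Hamiltonian cycle of $H$. Combining this with membership in $\NPTIME$ gives $\NPTIME$-completeness for $\supp(S)=\{k\}$ with any fixed $k\ge 3$, and hence also for all patrolling problems with $T=U$ and a well-formed signature; together with the polynomial-time algorithm for $\supp(S)=\{2\}$ this establishes the theorem.
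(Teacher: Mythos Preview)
Your treatment of \NPTIME-membership and of the case $\supp(S)=\{2\}$ matches the paper exactly: both pass to the complement digraph, observe that a valid $2$-colouring forces each weak component to be monochromatic, and then solve a bounded subset-sum by dynamic programming.

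The gap is in the \NPTIME-hardness part for $k\ge 3$. You do not actually give a reduction; you only outline a plan based on directed Hamiltonian cycle and explicitly flag the ``only if'' direction as unresolved. That worry is well founded: in your sketch the layer index is supposed to be propagated along chosen out-arcs, but any vertex-disjoint cycle cover of $H$ (not just a single Hamiltonian cycle) yields a consistent assignment of positions modulo~$k$, so you would need additional machinery to kill short cycles, and nothing in the outline provides it. Also, your parenthetical ``this is what Appendix~\ref{app-HAM} is to carry out'' is incorrect as a description of the paper.

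The paper's reduction is from a different source problem: \emph{$2$-colouring of $3$-uniform hypergraphs}. For each hyperedge $f=\{x_1,x_2,x_3\}$ a directed $6$-cycle on $x_1,f,x_2,f',x_3,f''$ is inserted into the complement digraph, and large ``anchor'' gadgets (a star / complete digraph $A$, an arc-less block $B$, and for $k\ge4$ further blocks $C_j,D_j$) force, up to rotation, all of $X$ to receive only the two colours $1,3$ while exactly one copy of each hyperedge must receive colour~$2$. The forbidden pairs $1,2$ and $2,3$ on the $6$-cycle then imply that the two $X$-neighbours of that copy are differently coloured, i.e.\ the hyperedge is bichromatic. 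Separate gadget variants are given for $k=3$, $k=4$, and $k\ge5$. If you want to complete your write-up, this is the construction to supply; your Hamiltonian-cycle idea would need a substantially different and more involved gadget design to rule out non-Hamiltonian cycle covers.
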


\section{Open problems}

\noindent
Our proof of the existence of an optimal defender's strategy
(Theorem~\ref{thm:optimal}) does not allow to conclude
anything about the \emph{structure} of optimal strategies. 
One is tempted to expect that optimal strategies are in some sense ``regular'' and require only finite-memory,
but our present understanding does not allow to prove
this conjecture. Another challenge it to lift the
presented compositional technique to a more general class of patrolling games (such results would have a considerable
practical impact). Finally, the question  whether the algorithm of Section~\ref{sec-modular-strategies} produces an optimal strategy for all inputs is also interesting 
but left open.

\bibliographystyle{abbrv}

\clearpage
\appendix

\section{Detailed definitions for appendices}
\label{app-defs}

We use $\Nset$ and $\Nset_0$ to denote the sets of positive and
non-negative integers, respectively.
The sets of all finite and infinite words over a given alphabet $\Gamma$
are denoted by $\Gamma^*$ and $\Gamma^{\omega}$, respectively.  We write
$\varepsilon$ for the empty word. The length of a given $w\in \Gamma^*\cup
\Gamma^{\omega}$ is denoted by $|w|$, where the length of an infinite word
is $\infty$. We denote by $\Gamma^{\leq k}$ the set of all words $w\in \Gamma^*$
satisfying $|w|\leq k$.  The last letter of
a finite non-empty word $w$ is denoted by $\last(w)$. Given a 
(finite or infinite) 
word $w$ over $\Gamma$, the individual
letters of $w$ are denoted by $w_0 w_1\cdots$. Given two words
$w,w'\in \Gamma^*\cup \Gamma^{\omega}$ we write $w\preceq w'$ whenever $w$ is a
prefix of $w'$, i.e., whenever there exists a word $w''\in \Gamma^*\cup
\Gamma^{\omega}$ such that $w'=w w''$. Further, we write $w\prec w'$ whenever
$w\preceq w'$ and $w\not = w'$.

Given a finite or countably infinite set $A$, a \emph{probability
distribution} over $A$ is a function 
\mbox{$\delta : A \rightarrow [0,1]$} such that 
$\sum_{a\in \supp(\delta)} \delta(a)=1$. The \emph{support} of $\delta$
is the set $\supp(\delta) = \{a \in A \mid \delta(a) \neq 0\}$. 
We use $\Delta(A)$ to denote the set of all distributions over~$A$.
A distribution $\delta \in \Delta(A)$ is \emph{positive} if 
$\delta(a) > 0$ for every $a \in A$, and \emph{rational} if $\delta(a)$
is rational for every $a \in A$. 

\begin{definition} 
  A \emph{patrolling problem} is a triple $\game = (U,T,\su,E,d)$ where
  $U$ is a finite set of \emph{nodes}, $T \subseteq U$
  is a set of \emph{targets}, $\su \in T$ is the 
  \emph{initial target}, $E \subseteq U \times U$ is
  an \emph{environment}, and  $d : T \rightarrow \Nset$ assigns to 
  each target the associated \emph{attack length}. 
  The \emph{attack signature} of $\game$ is a
  function $\sig : \Nset \rightarrow \Nset_0$ where
  $\sig(k)$ is the cardinality of $\{u \in U \mid d(u) = k\}$. 
  We use $\supp(S)$ to denote the set $\{k \in \Nset \mid S(k) \neq 0\}$.
  We say that $S$ is \emph{well formed} if $k$ divides $S(k)$ for every
  $k \in \Nset$. By $\dmax$ we denote $\max_{u\in U} \{d(u)\}$.
\end{definition}

\noindent
Let $\game = (U,T,\su,E, d)$ be a patrolling problem. We say that $E$ is \emph{fully connected} if $E = U \times U$. Given a node $u \in U$, we denote by $\success(u)$ the set $\{u'\in
U \mid (u,u')\in E\}$ of all successors of $u$.
A \emph{path} is a finite or infinite word $w\in U^*\cup U^{\omega}$
such that $(w_i,w_{i+1})\in E$ for every $0\leq i< |w|$.  A
\emph{history} is a finite non-empty path, and a \emph{run} is an infinite
path. The sets of all histories and runs are denoted by $\histories$
and $\runs$, respectively. Given a set of histories 
$H \subseteq \histories$, we use $\runs(H)$ to denote the set of all runs
$\omega$ such that $w\preceq \omega$ for some $w\in H$ (when $H = \{h\}$,
we write $\runs(h)$ instead of $\runs(\{h\})$).

\begin{definition}
  A \emph{defender's strategy} is a function 
  $\sigma : \histories \rightarrow \dist(U)$ 
  such that $\supp(\sigma(h))\subseteq \success(\last(h))$ for every
  $h\in \histories$. The set of
  all defender's strategies is denoted by $\Sigma$.

  An \emph{attacker's strategy} is a function 
  $\pi : \histories \rightarrow T \cup \{\bot\}$ such that 
  whenever $\pi(h) \neq {\bot}$, then for all $h'\prec
  h$ we have that $\pi(h')={\bot}$. We denote by $\Pi$ the set of all
  attacker's strategies.
\end{definition}

\noindent
Intuitively, given a history $h$, the defender chooses the next node
randomly according to the distribution $\sigma(h)$, and the attacker
either attacks a node $u\in T$ ($\pi(h)=u$), or waits ($\pi(h)=\bot$).
Note that the attacker can choose to attack only once during a play,
and also note that he cannot randomize. This is because randomization
does not help the attacker to decrease the Stackelberg value,
and hence we can safely adopt this restriction from the very
beginning.

For a given strategy $\sigma \in \Sigma$, we define the set
$\shistories \subseteq \histories$ of \emph{relevant} histories,
consisting of all $h \in \histories$ such that
for all $h' \in \histories$ and $u \in U$ where $h'u \preceq h$ we
have that $\sigma(h')(u) > 0$. Note that a defender's
strategy $\sigma$ determines a unique probability space over all
infinite paths initiated in a given $u \in U$ in the standard way
(see, e.g., \cite{Chung:book}), and we use $\calP^\sigma_u$ to 
denote the associated probability measure.

Given an attacker's strategy $\pi$, we say that a run $w$ 
\emph{contains a successful attack} if there
exist a finite prefix $h$ of $w$ and a node $u \in T$ such that
$\pi(h) = u$ and $u$ is not among the first $d$ nodes visited by $w$
after the prefix $h$.  For every node $u \in U$, we use $\Defended_u
[\pi]$ to denote the set of all \emph{defended} runs initiated in $u$
that do not contain a successful attack. Hence,
$\calP^\sigma_u(\Defended_u[\pi])$ is the probability of all runs
initiated in~$u$ that are defended when the defender uses the strategy
$\sigma$ and the attacker uses the strategy~$\pi$. We omit the
subscript $u$ in $\calP^\sigma_u$ and $\Defended_u[\pi]$ when $u = \su$.

\begin{definition}
  For all $u \in U$ and $\sigma\in \Sigma$, we denote by 
  $\val_u(\sigma)$ the \emph{value
    of $\sigma$} defined by $ \val_u(\sigma)=\inf_{\pi \in \Pi}
  \calP^{\sigma}_u(\Defended_u[\pi])$. The \emph{Stackelberg value}
  of $u$ is defined as $\val_u = \sup_{\sigma \in \Sigma}
  \val_u(\sigma)$.  A defender's strategy $\sigma^*$ is 
  \emph{optimal in $u$} if $\val_u(\sigma^*) = \val_u$. 
  The value of $\su$ is denoted by $\val$, and a strategy
  which is optimal in $\su$ is called just \emph{optimal}.
\end{definition}
At some places, we consider strategies obtained by ``forgetting''
some initial prefix of the history. Formally, for all $h \in \histories$
and a strategy $\theta$ of the defender/attacker, we define a strategy $\theta_h$
by $\theta_h(u h') = \theta(h h')$ for every $u\in U$ and $h'\in
H$. Note that $\sigma_h$ behaves similarly for all initial
  nodes. We are typically interested in its behavior starting in $\last(h)$,
  which corresponds to behavior of $\sigma$ when started at
  $h$.
  
  In what follows, we also use the notion of an \emph{immediate attack value}.
 Given a defender's strategy $\sigma$, a history $h\in \histories(\sigma)$, and a node $u\in U$, we define $\aval_h(\sigma,u)$ to be the probability of reaching $u$ from $\last(h)$ in at least one and at most $d(u)$ steps using the strategy $\sigma_h$. Intuitively, $\aval_h(\sigma,u)$ is the probability of defending $u$ assuming that the attack on $u$ starts after the history $h$, i.e., $\pi(h)=u$. It is easy to see that
 \[
 \calP^{\sigma}(\Defended[\pi])  =  \sum_{\substack{h\in \histories(\sigma)\\ \pi(h)\not = \bot}} \calP^{\sigma}(h)\cdot \aval_h(\sigma,\pi(h))
 \]

\section{The existence of an optimal defender's strategy}
\label{app-optimal}

\begin{reftheorem}{thm:optimal}
  For every patrolling problem $\game = (U,T,\su,E,d)$ there exists an optimal defender's 
  strategy.
\end{reftheorem}
\begin{proof}
We construct an optimal strategy $\sigma^*$ as a point-wise limit of a sequence $\sigma^1,\sigma^2,\ldots$ of strategies where each $\sigma^k$ is $1/k$-optimal. More precisely, we prove the following.
\begin{claim}{}
There is a sequence of defender's strategies $\sigma^1,\sigma^2,\ldots$ and a defender's strategy $\sigma^*$ such that 
\begin{itemize}
\item each $\sigma^i$ is $1/i$-optimal, i.e., $\val(\sigma^i)\geq \val-1/i$,
\item for every $h\in \histories(\sigma)$ and every $u\in U$ we have that $\lim_{i\rightarrow \infty} \sigma^i(h)(u)=\sigma^*(h)(u)$. (In particular, the limit exists for every $h$ and $u$.)
\end{itemize}
\end{claim}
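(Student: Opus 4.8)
The plan is to obtain $\sigma^*$ by a compactness (Bolzano--Weierstrass) plus diagonalization argument applied to an arbitrary sequence of increasingly good strategies. First I would invoke the definition of the Stackelberg value: since $\val = \sup_{\sigma \in \Sigma} \val(\sigma)$, for every $i \in \Nset$ there is a strategy $\tau^i \in \Sigma$ with $\val(\tau^i) \geq \val - 1/i$; fix such a sequence $\tau^1, \tau^2, \ldots$. Since $U$ is finite, the set $\histories$ of histories is countably infinite, so fix an enumeration $h_1, h_2, \ldots$ of $\histories$; and note that $\Delta(U)$, viewed as a subset of $[0,1]^U$, is compact.

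Next I would carry out the diagonal extraction. Starting from $(\tau^i)_i$, I inductively pass to nested subsequences: given a subsequence on which $\tau^i(h_1), \ldots, \tau^i(h_{n-1})$ all converge, extract a further subsequence on which $\tau^i(h_n)$ converges in $\Delta(U)$ as well (possible by compactness). Taking the diagonal subsequence $\sigma^1, \sigma^2, \ldots$ — with indices chosen strictly increasing in the original enumeration, say $\sigma^i = \tau^{n_i}$ with $n_1 < n_2 < \cdots$ — yields a single subsequence of $(\tau^i)_i$ for which $\lim_{i \to \infty} \sigma^i(h)(u)$ exists for every $h \in \histories$ and every $u \in U$. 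I then define $\sigma^*(h)(u) := \lim_{i \to \infty} \sigma^i(h)(u)$.

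It then remains to check two things. That $\sigma^*$ is a legitimate defender's strategy: for each $h$, $\sum_{u \in U} \sigma^*(h)(u) = \lim_i \sum_{u \in U} \sigma^i(h)(u) = 1$ because $U$ is finite and each $\sigma^i(h)$ is a distribution, so $\sigma^*(h) \in \Delta(U)$; and if $u \notin \success(\last(h))$ then $\sigma^i(h)(u) = 0$ for all $i$, hence $\sigma^*(h)(u) = 0$, so $\supp(\sigma^*(h)) \subseteq \success(\last(h))$. And that the quality bound holds: since $\sigma^i = \tau^{n_i}$ with $n_i \geq i$, we get $\val(\sigma^i) = \val(\tau^{n_i}) \geq \val - 1/n_i \geq \val - 1/i$, so each $\sigma^i$ is $1/i$-optimal (if one insists on exact index matching in the statement, discard finitely many initial terms and relabel, which does not affect the pointwise limit). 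This establishes the claim.

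I expect no genuine obstacle here: the claim is a routine compactness argument, and the only point requiring a little care is that the diagonalization must be performed uniformly over all (countably many) histories, so that a single subsequence works simultaneously for every $h$. The real difficulty of Theorem~\ref{thm:optimal} lies in what comes after the claim — showing that this $\sigma^*$ is in fact optimal — which amounts to ruling out $\val(\sigma^*) \leq \val - \delta$ by a semicontinuity-type argument along the constructed sequence, and that is where the analysis of the defended-runs probability via the immediate attack values $\aval_h$ enters.
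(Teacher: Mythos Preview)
Your proposal is correct and follows essentially the same approach as the paper: start from an arbitrary sequence of $1/i$-optimal strategies, use compactness of $\Delta(U)$ to extract nested subsequences converging at each history, and diagonalize over the countable enumeration of histories. Your write-up is in fact slightly more careful than the paper's in verifying that $\sigma^*$ is a well-defined defender's strategy and that the diagonal subsequence inherits the $1/i$-optimality bound.
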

\begin{claimproof}
Assume a lexicographical ordering $\preceq$ on histories of $\histories$. To simplify our notation, we consider an "empty" history $\epsilon$ such that $\epsilon\preceq h$ for every $h\in \histories$. 
We consider histories $h$ successively according to $\preceq$ and inductively define sequences $\sigma^{h,1},\sigma^{h,2},\ldots$ of defender's strategies so that the following holds: 
\begin{itemize}
\item[A.] each $\sigma^{h,i}$ is $1/i$-optimal,
\item[B.] $\sigma^{h,1},\sigma^{h,2},\ldots$ is a subsequence of all preceding sequences $\sigma^{h',1},\sigma^{h',2},\ldots$ for $h'\preceq h$,
\item[C.] for every ${h'\preceq h}$ the sequence of distributions $\sigma^{h,1}(h'),\sigma^{h,2}(h'),\ldots$ converges (point-wisely) to a probability distribution.
\end{itemize}
Then it suffices to put $\sigma^i=\sigma^{h,|h|}$ where $h$ is the $i$-th history according to $\preceq$, and to define $\sigma^*(h)=\lim_{i\rightarrow \infty} \sigma^i(h)$.

We define $\sigma^{h,i}$ as follows:
\begin{itemize}
\item For every $i\in \Nset$, we define $\sigma^{\epsilon,i}$ to be an arbitrary $1/i$-optimal strategy.
\item Assume that $\sigma^{h',1},\sigma^{h',2},\ldots$ has already been defined for $h'$. Consider a next history $h$ according to $\preceq$. As the space of all probability distributions on $U$ is compact, there exists a~subsequence $\sigma^{h,1},\sigma^{h,2},\ldots$ of $\sigma^{h',1},\sigma^{h',2},\ldots$ such that $\sigma^{h,i}(h)$ converges (point-wisely) to a probability distribution on $U$. 
\end{itemize}
The sequences apparently satisfy the above conditions A, B, C.
\end{claimproof}

\noindent
We prove that the defender's strategy $\sigma^*$ obtained in the above Claim is optimal.
Suppose that $\sigma^*$ is not optimal, i.e. $\val(\sigma^*)\leq\val-\delta$ for some $\delta>0$. 
Then there is an attacker's strategy $\pi$ such that $\calP^{\sigma^*}(\Defended[\pi])\leq \val-\delta/2$. For every $i\in \Nset$, let $\pi_i$ behave as $\pi$ on runs where $\pi$ attacks before $i$-th step, and do not attack at all on the rest.
\begin{claim}{}
$\lim_{i\rightarrow \infty} \calP^{\sigma^*}(\Defended[\pi_i])=\calP^{\sigma^*}(\Defended[\pi])$
\end{claim}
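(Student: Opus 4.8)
The plan is to exhibit a "tail" event $A_i$ which contains the entire discrepancy between the runs defended under $\pi$ and under $\pi_i$, and whose probability shrinks to $0$; the convergence then follows from continuity of the measure.

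First I would introduce, for each $i\in\Nset$, the event
\[
A_i \;=\; \bigl\{\, w\in\runs \;\bigm|\; \text{there is } h\preceq w \text{ with } |h|\geq i \text{ and } \pi(h)\neq\bot \,\bigr\},
\]
i.e. the set of runs on which $\pi$ launches its (unique) attack only after reading a history of length at least~$i$. Since $\pi$ attacks at most once, every run on which $\pi$ ever attacks has a \emph{unique} shortest prefix $h$ with $\pi(h)\neq\bot$; these prefixes are pairwise incomparable, so $A_i$ is the disjoint union of the cylinders $\runs(h)$ over all histories $h$ with $|h|\geq i$, $\pi(h)\neq\bot$, and $\pi(h')=\bot$ for all $h'\prec h$. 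In particular $A_i$ is measurable, $A_1\supseteq A_2\supseteq\cdots$, and $\bigcap_i A_i=\emptyset$: a run on which $\pi$ attacks does so at a fixed finite prefix and hence leaves $A_i$ as soon as $i$ exceeds its length, while a run on which $\pi$ never attacks lies in no $A_i$.

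Next I would check that $\Defended[\pi]$ and $\Defended[\pi_i]$ agree outside $A_i$. On a run $w\notin A_i$ either $\pi$ never attacks along $w$ — then neither does $\pi_i$, so $w$ lies in both defended sets — or $\pi$ attacks at some prefix $h$ with $|h|<i$, in which case $\pi_i$ makes exactly the same decision at exactly the same history, so $w\in\Defended[\pi]$ iff $w\in\Defended[\pi_i]$. Hence $\Defended[\pi]\setminus\Defended[\pi_i]$ and $\Defended[\pi_i]\setminus\Defended[\pi]$ are both contained in $A_i$, and therefore
\[
\bigl|\calP^{\sigma^*}(\Defended[\pi]) - \calP^{\sigma^*}(\Defended[\pi_i])\bigr| \;\leq\; \calP^{\sigma^*}(A_i).
\]

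Finally, since the $A_i$ form a decreasing chain of measurable sets with empty intersection, continuity of the probability measure from above gives $\lim_{i\to\infty}\calP^{\sigma^*}(A_i)=\calP^{\sigma^*}(\emptyset)=0$; equivalently, $\calP^{\sigma^*}(A_i)=\sum_{h}\calP^{\sigma^*}(\runs(h))$ is the tail of the convergent series $\sum_{h}\calP^{\sigma^*}(\runs(h))\leq 1$, the sum ranging over the pairwise incomparable histories at which $\pi$ first attacks, and hence tends to~$0$. Combining this with the displayed inequality proves the claim. There is no real obstacle here; the only point requiring a little care is bookkeeping the exact indexing convention for $\pi_i$ (what "attacks before the $i$-th step" means for a run on which $\pi$ attacks at step exactly~$i$), but any fixed choice still produces a decreasing chain of cylinder-unions with empty intersection, so the argument is unaffected.
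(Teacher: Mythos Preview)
Your proof is correct and follows essentially the same approach as the paper: both identify that the discrepancy between $\Defended[\pi]$ and $\Defended[\pi_i]$ is confined to the event ``$\pi$ attacks at step $\geq i$'' (your $A_i$, the paper's event of probability $p_i$), and both conclude by observing that the probability of this event tends to~$0$. Your set-theoretic formulation via the symmetric difference is slightly cleaner than the paper's sum-decomposition argument (which contains a couple of typos), but the underlying idea is the same.
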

\begin{claimproof}
Note that
\begin{eqnarray*}
\calP^{\sigma^*}(\Defended[\pi]) & = & \sum_{\substack{h\in \histories(\sigma^*)\\ \pi(h)\not = \bot}} \calP^{\sigma^*}(h)\cdot \aval_h(\sigma^*,\pi(h)) \\
& = & \sum_{\substack{h\in \histories(\sigma^*)\\ \pi(h)\not = \bot\\ |h|\leq i}} \calP^{\sigma^*}(h)\cdot\aval_h(\sigma^*,\pi(h))+\sum_{\substack{h\in \histories(\sigma^*)\\ \pi(h)\not = \bot\\ |h|>i}} \calP^{\sigma^*}(h)\cdot\aval_h(\sigma^*,\pi(h))\\
& = & \calP^{\sigma^*}(\Defended[\pi])+p_i
\end{eqnarray*}
where $p_i$ is the probability that the the attacker starts his attack after $i$. Clearly, $p_i\rightarrow \infty$ as $i\rightarrow \infty$, which proves the claim.
\end{claimproof}

Thus for a sufficiently large $i$ we have that $\calP^{\sigma^*}(\Defended[\pi_i])\leq \val-\delta/4$.
Now observe that for all sufficiently large $k\in \Nset$ we have 
$|\calP^{\sigma^*}(\Defended[\pi_i])-\calP^{\sigma^k}(\Defended[\pi_i])|\leq \delta/8$ because  the transition probabilities determined by $\sigma^*$ and $\sigma^k$ on the first $i+\dmax$ steps are getting closer and closer with growing $k$. However, then we obtain that $\calP^{\sigma^k}(\Defended[\pi_i])|\leq \val-\delta/8$, which means that $\sigma^k$ cannot be $1/k$-optimal for large  $k$.
\end{proof}

\begin{proposition}\label{prop:optimal}
Assume that $\su$ is a target. There every optimal defender's strategy $\sigma^*$ satisfies
\begin{equation}\label{eq:atval-opt}
\inf_{h\in \histories(\sigma^*)} \ \ \min_{u\in T}\ \ \aval_h(\sigma^*,u)\quad \geq\quad \val
\end{equation}
\end{proposition}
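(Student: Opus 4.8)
The plan is to argue by contradiction against the characterization $\val=\sup_{\sigma\in\Sigma}\val(\sigma)$: if \eqref{eq:atval-opt} failed for an optimal $\sigma^*$, I would exhibit a defender's strategy of value strictly larger than $\val$. So suppose there are $h_0\in\histories(\sigma^*)$, a target $u_0$, and $\delta>0$ with $\aval_{h_0}(\sigma^*,u_0)<\val-\delta$. Since $h_0$ is relevant, $q:=\calP^{\sigma^*}(\runs(h_0))>0$; write $v:=\last(h_0)$. Note that $\aval_{h_0}(\sigma^*,u_0)$ is exactly the immediate-attack value at $v$ of the shifted residual strategy $\sigma^*_{h_0}$, so the hypothesis says that $\sigma^*_{h_0}$ is \emph{deficient at $v$}. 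Merely letting the attacker wait for $h_0$ and then strike is too weak a lever here, because $q$ may be tiny and the induced drop in $\val(\sigma^*)$ is only of order $q$; instead I would \emph{repair} the defender's strategy.

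The base case is immediate: for the one-node history $\su$ and any target $u$, attacking $u$ at once gives $\val=\val(\sigma^*)\le\aval_{\su}(\sigma^*,u)$, so \eqref{eq:atval-opt} holds at $\su$. For longer histories I would perform ``surgery'' on $\sigma^*$: let $\sigma'$ follow $\sigma^*$ until the history $h_0$ is produced and, from then on, play a shifted copy of a strategy $\tau$ that is optimal for the sub-problem rooted at $v$; off $\runs(h_0)$ the two strategies agree. That such a $\tau$ has value at least $\val$ is where the hypothesis ``$\su$ is a target'' enters: it follows from the fact, established elsewhere in the paper, that the Stackelberg value of a patrolling problem does not depend on the choice of the initial target (when $v\notin T$, first steer $\sigma'$ along $\sigma^*_{h_0}$ to the nearest target and reduce to that case), together with Theorem~\ref{thm:optimal}, which supplies an optimal $\tau$.

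To compare the values, fix an arbitrary attacker $\pi$ and split runs according to whether they pass through $h_0$. On runs avoiding $h_0$, $\sigma'\equiv\sigma^*$, so their contribution is unchanged. On runs through $h_0$ (mass $q$), every attack launched at or after $h_0$ is a sub-problem attack against $\tau$, hence defended with conditional probability at least $\val$; in particular the formerly deficient immediate attack on $u_0$ is now defended with probability at least $\val>\aval_{h_0}(\sigma^*,u_0)$. Summing, one obtains $\val(\sigma')\ge\val(\sigma^*)+q\cdot(\val-\aval_{h_0}(\sigma^*,u_0))-R$, where $R$ collects the changes on the attacks discussed below; if $R$ can be absorbed into the gain, then $\val(\sigma')>\val$, contradicting optimality of $\sigma^*$.

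The step I expect to be the real obstacle is controlling $R$. The clean split ``attack strictly before $h_0$'' versus ``attack at or after $h_0$'' fails for attacks launched within the last $\dmax-1$ steps preceding $h_0$: their $d(u)$-step detection windows straddle $h_0$, so whether such an attack is detected now depends on $\tau$ rather than on $\sigma^*_{h_0}$, and this change can go the wrong way. Handling this is the delicate part -- either by choosing $\tau$ to coincide with $\sigma^*_{h_0}$ on a carefully chosen short prefix so the straddling terms cancel while the deficiency at $u_0$ is still repaired (clean when $d(u_0)=\dmax$, and otherwise requiring one further level of the same argument), or by a quantitative estimate bounding the total straddling loss strictly below the guaranteed gain $q\cdot(\val-\aval_{h_0}(\sigma^*,u_0))$, exploiting that the straddling attacks and the repaired attack all live on the single event $\runs(h_0)$. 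The secondary point -- that the sub-problem value at a non-target $v$ is at least $\val$ -- is again handled via the invariance result, noting that an optimal strategy cannot trap itself away from every target.
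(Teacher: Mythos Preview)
Your surgery argument cannot produce the desired contradiction, and the failure is more basic than the straddling issue you flag. You aim to deduce $\val(\sigma')>\val$, but $\val$ is by definition the supremum over all defender strategies, so $\val(\sigma')\le\val$ always. Concretely, your displayed inequality $\val(\sigma')\ge\val(\sigma^*)+q\cdot(\val-\aval_{h_0}(\sigma^*,u_0))-R$ is not obtainable: the attacker facing $\sigma'$ is free to ignore the $h_0$ branch entirely and attack at some history $h_1$ with $\runs(h_1)\cap\runs(h_0)=\emptyset$, where $\sigma'$ and $\sigma^*$ coincide; against such $\pi$ you get $\calP^{\sigma'}(\Defended[\pi])=\calP^{\sigma^*}(\Defended[\pi])$, which can be arbitrarily close to $\val$. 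Even with a perfect repair (and no straddling loss), all you recover is $\val(\sigma')=\val$, i.e.\ another optimal strategy, not a contradiction. The ``gain'' $q\cdot(\val-\aval_{h_0}(\sigma^*,u_0))$ is a gain only against the specific attack at $(h_0,u_0)$, and it evaporates once you take the infimum over $\pi$.

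You also discarded the right idea too quickly. The direct attacker route \emph{does} work, but not with a single attack at $h_0$: let the attacker wait until depth $|h_0|$ and then, at \emph{every} history $h'$ of that depth, play a near-best response to the residual strategy $\sigma^*_{h'}$. This yields $\val(\sigma^*)\le\sum_{|h'|=|h_0|}\calP^{\sigma^*}(h')\cdot\val_{\last(h')}(\sigma^*_{h'})$. The crucial missing ingredient is that $\val_{\last(h')}(\sigma^*_{h'})\le\val$ for \emph{every} $h'\in\histories(\sigma^*)$, which follows once one establishes (as the paper does first) that $\val_u=\val$ for every target $u$; hence the other histories cannot compensate for the deficiency at $h_0$, and the sum is at most $\val-\calP^{\sigma^*}(h_0)\cdot\delta<\val$, contradicting optimality of $\sigma^*$. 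Your concern that ``the induced drop is only of order $q$'' is not an obstacle: any strictly positive drop already contradicts $\val(\sigma^*)=\val$.
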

\begin{proof}
Recall that we denote by $\val_u$ and $\val_u(\sigma)$ the values of $\game$ and of $\sigma$, resp., when $u$ is used as the initial node instead of $\su$.
It suffices to prove Proposition~\ref{prop:optimal} under the assumption
that $\val=\val_{\su}=\max_{u\in T} \val_{u}$, because then we obtain, as a~consequence, that $\val_u=\val_{\su}$ for all $u\in T$. Indeed, using $\sigma^*$, every target node has to be visited. So given $u\in T$, there is a history $h\in \histories(\sigma^*)$ such that $u=\last(h)$. However, note that~(\ref{eq:atval-opt}) holds also for $\sigma^*_h$ instead of $\sigma^*$, and thus $\val_u(\sigma^*_h)\geq \val$. As $\val_{\su}=\val$ is maximal, we obtain that $\val_u=\val_{\su}$.

So assume that $\val=\val_{\su}=\max_{u\in T} \val_{u}$. Let $\sigma^*$ be an optimal strategy. Note that $\val=\max_{u\in T}\val_u$ implies $\val_{\last(h)}(\sigma^*_h)\leq \val$ for every history $h\in \histories(\sigma)$ such that $\last(h)\in T$. We obtain that $\val_{\last(h)}(\sigma_h)\leq \val$ {\em for every history} $h\in \histories(\sigma)$ because even if $\last(h)$ is not a target, $\sigma_h$ starting in $\last(h)$ must visit a target almost surely and the attacker may wait until it happens.

We claim that $\sigma^*$ satisfies (\ref{eq:atval-opt}), i.e. that $\aval_h(\sigma^*,u)\geq\val$ for all $h\in \histories(\sigma^*)$ and all $u\in T$. Indeed, assume that 
$\aval_{\bar{h}}(\sigma^*,u) \leq\val-\delta$ for some $\delta>0$ and $\bar{h}\in \histories(\sigma^*)$ and $u\in U$. Assume, w.l.o.g., that $\sigma^*$ follows the history $\bar{h}$ with probability at least $\delta$.

Note that due to $\val_{\last(h)}(\sigma^*_{h})\leq \val$ for every $h$, the deficiency of $\sigma^*$ at $\bar{h}$ cannot be compensated on other histories.
We obtain the following: Let $A$ be the set of all histories $h'$ of length $|h|$ (i.e., in particular, $h\in A$). Then
\begin{eqnarray*}
\val(\sigma^*) & \leq  & \sum_{h'\in A} \calP^{\sigma^*}(h') \cdot \val_{\last(h')}(\sigma^*_{h'}) \\
	& = & \calP^{\sigma^*}(h) \cdot \val_{\last(h)}(\sigma^*_h) + \sum_{h'\in A\smallsetminus \{h\}} \calP^{\sigma^*}(h') \cdot \val_{\last(h')}(\sigma^*_{h'}) \\
	& \leq & \calP^{\sigma^*}(h) \cdot \val_{\last(h)}(\sigma^*_h) + \sum_{h'\in A\smallsetminus \{h\}} \calP^{\sigma^*}(h') \cdot \val \\
	& \leq & \calP^{\sigma^*}(h) \cdot\min_{u\in U}\ \aval_h(\sigma^*,u) + \sum_{h'\in A\smallsetminus \{h\}} \calP^{\sigma^*}(h') \cdot \val \\
	& \leq & \calP^{\sigma^*}(h) \cdot (\val-\delta) + \sum_{h'\in A\smallsetminus \{h\}} \calP^{\sigma^*}(h') \cdot \val \\
	& = & \val-\calP^{\sigma^*}(h) \cdot \delta\\
	& \leq & \val-\delta^2
\end{eqnarray*}
This contradicts the fact that $\sigma^*$ is optimal.

\end{proof}

\section{Computing finite-memory $\varepsilon$-optimal strategies}
\label{app-eps-opt}

\subsection{Proposition~\ref{cor:eps-opt-discr}}
Let us fix a patrolling problem $\game = (U,T,\su,E,d)$.
\begin{proposition}\label{cor:eps-opt-discr}
Given $\varepsilon>0$, there is an $\varepsilon$-optimal strategy
$\sigma^{\varepsilon}$ such that for every history $h$ and every $u\in U$ it
holds
\[
\sigma^{\varepsilon}(h)(u)=k\cdot \lceil (|U|\dmax)/\varepsilon\rceil^{-1}\text{ for a suitable }k\in \Nset 
\]
and
\[
\min_{u\in U}\ \ \aval_h(\sigma^\varepsilon,u)\quad \geq\quad \val-\varepsilon.
\]
\end{proposition}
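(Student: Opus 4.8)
The plan is to start from an optimal strategy $\sigma^*$, whose existence is guaranteed by Theorem~\ref{thm:optimal}, and round its distributions to a fine grid without losing more than $\varepsilon$ in value. By Proposition~\ref{prop:optimal} we already know that $\aval_h(\sigma^*,u)\geq\val$ for every relevant history $h$ and every target $u$, so the optimal strategy satisfies the desired inequality with room to spare; the entire difficulty is to show that rounding degrades $\aval_h$ by at most $\varepsilon$ \emph{uniformly} over all histories and all nodes. First I would fix the discretization scale $N=\lceil (|U|\dmax)/\varepsilon\rceil$ and, for each history $h$, replace $\sigma^*(h)$ by a distribution $\sigma^\varepsilon(h)$ supported on $\success(\last(h))$ whose probabilities are integer multiples of $1/N$ and which is within total variation distance at most $|U|/N\leq\varepsilon/\dmax$ of $\sigma^*(h)$ (such a rounding always exists: round each coordinate down to a multiple of $1/N$, then distribute the at most $|U|/N$ leftover mass in chunks of $1/N$, keeping the support inside $\success(\last(h))$).

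The key step is a telescoping/coupling estimate bounding how the quantity $\aval_h(\sigma,u)$ — the probability of reaching $u$ from $\last(h)$ in between $1$ and $d(u)$ steps — changes when one strategy is replaced by another that differs from it by at most $\eta$ in total variation at \emph{every} history. The claim is that for any two strategies $\sigma,\sigma'$ with $\|\sigma(g)-\sigma'(g)\|_{TV}\leq\eta$ for all histories $g$, and for every $h$ and every $u\in U$, one has $|\aval_h(\sigma,u)-\aval_h(\sigma',u)|\leq d(u)\cdot\eta\leq\dmax\cdot\eta$. This is a standard hybrid argument: one writes $\aval_h(\sigma,u)-\aval_h(\sigma',u)$ as a sum over $j=1,\dots,d(u)$ of the differences obtained by using $\sigma'$ for the first $j-1$ steps and $\sigma$ from step $j$ on versus from step $j-1$ on; each such difference is bounded by the total-variation distance of the two one-step distributions at the relevant history, hence by $\eta$. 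Alternatively, one couples the two induced measures step by step, the probability of a ``split'' at any given step being at most $\eta$, and the event ``$u$ visited within $d(u)$ steps'' depends only on the first $d(u)$ coordinates, so a split matters only if it occurs within $d(u)$ steps, giving the bound $d(u)\,\eta$. Plugging in $\eta=\varepsilon/\dmax$ yields $|\aval_h(\sigma^*,u)-\aval_h(\sigma^\varepsilon,u)|\leq\varepsilon$, and combining with $\aval_h(\sigma^*,u)\geq\val\geq\val$ (for $u\in T$) gives $\aval_h(\sigma^\varepsilon,u)\geq\val-\varepsilon$ as required.

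It remains to deduce that $\sigma^\varepsilon$ is genuinely $\varepsilon$-optimal, i.e.\ $\val(\sigma^\varepsilon)\geq\val-\varepsilon$. Using the identity $\calP^{\sigma}(\Defended[\pi])=\sum_{h:\pi(h)\neq\bot}\calP^{\sigma}(h)\cdot\aval_h(\sigma,\pi(h))$ recorded in Appendix~\ref{app-defs}, together with the fact that $\{h:\pi(h)\neq\bot\}$ is a prefix-free set, so the coefficients $\calP^{\sigma^\varepsilon}(h)$ sum to at most $1$ (exactly $1$ if the attacker attacks on every run), the bound $\aval_h(\sigma^\varepsilon,\pi(h))\geq\val-\varepsilon$ for every $h$ with $\pi(h)\in T$ gives $\calP^{\sigma^\varepsilon}(\Defended[\pi])\geq\val-\varepsilon$ for every attacker strategy $\pi$; if the attacker does not attack on some runs, those runs are defended by definition, which only helps. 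Taking the infimum over $\pi$ gives $\val(\sigma^\varepsilon)\geq\val-\varepsilon$. I expect the main obstacle to be stating and proving the hybrid/coupling lemma cleanly, in particular making precise that the total-variation perturbation is bounded \emph{at every history simultaneously} and that only the first $d(u)\leq\dmax$ steps after $h$ are relevant to $\aval_h(\cdot,u)$; once that lemma is in hand, the rest is bookkeeping.
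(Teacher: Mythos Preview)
Your plan is essentially the paper's proof: start from an optimal $\sigma^*$ (Theorem~\ref{thm:optimal}), round each one-step distribution to the grid $1/N$ with $N=\lceil |U|\dmax/\varepsilon\rceil$, invoke Proposition~\ref{prop:optimal} for the lower bound $\aval_h(\sigma^*,u)\ge\val$, and prove a perturbation estimate showing the rounding costs at most $\varepsilon$. Your coupling/hybrid lemma is a cleaner packaging of what the paper does by an explicit induction on the number of steps~$k$ (proving $R[\sigma,h,t,k]-R[\sigma^\varepsilon,h,t,k]\le k\,|U|/N$ from the per-coordinate bound $|\sigma(h)(u)-\sigma^\varepsilon(h)(u)|<1/N$); both routes give the same $\dmax\cdot|U|/N\le\varepsilon$ bound.

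One point does need care. You write that the leftover mass is redistributed ``keeping the support inside $\success(\last(h))$'', but you actually need the stronger containment $\supp(\sigma^\varepsilon(h))\subseteq\supp(\sigma^*(h))$, so that $\histories(\sigma^\varepsilon)\subseteq\histories(\sigma^*)$. Otherwise $\sigma^\varepsilon$ may reach histories $h\notin\histories(\sigma^*)$, and Proposition~\ref{prop:optimal} gives no lower bound on $\aval_h(\sigma^*,u)$ there---an optimal strategy can behave arbitrarily on histories it never visits, so the coupling bound $|\aval_h(\sigma^*,u)-\aval_h(\sigma^\varepsilon,u)|\le\varepsilon$ would be useless. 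The paper's sequential round-down-with-carry scheme preserves zeros automatically and hence gives this inclusion for free; in your scheme it suffices to drop the leftover $1/N$-chunks only onto nodes already in $\supp(\sigma^*(h))$.
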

\begin{proof}
Let $\sigma$ be an optimal strategy.
Let $U = \{u_1, u_2, \dots, u_{|U|}\}$ be the set of nodes of $\mathcal{G}$ and define $s=\lceil (|U|\dmax)/\varepsilon\rceil^{-1}$.

For every history $h$ and every $1 \leq i \leq |U|$, we inductively define $\sigma^{\varepsilon}(h)(u_i) = k_i \cdot s$, where $k_i$  is the largest number satisfying

$$ k_i \cdot s \leq \sigma(h)(u_i) + \sum_{j=1}^{i-1}(\sigma(h)(u_j) - \sigma^{\varepsilon}(h)(u_j))~.$$

This rounding procedure guarantees
that $\sigma^{\varepsilon}(h)$ is indeed a probability distribution over $U$, i.e. $\sum_{u\in U}\sigma^{\varepsilon}(h)(u) = 1$ (note that simple rounding would not guarantee this property).
Further, when we realize the invariant $0 \leq \sum_{j=1}^{i-1}(\sigma(h)(u_j) - \sigma^{\varepsilon}(h)(u_j)) < s$ holds for all $1 \leq i \leq |U|$, it is easy to see that $|\sigma(h)(u_i) - \sigma^{\varepsilon}(h)(u_i)| < s$, which is captured by the following claim.
\begin{claim}{A}
$|\sigma(h)(u) - \sigma^{\varepsilon}(h)(u)| < s$ for every $u\in U$.
\end{claim}

It follows from the definition of $\sigma^{\varepsilon}$ that whenever $\sigma(h)(u) = 0$, then also $\sigma^{\varepsilon}(h)(u) = 0$.
This means that any history executable using $\sigma^{\varepsilon}$ is also executable using $\sigma$, i.e. $\mathcal{H}(\sigma^{\varepsilon}) \subseteq \mathcal{H}(\sigma)$.

Now, knowing that $\aval_h(\sigma)$ is defined if $\aval_h(\sigma^{\varepsilon})$ is defined, we prove the following:

\begin{align}
\aval_h(\sigma^{\varepsilon})
  &\geq \aval_h(\sigma) - \epsilon\label{eq1:first} \\
  &\geq val_{\last(h)}(\sigma_h) - \epsilon\label{eq1:second} \\
  &\geq  val(\sigma) - \epsilon\label{eq1:third}~.
\end{align}

\begin{itemize}

	\item The inequality \eqref{eq1:third} directly follows from Proposition~\ref{prop:optimal} as $h \in \histories(\sigma)$.

	\item The inequality \eqref{eq1:second} clearly holds as forcing the attacker to attack immediately cannot decrease the value of the game.

	\item To prove the first inequality \eqref{eq1:first}, we have to analyze the impact of the rounding in the definition of $\sigma^{\varepsilon}$.

Denote by $R[\iota, h, t, k]$ the probability of reaching $t \in T$ from $last(h)$, $h \in \mathcal{H}$, in up to $k$ steps using the strategy $\iota_h$.

We prove by induction on $k$ that for all $h \in \mathcal{H}$, $t \in T$, and $k \in \mathbb{N}$ we have that

\begin{align}
R[\sigma, h, t, k] - R[\sigma^{\varepsilon}, h, t, k] \leq k|U|s~.\nonumber
\end{align}

The base case ($k = 1$) directly follows from Claim~A for all $u\in U$ and the fact that $R[\iota, h, t, 1] = \iota(h)(t)$ for every defender's strategy $\iota$.

Let us denote the difference $R[\sigma, h, t, k] - R[\sigma^{\varepsilon}, h, t, k]$ by $\Delta$.
For $k \geq 2$, we have
\begin{eqnarray}
\Delta
  &=& \sigma(h)(t) + \sum_{u\in U \smallsetminus \{t\}}\sigma(h)(u)\cdot R[\sigma, hu, t, k - 1] - \label{eq1b:first} \\
  & &\hspace{-11.2pt} -~\sigma^{\varepsilon}(h)(t) - \sum_{u\in U \smallsetminus \{t\}}\sigma^{\varepsilon}(h)(u)\cdot R[\sigma^{\varepsilon}, hu, t, k - 1]	\nonumber \\
  &\leq& s + \sum_{u\in U \smallsetminus \{t\}}(R[\sigma, hu, t, k - 1]\cdot (\sigma(h)(u) - \sigma^{\varepsilon}(h)(u)) \label{eq1b:second} +	\\
  & &\hspace{47.5pt} +~\sigma^{\varepsilon}(h)(u)\cdot (R[\sigma, hu, t, k - 1] - R[\sigma^{\varepsilon}, hu, t, k - 1])) \nonumber	 \\
  &\leq& s + \sum_{u\in U \smallsetminus \{t\}} R[\sigma, hu, t, k - 1]\cdot s + \label{eq1b:third}	\\	& &\hspace{7.6pt} +\hspace{-2pt}~\sum_{u\in U \smallsetminus \{t\}} \sigma^{\varepsilon}(h)(u)\cdot (k - 1)|U|s \nonumber	\\
	  &\leq& s + (|U|-1)s + (k - 1)|U|s	\label{eq1b:fourth} \\
  &=& k|U|s\nonumber~.
\end{eqnarray}

\begin{itemize}
	\item The equality \eqref{eq1b:first} follows from the definition of $R[\iota, h, t, k]$ as $R[\iota, h, t, k] = \iota(h)(t) + \sum_{u \in U \smallsetminus \{t\}} \iota(h)(u) \cdot R[\iota, hu, t, k - 1]$ for all $k \geq 2$. 	\item The equality \eqref{eq1b:second} is just an application of Claim~A and of the formula $ab - a'b' = b(a -  a') + a'(b - b')$.
	\item The inequality \eqref{eq1b:third} follows from Claim~A and from the induction hypothesis.
	\item The inequality \eqref{eq1b:fourth} holds because $R[\sigma, hu, t, k - 1] \leq 1$ and $\sum_{u\in U \smallsetminus \{t\}} \sigma^{\varepsilon}(h)(u) \leq 1$.
\end{itemize}
So we have that $R[\sigma, h, t, d(t)] - R[\sigma^{\varepsilon}, h, t, d(t)] \leq d|U|s$.
However, note that

$$\aval_h(\iota) = \inf_{t \in T} R[\iota, h, t, d(t)]$$

and therefore

$$\aval_h(\sigma) - \aval_h(\sigma^{\varepsilon}) \leq \dmax|U|s \leq \epsilon~.$$

\end{itemize}

\end{proof}

\subsection{Formal proof of Theorem~\ref{thm:eps-opt}}

\noindent

\noindent
In order to make lengthy computations more succinct, we use the following shorthand notation:
Given a~characteristic $c=(\croot,\cfirst,\commit)\in \Char$, we define:
\begin{itemize}
\item $c(0,u)=1$ if $u=c_{\croot}$, and $c(0,u)=0$ for all $u\not = c_{\croot}$.
\item $c(1,u)=c_{\cfirst}(u)$ for all $u\in U$.
\item $c(k,u)=c_{\commit}(k,u)$ for all $2\leq k\leq \dmax$ and $u\in T$.
\end{itemize}
Also, we use functional notation to denote vectors of characteristics (i.e. successors). That is we represent each $(c^v)_{v\in U}\in \Char^U$ as a function $\zeta:U\rightarrow \Char$ where $\zeta(v)=c^v$ for every $v\in U$.

Let us formally define the notion of successor of a characteristic. We say that $\zeta:U\rightarrow \Char$ is a~{\em successor} of $c\in \Char$ if for every $v\in U$ holds $\zeta(v)(0,v)=1$, and for every $u\in T$ and $2\leq k\leq \dmax$ holds
\[
c(k,u)=c(1,u)+\sum_{v\not = u} c(1,v)\cdot \zeta(v)(k-1,u)
\]
A set of characteristics $B\subseteq \mathcal{C}$ is {\em closed} if there is at least one $c\in B$ satisfying $c(0,\su)=1$, and every $c\in B$ has a successor $\zeta:U\rightarrow B$. 

Given a defender's strategy $\sigma$ and a~history $h$, we denote by $c[\sigma,h]$ the characteristic defined as follows: $c[\sigma,h](\last(h))=1$, and $c[\sigma,h](1,u)=\sigma(h)(u)$ for every $u\in U$, and for every $2\leq k\leq \dmax$ and $u\in T$ we define
 \[
 c[\sigma,h](k,u)=\calP^{\sigma}(\runs(\{hh'\mid \last(h')=u, 1\leq |h'|\leq k\})\mid \runs(h))
 \]
 (Intuitively, for $k\geq 1$, the value $c[\sigma,h](k,u)$ is the probability of reaching $u$ in at least one and at most $k$ steps starting with the history $h$ and using $\sigma$.) Denote by $\Char[\sigma]$ the set of all characteristics $c[\sigma,h]$ where $h\in \histories(\sigma)$.

\begin{lemma}
Given a defender's strategy $\sigma$, the set $\Char[\sigma]$ is closed.
\end{lemma}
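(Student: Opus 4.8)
The plan is to verify directly the two defining conditions of a closed set of characteristics. First, since $\histories(\sigma)$ contains the one-node history $\su$ (the defender always starts there), the characteristic $c[\sigma,\su]$ belongs to $\Char[\sigma]$, and by the definition of $c[\sigma,h]$ we have $c[\sigma,\su](0,\su)=1$. So the ``root'' condition is immediate. The bulk of the work is the second condition: every $c[\sigma,h]\in\Char[\sigma]$ must have a successor $\zeta:U\to\Char[\sigma]$.

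For a fixed $h\in\histories(\sigma)$, the natural candidate is to set $\zeta(v)=c[\sigma,hv]$ for every $v\in U$. There is a small subtlety: $hv$ need not lie in $\histories(\sigma)$ when $\sigma(h)(v)=0$, so strictly speaking $\zeta(v)$ should be defined for \emph{all} $v\in U$, not just those in the support of $\sigma(h)$. For $v$ with $\sigma(h)(v)=0$ one can take $\zeta(v)$ to be any characteristic with $\zeta(v)(0,v)=1$ (e.g.\ derived from an arbitrary extension of $\sigma$ past $hv$, or simply any fixed characteristic rooted at $v$); since such $v$ contribute with coefficient $c[\sigma,h](1,v)=\sigma(h)(v)=0$ to the successor equation, their choice is irrelevant. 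I would note this once and then work with $v$ in the support. Then $\zeta(v)(0,v)=c[\sigma,hv](\last(hv))=c[\sigma,hv](v)=1$ holds by definition, so the first requirement on a successor is met.

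The remaining task is the key identity: for every $u\in T$ and $2\le k\le\dmax$,
\[
c[\sigma,h](k,u)\;=\;c[\sigma,h](1,u)+\sum_{v\neq u}c[\sigma,h](1,v)\cdot \zeta(v)(k-1,u).
\]
This is just a first-step (one-step unrolling) decomposition of the event ``reach $u$ in at least one and at most $k$ steps from $\last(h)$, using $\sigma_h$''. Condition on the first move $v$ chosen with probability $\sigma(h)(v)=c[\sigma,h](1,v)$. If $v=u$, the attack on $u$ is already defended, contributing $c[\sigma,h](1,u)$. If $v\neq u$, then after this move the history is $hv$, and the conditional probability of reaching $u$ within the remaining $k-1$ steps is exactly $c[\sigma,hv](k-1,u)=\zeta(v)(k-1,u)$ — here I would invoke the definition of $c[\sigma,h](k,u)$ as a conditional probability $\calP^{\sigma}(\runs(\{hh'\mid\last(h')=u,\,1\le|h'|\le k\})\mid\runs(h))$ together with the standard decomposition of $\calP^\sigma$ conditioned on the one-step cylinder $\runs(hv)$. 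Summing over $v$ gives the displayed equation. The case $k=1$ is consistent with the shorthand $c(1,u)=c_{\cfirst}(u)$ but is not needed here since successors only constrain $k\ge 2$.

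The main (mild) obstacle is bookkeeping rather than mathematics: being careful that $c[\sigma,h]$ is a genuine characteristic in the formal sense (its $\commit$-component takes values in $[0,1]$, which is clear since those entries are probabilities), that histories outside $\histories(\sigma)$ are handled gracefully in the definition of $\zeta$, and that the conditional-probability manipulation is justified (all conditioning cylinders $\runs(h)$, $\runs(hv)$ have positive measure when $h\in\histories(\sigma)$ and $v\in\supp(\sigma(h))$). Once these points are in place, the lemma follows.
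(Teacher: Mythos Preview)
Your proof is correct and takes the same approach as the paper: set $\zeta(v)=c[\sigma,hv]$ and verify the successor identity by a one-step decomposition of the reach probability. You are actually more careful than the paper in flagging the case $\sigma(h)(v)=0$ (the paper simply asserts ``Apparently, $\xi(v)\in\Char[\sigma]$'' without comment); note, though, that your suggested fix of picking an arbitrary characteristic rooted at $v$ need not land in $\Char[\sigma]$, so strictly speaking the successor would fail to map $U\to\Char[\sigma]$ --- a harmless formal wrinkle (such terms carry weight zero in the successor equation and are never used in the induced finite-memory strategy) that both you and the paper leave open.
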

\begin{proof}
By definition, $c[\sigma,\hat{u}](0,\hat{u})=1$. Now consider $c[\sigma,h]\in \Char[\sigma]$. Let $\xi(v)=c[\sigma,hv]$. Apparently, $\xi(v)\in \Char[\sigma]$ so it suffices to show that $\xi$ is a successor of $c[\sigma,h]$. By definition,
\[
\xi(v)(0,v)=c[\sigma,hv](0,v)=1
\]
and, clearly,
\begin{eqnarray*}
c[\sigma,h](k,u) & = & c[\sigma,h](1,u)+\sum_{v\not = u} c[\sigma,h](1,v)\cdot c[\sigma,hv](k-1,u)\\
& = & c[\sigma,h](1,u)+\sum_{v\not = u} c[\sigma,h](1,v)\cdot \xi(v)(k-1,u)\\
\end{eqnarray*}
which means that $\xi$ is a successor of $c[\sigma,h]$.
\end{proof}
\noindent
Let $C$ be a finite closed subset of $\Char$. We say that a finite-memory strategy $\sigma=(M,N,m_0,\xi)$ is {\em consistent} with $C$ if
\begin{itemize}
\item $M=C$,
\item for every $c\in C$ the function $K(c)$ defined by 
\[
K(c)(u)=N(c,u)\text{ \quad for all } u\in U
\]
is a successor of $c$.

\item $m_0=\hat{c}$ for some $\hat{c}\in C$ satisfying $\hat{c}(0,\hat{u})=1$,
\item $\xi(c,u)=c(1,u)$ for all $u\in U$.
\end{itemize}
\begin{proposition}\label{prop:closed-set-strat}
Let $C$ be a finite closed set of characteristics and assume that $\sigma$ is consistent with $C$. Then $\val(\sigma)\geq \min_{c\in C} \val(c)$.
\end{proposition}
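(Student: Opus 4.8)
The goal is to show that a finite-memory strategy $\sigma$ consistent with a finite closed set $C$ of characteristics guarantees value at least $\min_{c \in C} \val(c)$. Recall that $\val(c) = \min_{u \in T} c(d(u),u)$, so it suffices to prove that for every attack $\pi$ with $\pi(h) \ne \bot$, the immediate attack value $\aval_h(\sigma,\pi(h))$ is at least $\min_{c\in C}\val(c)$; indeed, by the identity $\calP^\sigma(\Defended[\pi]) = \sum_{h:\pi(h)\ne\bot}\calP^\sigma(h)\cdot\aval_h(\sigma,\pi(h))$ (recalled in Appendix~\ref{app-defs}), this immediately yields $\val(\sigma) = \inf_\pi \calP^\sigma(\Defended[\pi]) \ge \min_{c\in C}\val(c)$. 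So the whole proof reduces to controlling $\aval_h(\sigma,u)$, which by definition equals the probability of reaching $u$ from $\last(h)$ in between $1$ and $d(u)$ steps under $\sigma_h$.

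\textbf{Key steps.} First I would observe that, because $\sigma$ is a finite-memory strategy with memory set $M = C$, whenever the play follows history $h$ the current memory element is some well-defined $c^h = N(m_0,h) \in C$; moreover $c^h(0,\last(h)) = 1$ since the successor relation forces $K(c)(v)(0,v)=1$ and $m_0 = \hat c$ satisfies $\hat c(0,\su)=1$. The central claim is then: for every history $h\in\histories(\sigma)$, every $u\in T$, and every $1 \le k \le \dmax$,
\[
\calP^{\sigma}\big(\runs(\{hh' \mid \last(h')=u,\ 1\le |h'|\le k\}) \mid \runs(h)\big) \;=\; c^h(k,u).
\]
I would prove this by induction on $k$. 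The base case $k=1$ is exactly $\sigma(h)(u) = \xi(c^h,\last(h))(u) = c^h(1,u)$, which holds by the consistency requirement $\xi(c,u)=c(1,u)$. For the inductive step, I would condition on the first move $v$ chosen from $\last(h)$ (probability $c^h(1,v)$): the event "reach $u$ in between $1$ and $k$ steps" splits into "$v=u$" (contributing $c^h(1,u)$) or "$v\ne u$ and then reach $u$ from $hv$ in between $1$ and $k-1$ steps". After the step to $v$, the memory element becomes $K(c^h)(v) = c^{hv}$, so by the induction hypothesis applied to $hv$ the latter contributes $\sum_{v\ne u} c^h(1,v)\cdot c^{hv}(k-1,u)$. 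But $K(c^h)$ is, by hypothesis, a successor of $c^h$, so by the successor equation $c^h(k,u) = c^h(1,u) + \sum_{v\ne u} c^h(1,v)\cdot c^{hv}(k-1,u)$, which closes the induction.

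\textbf{Wrapping up and the main obstacle.} Instantiating the claim at $k = d(u)$ gives $\aval_h(\sigma,u) = c^h(d(u),u) \ge \min_{w\in T} c^h(d(w),w) = \val(c^h) \ge \min_{c\in C}\val(c)$, and the reduction above finishes the proof. The routine-but-slightly-delicate point — the only real obstacle — is getting the measure-theoretic bookkeeping exactly right: the events in the induction are defined on the probability space of runs conditioned on $\runs(h)$, and one must verify that conditioning on the first step genuinely decomposes the conditional probability as claimed (a standard disjoint-union/total-probability argument over the successors $v$ of $\last(h)$, using that $\sigma_{hv}$ governs the continuation), and that histories $hv$ with $\sigma(h)(v)=0$ contribute nothing, so "$h\in\histories(\sigma)$" is preserved. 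None of this is hard, but it is where care is needed; the algebraic heart is entirely captured by the successor identity.
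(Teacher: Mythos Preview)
Your proposal is correct and follows essentially the same approach as the paper: you introduce $c^h = N(m_0,h)$ (the paper writes $N(\hat c,h)$), prove by induction on $k$ that the conditional probability of reaching $u$ within $k$ steps equals $c^h(k,u)$ using the successor identity, and then instantiate at $k=d(u)$ to bound $\aval_h(\sigma,u)$ from below by $\min_{c\in C}\val(c)$ and feed this into the decomposition of $\calP^\sigma(\Defended[\pi])$. The paper phrases the inductive claim as ``$c[\sigma,h]=N(\hat c,h)$'' (so that $c[\sigma,h]\in C$), but the content and the inductive argument are identical to yours.
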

\begin{proof}
Let us first prove that $c[\sigma,h]\in C$ for every history $h\in \histories(\sigma)$. Let us fix a history $h$.
We prove that $c[\sigma,h]=N(\hat{c},h)$, i.e. that $c[\sigma,h](k,u)=N(\hat{c},h)(k,u)$ for all $0\leq k\leq \dmax$.
It is easy to show that $N(\hat{c},h)(0,\last(h))=1$. For $k>0$ we proceed by induction on $k$.
Immediately from definitions we obtain that for every $u\in U$
\[
c[\sigma,h](1,u)=\sigma(h)(u)=\xi(N(\hat{c},h),u)=N(\hat{c},h)(1,u)
\]
Now consider $2\leq k\leq \dmax$. We have
\begin{eqnarray*}
c[\sigma,h](k,u) & = & c[\sigma,h](1,v)+\sum_{v\not = u} c[\sigma,h](1,v)\cdot c[\sigma,hv](k-1,u)\\
& = & N(\hat{c},h)(1,u)+\sum_{v=u} N(\hat{c},h)(1,v)\cdot N(\hat{c},hv)(k-1,u)\\
& = & N(\hat{c},h)(1,u)+\sum_{v=u} N(\hat{c},h)(1,v)\cdot N(N(\hat{c},h),v')(k-1,u)\\
& = & N(\hat{c},h)(1,u)+\sum_{v=u} N(\hat{c},h)(1,v)\cdot K(N(\hat{c},h))(v')(k-1,u)\\
& = & N(\hat{c},h)(k,u)
\end{eqnarray*}
Here the second equality follows by induction, the last equality follows from the fact that $K(N(\hat{c},h))$ is a successor of $N(\hat{c},h)$.
This proves that $c[\sigma,h]\in C$ for every history $h\in \histories(\sigma)$.

Now since every defender's strategy $\sigma$ satisfies
\[
\aval_{h}(\sigma,u)=c[\sigma,h](d(u),u)
\]
we obtain
\begin{eqnarray*}
\val(\sigma) & = & \inf_{\pi} \calP^{\sigma}_{\su}(\Defended[\pi]) \\
 & = & \inf_{\pi}\sum_{h\in \histories(\sigma), \pi(h)\in U} \calP(\runs(h))\cdot \aval_h(\sigma,\pi(h)) \\
 & = & \inf_{\pi}\sum_{h\in \histories(\sigma), \pi(h)\in U} \calP(\runs(h))\cdot 
 c[\sigma,h](d(\pi(h)),\pi(h)) \\
  & \geq  & \inf_{\pi}\sum_{h\in \histories(\sigma), \pi(h)\in U} \calP(\runs(h))\cdot \min_{c\in C}\val(c) \\
  & = & \min_{c\in C}\val(c)
\end{eqnarray*}
\end{proof}

\noindent
Let $\Char_{\varepsilon}$ be the set of all characteristics $c$ such that $c(k,u)$ is an integer multiple of $s^k$, here $s=\lceil |U| d/\varepsilon \rceil^{-1}$, for every $1\leq k\leq \dmax$ and every $u\in U$.
\begin{lemma}\label{lem:closed-eps}
The set $\Char_{\varepsilon}$ contains a (finite) closed subset $C$ such that $\min_{c\in C} \val(c)\geq \val-\varepsilon$.
\end{lemma}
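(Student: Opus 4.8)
The plan is to read off the required set directly from the discretized $\varepsilon$-optimal strategy produced by Proposition~\ref{cor:eps-opt-discr}. First I would fix such a strategy $\sigma^{\varepsilon}$; recall that, with $s=\lceil |U|\dmax/\varepsilon\rceil^{-1}$ (the rounding scale that appears both in that proposition and in the definition of $\Char_{\varepsilon}$), every value $\sigma^{\varepsilon}(h)(u)$ is an integer multiple of $s$, and $\aval_h(\sigma^{\varepsilon},u)\geq\val-\varepsilon$ for every $h\in\histories(\sigma^{\varepsilon})$ and every $u$. I then set $C:=\Char[\sigma^{\varepsilon}]$, the set of all characteristics $c[\sigma^{\varepsilon},h]$ with $h\in\histories(\sigma^{\varepsilon})$. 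By the lemma above asserting that $\Char[\sigma]$ is closed for every defender's strategy $\sigma$, the set $C$ is already closed, so it remains only to verify that (i)~$C\subseteq\Char_{\varepsilon}$, (ii)~$C$ is finite, and (iii)~$\min_{c\in C}\val(c)\geq\val-\varepsilon$.

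Item (iii) is essentially built in: for any $h$ one has $\val(c[\sigma^{\varepsilon},h])=\min_{u\in T}c[\sigma^{\varepsilon},h](d(u),u)$, and by definition $c[\sigma^{\varepsilon},h](d(u),u)$ is precisely the immediate attack value $\aval_h(\sigma^{\varepsilon},u)$; hence (iii) is immediate from the second guarantee of Proposition~\ref{cor:eps-opt-discr}, applied to the (finitely many) targets $u\in T$. For (ii), once (i) is known it is enough to remark that $\Char_{\varepsilon}$ is itself finite: its root component ranges over the finite set $U$, and each coordinate $c(k,u)$ is confined to the finite set $\{0,s^{k},2s^{k},\dots,1\}$ (note that $1/s^{k}=(1/s)^{k}$ is a positive integer because $1/s$ is). Thus $C\subseteq\Char_{\varepsilon}$ forces $C$ to be finite.

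The substance of the argument is (i), which I would prove by induction on $k$: for every $h\in\histories(\sigma^{\varepsilon})$, every $u$, and every $1\leq k\leq\dmax$, the number $c[\sigma^{\varepsilon},h](k,u)$ is an integer multiple of $s^{k}$. The base case $k=1$ is immediate, since $c[\sigma^{\varepsilon},h](1,u)=\sigma^{\varepsilon}(h)(u)\in s\Zset$ by construction of $\sigma^{\varepsilon}$. For $k\geq 2$ I would use the successor identity established in the proof of the closedness lemma,
\[
c[\sigma^{\varepsilon},h](k,u)=c[\sigma^{\varepsilon},h](1,u)+\sum_{v\neq u}c[\sigma^{\varepsilon},h](1,v)\cdot c[\sigma^{\varepsilon},hv](k-1,u),
\]
together with the arithmetic fact that, since $1/s$ is an integer, $s=(1/s)^{k-1}\cdot s^{k}$ is itself an integer multiple of $s^{k}$; hence $s\Zset\subseteq s^{k}\Zset$, and the product of an element of $s\Zset$ with an element of $s^{k-1}\Zset$ (the latter provided by the induction hypothesis applied to the history $hv$) lies in $s^{k}\Zset$. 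Every term on the right-hand side therefore lies in $s^{k}\Zset$, and so does their sum, which completes the induction and hence the lemma. I expect this divisibility bookkeeping to be the only delicate point: it is exactly where the special form $s=\lceil\cdot\rceil^{-1}$ of the rounding scale is needed, so that each additional step multiplies the granularity by precisely one more factor of $s$; a rounding scale not of the form $1/\mathrm{integer}$ would already break the inclusion $s\Zset\subseteq s^{k}\Zset$.
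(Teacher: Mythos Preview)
Your proposal is correct and follows exactly the paper's approach: take $C=\Char[\sigma^{\varepsilon}]$ with $\sigma^{\varepsilon}$ from Proposition~\ref{cor:eps-opt-discr}, and observe that it is a closed subset of $\Char_{\varepsilon}$ with the required value bound. The paper states this in one line, while you have spelled out the inductive verification of $C\subseteq\Char_{\varepsilon}$ (using that $1/s$ is an integer so $s\Zset\subseteq s^{k}\Zset$) and the finiteness and value-bound checks that the paper leaves implicit.
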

\begin{proof}
It suffices to consider $\sigma^{\varepsilon}$ of Proposition~\ref{cor:eps-opt-discr}. Then $\Char[\sigma^{\varepsilon}]\subseteq \Char_{\varepsilon}$ is a closed subset.
\end{proof}
\noindent
Given any closed subset $C$ of $\Char_{\varepsilon}$ satisfying $\min_{c\in C} \val(c)\geq \val-\varepsilon$, we obtain, via Proposition~\ref{prop:closed-set-strat}, a finite-memory $\varepsilon$-optimal strategy. So it remains to give an algorithm for computing such a closed subset $C$.

\subsection*{The Algorithm}
The following procedure computes a closed subset $C$ of $\Char_{\varepsilon}$ which maximizes $\min_{c\in C}\val(c)$ among all closed subsets of $\Char_{\varepsilon}$ (so in particular, satisfies the desired bound $\min_{c\in C}\val(c)\geq \val-\varepsilon$).

Let $c_1,\ldots,c_n$ be all characteristics of $\Char_{\varepsilon}$ ordered in such a way that for arbitrary $c_i, c_j$ we have that $\val(c_i)\leq \val(c_j)$ implies $i\leq j$.
The following procedure maintains the invariant that $A=\{c_1,\ldots,c_k\}$ for some $k\geq 0$ and computes the desired closed set $C$ :
\begin{itemize}
\item[1.] Initialize $A:=\{c_1\}$.
\item[2.] Compute a closed subset of $A$, or indicate that $A$ does not contain a closed subset as follows:
	\begin{itemize}
	\item[a.] Initialize $B:=A$,
	\item[b.] compute $B'$ as the set of all $c\in B$ that have successors in $B$, 
	\item[c.] depending on $B'$ do:
		\begin{itemize}
		\item if either $B'=\emptyset$, or there is no $c\in B'$ such that $c(0,\hat{u})=1$, then indicate that there is no closed subset of $A$ (and proceed to 3.),
		\item else, if $B=B'$, then return $B$ as a closed subset of $A$ (and proceed to 3.),
		\item else assign $B:=B'$ and go to b.
		\end{itemize}
	\end{itemize}
\item[3.] If $A=\{c_1,\ldots,c_k\}$ does not contain a closed subset, then add $c_{k+1}$ to $A$ and go to 2., else return the closed subset as a result.
\end{itemize}
\subsubsection*{Correctness}
In step 2., the algorithm computes the greatest closed subset of $A$ using a straightforward iterative algorithm. As the characteristics are added to $A$ in the order of non-decreasing value and there exists a closed subset $C$ of $\Char_{\varepsilon}$ satisfying $\min_{c\in C} \val(c)\geq \val-\varepsilon$ (due to Lemma~\ref{lem:closed-eps}), a subset $C'$ satisfying $\min_{c\in C'} \val(c)\geq \val-\varepsilon$ is computed when $C\subseteq A$ for the first time.

\subsubsection*{Complexity}
Let us denote by $\Theta$ the size of $\Char_{\varepsilon}$. It is straightforward to show that $\Theta\in \left(\frac{|U|\dmax}{\varepsilon}\right)^{\calO(|U|\dmax^2)}$. Now the computation in step 2. b. takes time in $\Theta^{\calO(|U|)}$ (for every characteristic of $B$ one has to check all possible successors, i.e. vectors of the form $\zeta:U\rightarrow B$). The whole algorithm iterates at most $\Theta$ times through 1. -- 3. (a characteristic is added to $A$ in every iteration except the last one). So the total complexity is at most 
\[
\Theta^{\calO(|U|)}= \left(\frac{|U|\dmax}{\varepsilon}\right)^{\calO(|U|^2\dmax^2)}
\]

\section{A bound on the Stackelberg value}
\label{app-val-bound}

Using the arguments of the proof of Proposition~\ref{prop:optimal}, the
following proposition can be shown. 

\begin{proposition}
\label{prop-abafy}
Let $\game = (U,T,\su,E,d)$ be a patrolling problem. Further, let $\sigma$ be an optimal defender's
strategy and $h \in \shistories$. Then $\val_{\last(h)}(\sigma_h) =
\val(\sigma) = \val$.
\end{proposition}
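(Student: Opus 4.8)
The plan is to prove the two equalities in the statement separately. The identity $\val(\sigma)=\val$ is nothing but the definition of an optimal strategy, so the content lies in proving $\val_{\last(h)}(\sigma_h)=\val$, which I would obtain by sandwiching $\val_{\last(h)}(\sigma_h)$ between $\val$ from above and from below, re-using Proposition~\ref{prop:optimal} together with the auxiliary facts established inside its proof.

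For the upper bound $\val_{\last(h)}(\sigma_h)\le\val$ I would not need anything new: the proof of Proposition~\ref{prop:optimal} already establishes that for an optimal $\sigma$ one has $\val_{\last(h)}(\sigma_h)\le\val$ for \emph{every} $h\in\shistories$. Recall the mechanism: one first argues (by passing to the variant of $\game$ whose initial target achieves $\max_{u\in T}\val_u$, and using that every reachable target is visited almost surely) that $\val=\val_{\su}=\max_{u\in T}\val_u$ holds unconditionally; then for $\last(h)\in T$ this is immediate, since $\val_{\last(h)}(\sigma_h)\le\val_{\last(h)}\le\max_{u\in T}\val_u=\val$, and for $\last(h)\notin T$ one uses that $\sigma_h$ reaches a target almost surely and lets the attacker wait until the first such visit, at a history $g$ ending in a target $t$, then play an $\varepsilon$-optimal continuation, which contributes at most $\val_t(\sigma_{hg})+\varepsilon\le\val+\varepsilon$ on those runs, so that $\val_{\last(h)}(\sigma_h)\le\val$. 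I would simply cite this.

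For the lower bound $\val_{\last(h)}(\sigma_h)\ge\val$ I would use Proposition~\ref{prop:optimal} directly: since $\sigma$ is optimal, (\ref{eq:atval-opt}) gives $\aval_g(\sigma,u)\ge\val$ for all $g\in\histories(\sigma)$ and all $u\in T$. The one bookkeeping point to get right is that $\sigma_h$ inherits this property: for every history $g$ relevant to $\sigma_h$ (so $g$ starts in $\last(h)$) and every $u\in T$ one has $\aval_g(\sigma_h,u)=\aval_{h'}(\sigma,u)$, where $h'$ is the history extending $h$ that corresponds to $g$ under the shift, and $h'\in\histories(\sigma)$ precisely because $h\in\shistories$. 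Granting that, for an arbitrary attacker strategy $\rho$ I would split the runs starting in $\last(h)$ according to whether $\rho$ ever attacks along them and, if so, at which first history, use the identity $\calP^{\sigma_h}_{\last(h)}(\Defended_{\last(h)}[\rho])=\calP^{\sigma_h}_{\last(h)}(\text{no attack})+\sum_{g:\rho(g)\neq\bot}\calP^{\sigma_h}_{\last(h)}(\runs(g))\cdot\aval_g(\sigma_h,\rho(g))$, bound every $\aval_g(\sigma_h,\rho(g))$ below by $\val$ and the no-attack term below by $\val$ as well (using $\val\le1$), and observe that the weight $\calP^{\sigma_h}_{\last(h)}(\text{no attack})$ and the weights $\calP^{\sigma_h}_{\last(h)}(\runs(g))$ sum to $1$; hence $\calP^{\sigma_h}_{\last(h)}(\Defended_{\last(h)}[\rho])\ge\val$, and taking the infimum over $\rho$ yields $\val_{\last(h)}(\sigma_h)\ge\val$. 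Combining the two inequalities with $\val(\sigma)=\val$ finishes the proof.

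I do not expect a real obstacle here — the proposition is essentially a repackaging of Proposition~\ref{prop:optimal}. The only places that need care are making the shift $\sigma\mapsto\sigma_h$ and the correspondence between the relevant histories of $\sigma_h$ and those of $\sigma$ extending $h$ fully precise, and importing — rather than re-deriving — the two non-routine ingredients of the proof of Proposition~\ref{prop:optimal}, namely that $\val=\max_{u\in T}\val_u$ and that an optimal strategy reaches every target almost surely (needed for the waiting argument when $\last(h)$ is not itself a target; the degenerate situation $\val=0$ requires only a trivial additional remark).
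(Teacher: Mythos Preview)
Your proposal is correct and aligns with the paper's own approach: the paper simply states that Proposition~\ref{prop-abafy} follows ``using the arguments of the proof of Proposition~\ref{prop:optimal}'', and what you have written is precisely the unpacking of those arguments --- citing the inequality $\val_{\last(h)}(\sigma_h)\le\val$ already derived inside that proof for the upper bound, and invoking~(\ref{eq:atval-opt}) together with the standard decomposition of $\calP^{\sigma_h}_{\last(h)}(\Defended_{\last(h)}[\rho])$ for the lower bound. Your handling of the bookkeeping (the shift $\sigma\mapsto\sigma_h$, the correspondence of relevant histories, the no-attack term, and the degenerate case $\val=0$) is sound; note in particular that your observation that the no-attack runs must be counted with weight~$1$ actually corrects a small omission in the displayed identity for $\calP^{\sigma}(\Defended[\pi])$ in Appendix~\ref{app-defs}.
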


\noindent
Note that Proposition~\ref{prop-abafy} cannot be generalized to non-optimal
strategies, i.e., for a given non-optimal $\sigma$ and $h \in \shistories$
we do \emph{not} necessarily have that $\val_{\last(h)}(\sigma_h) =
\val(\sigma)$ (a counterexample is easy to find).

\begin{reftheorem}{thm-upper}
  For every patrolling problem $\game = (U,T,\su,E,d)$ 
  such that $T = U$,
  we have that $\val \leq  \left(\sum_{k \in \supp(S)} \frac{S(k)}{k}\right)^{-1}$
  where $S$ is the attack signature of $\game$.
\end{reftheorem}
\begin{proof}
  Let $\sigma$ be an optimal defender's strategy. For all 
  $h \in \shistories$ and $i \in \Nset_0$,
  let $\Node_{h,i} : \runs(h) \rightarrow U$ be a function which to every
  run $hw \in \runs(h)$ assigns the node $w_i$.
  Further, let  $\mu_{h,i} \in \Delta(U)$ be a distribution defined by 
  $\mu_{h,i}(u) = \calP^\sigma (\Node_{h,i} {=} u)/\calP^\sigma(\runs(h))$. 

  First, we show that for all $u \in U$ and $i \in \Nset_0$ we have 
  that $\sum_{j = i}^{i+d(u)-1} \mu_{\su,j}(u) \geq \val$. Let us fix some 
  $u \in U$ and $i \in \Nset_0$, and let $\histories_{i}(\sigma)$ be the
  set of all $h \in \shistories$ such that $|h| = i$.
  For every $h \in \histories_{i}(\sigma)$, consider an attacker's 
  strategy $\pi$ such that $\pi(h) = u$. Due to 
  Proposition~\ref{prop-abafy}, we have that
  $\val_{\last(h)}(\sigma_h) = \val$, which means that
  $\calP^\sigma_{\last(h)}(\Defended_{\last(h)}[\pi_h])$ is at least $\val$.
  Obviously, 
  $\calP^\sigma_{\last(h)}(\Defended_{\last(h)}[\pi_h]) \ \leq \ 
   \sum_{j = 0}^{d(u)-1} \mu_{h,j}(u)$. Thus, we obtain
  $\sum_{j = 0}^{d(u)-1} \mu_{h,j}(u) \geq \val$. Now it suffices to
  realize 
  \[
    \sum_{j = i}^{i+d(u)-1} \mu_{\su,j}(n) \quad = \quad 
    \sum_{h \in \histories_{i}(\sigma)} \left (
    \calP^{\sigma}(\runs(h)) \cdot \sum_{j = 0}^{d(u)-1} \mu_{h,j}(u) \right )
    \quad \geq \quad  
    \val \cdot \sum_{h \in \histories_{i}(\sigma)} \calP^{\sigma}(\runs(h))
    \quad = \quad \val \,.
  \]

  Now we can continue with the main proof. Let $\ell = \Pi_{k \in \supp(S)}\, k$.
  Since $\sum_{j = i}^{i+d(u)-1} \mu_{\su,j}(u) \geq \val$ for all $u \in U$
  and $i \in \Nset_0$ (see above), we immediately obtain
  $\sum_{j = 0}^{\ell-1} \mu_{\su,j}(u) \geq \val \cdot \frac{\ell}{d(u)}$.
  Hence,
  \[
    \ell \quad = \quad \sum_{j = 0}^{\ell-1} \sum_{u \in U} \mu_{\su,j}(u)
    \quad = \quad
    \sum_{u \in U} \sum_{j = 0}^{\ell-1} \mu_{\su,j}(u) 
    \quad \geq \quad \sum_{u \in U} \val \cdot \frac{\ell}{d(u)} 
    \quad = \quad \val \cdot \sum_{k \in \supp(S)} \frac{S(k) \cdot \ell}{k}
  \]
  Thus, we get 
  $\val \leq  \left(\sum_{k \in \supp(S)} \frac{S(k)}{k}\right)^{-1}$
  as desired.
\end{proof}

\section{Solving patrolling problems with a fully connected environment}
\label{app-algorithm}

Let $\game = (U,T\su,E,d)$ be a patrolling problem where
$T = U$, $E = U \times U$, and  let $S$ be the signature of $\game$.  
We start by defining the semantics for the
``strategy expressions'' introduced in 
Section~\ref{sec-modular-strategies} precisely.

\begin{itemize}
	\item  $\Circle(\U{i,N},M,L)$ denotes the $c$-modular strategy where $c = L \cdot (N/M)$ such that  
	the distribution $\mu_\ell$, where $0 \leq \ell < c$, selects uniformly
	among the elements of $\U{i+\hat{\ell} \cdot M, M}$ where $\hat{\ell} = \ell~\mathrm{mod}~(N/M)$. 
        In other words, $\Circle(\U{i,N},M,L)$ is a strategy which splits $\U{i,N}$ into pairwise disjoint subsets of size~$M$
	and then ``walks around'' these sets $L$ times (actually, $\Circle(\U{i,N},M,L)$ can also be seen as 
	$(N/M)$-modular strategy, but for technical reasons we prefer to interpret it as a $c$-modular strategy).
	\item if $\theta_1$ and $\theta_2$ denote $c_1$-modular and $c_2$-modular strategies
	with underlying distributions $\mu_0^1,\ldots,\mu_{c_1-1}^1$ and $\mu_0^2,\ldots,\mu_{c_2-1}^2$, respectively,
	then $\theta_1;\theta_2$ denotes the $c_1{+}c_2$-modular strategy with the underlying distributions
	$\mu_0^1,\ldots,\mu_{c_1-1}^1,\mu_0^2,\ldots,\mu_{c_2-1}^2$.
	\item if $\theta_1$ and $\theta_2$ denote $c$-modular strategies
	with underlying distributions $\mu_0^1,\ldots,\mu_{c-1}^1$ and $\mu_0^2,\ldots,\mu_{c-1}^2$, then
	Then $\nu_p[\theta_1,\theta_2]$ denotes the $c$-modular strategy with the underlying distributions
	$\mu_0,\ldots,\mu_{c-1}$, where $\mu_i = (1-\alpha(p)) \cdot \mu_i^1 + \alpha(p) \cdot \mu_i^2$ for all
	$0 \leq i < c$.
\end{itemize}

Now we give a detailed description of the algorithm of Section~\ref{sec-modular-strategies}.
We construct a recursive function 
$\Def$ which inputs a triple $(\U{i,N},D,e)$, where $\U{i,N}$ is the set of nodes to be defended, 
$D$ is the number of steps available for defending $\U{i,N}$, and $e$ is an expression
which represents the ``weight'' of the constructed defending strategy
in the final distribution $\nu$. The procedure outputs a pair $(\theta,V)$ where
$\theta$ is an expression specifying a $D$-modular strategy for $\U{i,N}$, and $V$ is 
an arithmetic expression representing the guaranteed ``coverage'' of the targets in $\U{i,N}$ when using $\theta$ with 
the weight~$e$. 
As a side effect, the function $\Def$ may produce  
equations for the employed variables. The algorithm is invoked by 
$\Def(\U{1,|U|},d,1)$, and the system of equations is initially empty. 
A call $\Def(\U{i,N},D,e)$ is processed as follows:
\begin{itemize}
	\item If $D \mid N$ and $N = k \cdot D$, then 
	$\theta = \Circle(\U{i,N},k,1)$. Observe that every node of $\U{i,N}$ 
	is visited at most once in $D$ steps, and this happens with probability $D/N$. 
	If the weight of $\theta$ is $e$, then this probability becomes $e\cdot(D/N)$ (since
	$N$ and $D$ are constants, the expression $V = e\cdot(D/N)$ is parameterized just by 
	the variables of $e$). Hence, the function returns the pair $(\theta,V)$.  
	\item If $N \mid D$ and $D = k \cdot N$, then 
	$\theta = \Circle(\U{i,N},1,k)$. Every node of $\U{i,N}$
	is visited precisely $k$ times in $D$ steps. If the weight of $\theta$ is $e$, then the probability 
	of visiting a given node in $D$ steps is $V =  1- (1-e)^k$. 
	The function returns the pair $(\theta,V)$.
	\item If $N > D$, $D \nmid N$, and $N = k \cdot D + c$ where $1 \leq c < D$, then
	we split the set $\U{i,N}$ into disjoint subsets $\U{i,k \cdot D}$ and $\U{i + k \cdot D,c}$ with precisely 
	$k \cdot D$ and $c$ elements, respectively. Then, we pick a fresh variable 
	$p$ and issue two recursive calls:
	\[
	(\theta_1,V_1) = \Def(\U{i,k \cdot D},D,(1-p)\cdot e),\hspace*{2em}
	(\theta_2,V_2) = \Def(\U{i + k \cdot D,c},D,p \cdot e)
	\]
	The set of equations is enriched by $V_1 = V_2$. That is, we require that $p$ is chosen so that
	the nodes of $\U{i,k \cdot D}$ and $\U{i + k \cdot D,c}$ are protected equally well. Then, we put
	$\theta = \nu_p[\theta_1,\theta_2]$ and we set $V = V_1$. The function returns 
	the pair $(\theta,V)$.
	\item Finally, if $D > N$, $N \nmid D$, and $D = k \cdot N + c$ where $1 \leq c < N$, we issue two
	recursive calls:
	\[ 
	(\theta_1,V_1) = \Def(\U{i,N},k\cdot N, e),\hspace*{2em}
	(\theta_2,V_2) = \Def(\U{i,N},c,e)
	\]
	This is perhaps the most subtle part of our algorithm. Here we do not split the set $\U{i,N}$, but 
	the number of steps available to protect~$\U{i,N}$. Intuitively, the constructed strategy $\theta$ first
	tries to loop over the targets of $\U{i,N}$ as long as possible (i.e., for the first $k\cdot N$ steps).
	This is what $\theta_1$ does. Then, $\theta$ tries to exploit the remaining $c$ steps in the 
	best possible way, i.e., by employing $\theta_2$. That is, we put $\theta = \theta_1;\theta_2$.
	If the weight of $\theta$ is $e$,
	then the targets of $\U{i,N}$ are protected with probability at least $V = 1 - (1-V_1)(1-V_2)$. The function 
	returns the pair $(\theta,V)$. 
\end{itemize}

\section{The existence of a characteristic subdigraph}
\label{app-subdigraph}

In this section we prove the non-trivial direction of 
Theorem~\ref{thm-subdigraph}, i.e., we show that 
if $\game = (U,T,\su,E,d)$ is a patrolling problem
with $T = U$, a well formed attack signature $S$,
and a sufficiently connected environment, then $\ch_S$ is
($d$-preserving isomorphic to) a subdigraph of $(U,E)$.
 
Let us assume that $E$ is sufficiently connected, and let
$\sigma$ be a defender's strategy for $\game$ such
that $\val(\sigma) = \left(\sum_{k \in \supp(S)} \frac{S(k)}{k}\right)^{-1}$.
Due to Theorem~\ref{thm-upper}, we obtain that $\sigma$ is \emph{optimal},
i.e., $\val = \val(\sigma)$, and hence we can apply 
Proposition~\ref{prop-abafy}~to~$\sigma$.

We reuse the notation introduced in the proof of 
Theorem~\ref{thm-upper}. In particular, for all 
$h \in \shistories$ and $i \in \Nset_0$, we use 
$\Node_{h,i} : \runs(h) \rightarrow U$ to denote a function which to every
run $hw \in \runs(h)$ assigns the node $w_i$.
Further, we use  $\mu_{h,i} \in \Delta(U)$ to denote a distribution 
defined by 
$\mu_{h,i}(u) = \calP^\sigma (\Node_{h,i} {=} u)/\calP^\sigma(\runs(h))$. 
We start by realizing the following:

\begin{lemma}
\label{lem-mu-val}
  For all  $h \in \shistories$ and $u \in U$, we have that
  $\sum_{i=0}^{d(u)-1} \mu_{h,i}(u) = \val$.
\end{lemma}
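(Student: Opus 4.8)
The plan is to upgrade the inequality $\sum_{i=0}^{d(u)-1}\mu_{h,i}(u)\ge\val$, which is already established inside the proof of Theorem~\ref{thm-upper} for every $h\in\shistories$ and $u\in U$, into an equality; the extra ingredient is the present hypothesis that $\sigma$ attains the bound of Theorem~\ref{thm-upper} \emph{exactly}. First I would record a shifted version of the easy direction: for every $h\in\shistories$, every $u\in U$, and every $i\in\Nset_0$,
\[
\sum_{j=i}^{i+d(u)-1}\mu_{h,j}(u)\ \ge\ \val .
\]
This follows from the case $i=0$ (which is what the proof of Theorem~\ref{thm-upper} proves, via Proposition~\ref{prop-abafy} applied to $\sigma_h$ and a union bound) together with the decomposition identity $\mu_{h,i+j}(u)=\sum_{h'}\frac{\calP^\sigma(\runs(h'))}{\calP^\sigma(\runs(h))}\,\mu_{h',j}(u)$, where $h'$ ranges over the relevant histories extending $h$ by exactly $i$ transitions, exactly as in that proof.

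Next I would rerun the global counting argument of Theorem~\ref{thm-upper}, but now ``anchored'' at an arbitrary $h\in\shistories$. Put $\ell=\prod_{k\in\supp(S)}k$. Since $d(u)\mid\ell$ for every $u$, summing the shifted inequality over the $\ell/d(u)$ consecutive windows $i=0,d(u),\dots,\ell-d(u)$ gives $\sum_{j=0}^{\ell-1}\mu_{h,j}(u)\ge\frac{\ell}{d(u)}\,\val$; summing over $u\in U$ and using $\sum_{u\in U}\mu_{h,j}(u)=1$ together with $\sum_{u\in U}\frac1{d(u)}=\sum_{k\in\supp(S)}\frac{S(k)}{k}$ (here $T=U$ is used) yields
\[
\ell\ =\ \sum_{j=0}^{\ell-1}\sum_{u\in U}\mu_{h,j}(u)\ \ge\ \val\cdot\ell\cdot\sum_{k\in\supp(S)}\frac{S(k)}{k}\ =\ \ell ,
\]
the last equality being the hypothesis $\val=\bigl(\sum_{k\in\supp(S)}S(k)/k\bigr)^{-1}$. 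Hence the whole chain collapses to equalities.

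In particular $\sum_{j=0}^{\ell-1}\mu_{h,j}(u)=\frac{\ell}{d(u)}\,\val$ for every $u$, and since this is a sum of $\ell/d(u)$ window sums each of which is $\ge\val$ by the first step, every window sum must equal $\val$; taking the first window ($i=0$) gives $\sum_{i=0}^{d(u)-1}\mu_{h,i}(u)=\val$, as required. I do not expect a genuine obstacle here: the only step requiring attention is the shifted inequality for an arbitrary relevant $h$, and that is just the decomposition identity plus Proposition~\ref{prop-abafy}, identical in spirit to the proof of Theorem~\ref{thm-upper}; everything else is forced, because once $\val$ meets the bound the total count $\sum_{j<\ell}\sum_{u\in U}\mu_{h,j}(u)=\ell$ leaves no slack to distribute.
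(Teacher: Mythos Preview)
Your proposal is correct and follows essentially the same approach as the paper: establish the shifted window inequality $\sum_{j=i}^{i+d(u)-1}\mu_{h,j}(u)\ge\val$ via Proposition~\ref{prop-abafy} and the decomposition over length-$i$ extensions, then use the global count $\sum_{j<\ell}\sum_{u}\mu_{h,j}(u)=\ell$ together with the hypothesis $\val=\bigl(\sum_k S(k)/k\bigr)^{-1}$ to force every window sum to equal $\val$. Your direct ``collapse of inequalities'' presentation is in fact a bit cleaner than the paper's, which phrases the same computation as a (somewhat redundant) contradiction argument.
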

\begin{proof}
  For all $h \in \shistories$ and $u \in U$ we have that 
  \[
    \sum_{i=0}^{d(u)-1} \mu_{h,i}(u) \quad \geq \quad 
    \val_{\last(h)}(\sigma_h) \quad = \quad \val
  \]
  where the last equality is due to Proposition~\ref{prop-abafy}.
  Now suppose that there exist some $h \in \shistories$ and $u \in U$
  such that $\sum_{i=0}^{d(u)-1} \mu_{h,i}(u) > \val$. Let 
  $\ell = \Pi_{k \in \supp(S)}\, k$. For every $k \in \supp(S)$, we put
  \[
     \alpha[k] \quad = \quad \sum_{u \in U,\ d(u)=k} \ 
                             \sum_{i=0}^{\ell-1}\ \mu_{h,i}(u)\,.
  \]
  Obviously, $\sum_{k \in \supp(S)} \alpha[k] = \ell$. Further, for every
  $k \in \supp(S)$ we have that 
  $\alpha[k] \geq \val \cdot S(k) \cdot \frac{\ell}{k}$, because otherwise
  there inevitably exists some $0\leq i < \ell-k$ and $u \in U$ such that
  $d(u) = k$ and $\sum_{j=i}^{i+d(u)-1} \mu_{h,i}(u) < \val$, which means that
  there exists $hh' \in \shistories$ such that $|h'| = i$ and
  $\sum_{j=0}^{d(u)-1} \mu_{hh',j}(u) < \val$. Since 
  $\val_{\last(hh')}(\sigma_{hh'}) = \val$ by 
  Proposition~\ref{prop-abafy}, we have a contradiction.

  Since $\alpha[k] \geq \val \cdot S(k) \cdot \frac{\ell}{k}$ for all
  $k \in \supp(S)$ and $\sum_{k \in \supp(S)} \alpha[k] = \ell$, we obtain 
  that $\alpha[k] = \val \cdot S(k) \cdot \frac{\ell}{k}$ for all
  $k \in \supp(S)$. Similarly, for every $k \in \supp(S)$, every
  $u \in U$ where $d(u) = k$, and every $0 \leq i < \ell-k$ we must have
  that $\sum_{j=i}^{i+d(u)-1} \mu_{h,i}(u) \geq \val$ (otherwise we obtain
  contradiction in the way indicated above), which is possible only if
  $\sum_{j=i}^{i+d(u)-1} \mu_{h,i}(u) = \val$ for all such $i$ and $u$.
  In particular, this holds for $i=0$, and the proof is finished.
\end{proof}

\noindent
Now we present a sequence of observations that reveal a certain 
form of periodicity in the structure of~$\sigma$. The next lemma
follows trivially from Lemma~\ref{lem-mu-val}.

\begin{lemma}
\label{lem-upper-tran-prob}
  For all $h \in \shistories$ and $u \in U$ we have that
  $\sigma(h)(u) \leq  \val(\sigma)$.
\end{lemma}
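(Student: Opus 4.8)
The plan is to obtain the bound directly from Lemma~\ref{lem-mu-val}, with essentially no extra work. The one observation to record is that the single-step distribution $\sigma(h)$ coincides with the "zeroth slice" of the family $\mu_{h,\bullet}$ used in that lemma: by the definition of $\Node_{h,0}$, a run $hw \in \runs(h)$ lies in the event $\{\Node_{h,0}{=}u\}$ precisely when the defender, having followed $h$, moves next to $u$. Hence $\calP^\sigma(\Node_{h,0}{=}u) = \calP^\sigma(\runs(h))\cdot\sigma(h)(u)$, and dividing by $\calP^\sigma(\runs(h)) > 0$ (positive since $h \in \shistories$) gives $\mu_{h,0}(u) = \sigma(h)(u)$.

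With this in hand the argument is a one-liner. Since $d(u) \geq 1$ for every $u \in U$, the index $i=0$ is among $\{0,\ldots,d(u)-1\}$, and all summands $\mu_{h,i}(u)$ are non-negative, so
\[
\sigma(h)(u) \;=\; \mu_{h,0}(u) \;\leq\; \sum_{i=0}^{d(u)-1}\mu_{h,i}(u) \;=\; \val,
\]
the last equality being Lemma~\ref{lem-mu-val}. Finally, recall from the opening of this section that $\sigma$ is optimal (Theorem~\ref{thm-upper} forces $\val(\sigma)=\val$, so Proposition~\ref{prop-abafy} is applicable), whence $\val = \val(\sigma)$ and the claimed inequality $\sigma(h)(u) \leq \val(\sigma)$ follows.

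There is no real obstacle here; the only thing to be careful about is the indexing convention for $\Node_{h,i}$ — one must check that $\Node_{h,0}$ records the node visited one step \emph{after} the history $h$ (so that indeed $\mu_{h,0} = \sigma(h)$), rather than the last node of $h$ itself. Everything else is routine bookkeeping.
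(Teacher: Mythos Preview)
Your proof is correct and is exactly the argument the paper has in mind: the paper merely says the lemma ``follows trivially from Lemma~\ref{lem-mu-val}'', and you have spelled out that triviality, namely $\sigma(h)(u)=\mu_{h,0}(u)\le\sum_{i=0}^{d(u)-1}\mu_{h,i}(u)=\val=\val(\sigma)$. Your caution about the indexing of $\Node_{h,i}$ is well placed, and indeed the paper's convention (writing a run in $\runs(h)$ as $hw$ and setting $\Node_{h,i}(hw)=w_i$) gives $\mu_{h,0}=\sigma(h)$ as you use.
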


\begin{lemma}
\label{lem-not-earlier}
  Let $h \in \shistories$ where $\last(h) = u$. Then for every
  $hh' \in \shistories$ where $|h'| < d(u)$ we have that 
  $\last(h') \neq u$.   \end{lemma}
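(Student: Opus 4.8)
The plan is to reduce the claim to the \emph{tightness} of the attack--defence bound at the immediate predecessor of $h$.

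First I would record a small strengthening of Lemma~\ref{lem-mu-val}: for every relevant history $g\in\shistories$ and every node $w\in U$, conditioned on $\runs(g)$ the probability that $\sigma$ visits $w$ during the next $d(w)$ steps equals $\val$, and moreover almost surely $\sigma$ visits $w$ \emph{at most once} during those $d(w)$ steps. For the probability: Proposition~\ref{prop-abafy} gives $\val_{\last(g)}(\sigma_g)=\val$, and the attacker who attacks $w$ right after $g$ is defended with probability exactly this visit probability, so it is $\ge\val_{\last(g)}(\sigma_g)=\val$; the union bound together with Lemma~\ref{lem-mu-val} gives $\le\sum_{i=0}^{d(w)-1}\mu_{g,i}(w)=\val$. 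Since the union bound is thus an equality, the events ``$w$ is visited exactly $i{+}1$ steps after $g$'' for $0\le i\le d(w)-1$ are pairwise disjoint up to null sets, i.e. $\calP^{\sigma}\bigl(w\text{ is visited}\ge 2\text{ times in the next }d(w)\text{ steps}\mid\runs(g)\bigr)=0$.

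Now for the lemma itself, assume $|h|\ge 2$ and write $h=\bar h\,u$, where $\bar h\in\shistories$ is obtained by deleting the last node of $h$; note $\sigma(\bar h)(u)>0$ since $h$ is relevant. Suppose towards a contradiction that there is a relevant $hh'$ with $1\le m:=|h'|<d(u)$ and $\last(h')=u$. Then $hh'=\bar h\,u\,h'$ is relevant, and along it the node $u$ occurs both one step after the last node of $\bar h$ and $m{+}1$ steps after it; since $m{+}1\le d(u)$, this means $u$ is visited at two distinct times within the next $d(u)$ steps after $\bar h$. As $\calP^{\sigma}(\runs(hh')\mid\runs(\bar h))>0$, this contradicts the ``at most once'' statement applied with $g=\bar h$ and $w=u$. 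Hence no such $hh'$ exists, which is exactly the assertion. (When $|h|=1$, i.e. $h$ is just the initial node $\su$, there is no such predecessor $\bar h$, and this boundary case has to be argued separately.)

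The only genuine care the argument requires is the bookkeeping between positions counted from $\bar h$ and positions counted from $h$, which differ by one step; that is the natural place for an off-by-one slip, and it is also what makes the boundary case $h=(\su)$ delicate rather than routine.
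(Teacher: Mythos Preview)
Your argument for $|h|\ge 2$ is correct and in fact cleaner than the paper's own proof. The paper argues directly at $h$ and asserts that the existence of $h'$ forces $\val_u(\sigma_h)<\sum_{i=0}^{d(u)-1}\mu_{h,i}(u)$ without explaining why; your shift to the predecessor $\bar h$ makes the tightness step transparent, because then the two visits to $u$ (at steps $1$ and $m{+}1$ after $\bar h$) both land inside one window of length $d(u)$, so equality in the union bound is violated. The paper's one-line justification hides exactly this off-by-one issue that you resolve by stepping back.

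Your caution about $|h|=1$ is more than bookkeeping: the lemma as stated can genuinely \emph{fail} there. Take $U=\{\hat u,v\}$ with $d\equiv 2$ and the fully connected environment; the strategy that first stays at $\hat u$ with some probability $p\in(0,1)$ (otherwise moves to $v$) and thereafter alternates deterministically is optimal with value~$1$, yet the relevant history $\hat u\,\hat u$ violates the conclusion. So there is no ``separate argument'' to supply; the statement should carry the tacit hypothesis $|h|\ge 2$. This is harmless for the downstream applications: the chain of lemmas is ultimately invoked only at (prefixes of) a fixed history that visits every node, and for each $u$ one may pick an occurrence of $u$ at position $\ge 1$ (for $u=\hat u$, use a later revisit, which exists since an optimal strategy revisits every target).
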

\begin{proof}
  Suppose the converse. Then there exist $hh' \in \shistories$ 
  and a node $u \in U$ such that $\last(h) = \last(h') = u$ and
  $|h'| < d(u)$. Due to Proposition~\ref{prop-abafy}, we have that
  $\val_{u}(\sigma_h) = \val$. Further, 
  $\sum_{i=0}^{d(u)-1} \mu_{h,i}(u) = \val$ by Lemma~\ref{lem-mu-val}.
  However, due to the existence of $h'$ we obtain that
  $\val_{u}(\sigma_h) < \sum_{i=0}^{d(u)-1} \mu_{h,i}(u)$, which is a 
  contradiction.
\end{proof}

\begin{lemma}
\label{lem-pred}
  Let $h \in \shistories$ where $\last(h) = u$.
  For all $i \geq 0$ and  $hh' \in \shistories$ where 
  $|h'| = i \cdot d(u) + d(u) - 1$ 
  we have that $\sigma(hh')(u) = \val(\sigma)$ and $u$ does
  not appear among the last $d(u)-1$ nodes of $h'$.
\end{lemma}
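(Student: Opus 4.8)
The plan is to determine precisely at which steps, and with what probability, the target $u$ is revisited after the history $h$; once this ``on‑schedule'' picture is available, both assertions drop out. Throughout write $d=d(u)$, and recall that $\sigma$ is the optimal strategy fixed in this section, so $\val(\sigma)=\val$.

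The first thing I would establish is a $d$‑periodicity of the visit probabilities: for every $g\in\shistories$ and every $j\geq 0$ one has $\mu_{g,j}(u)=\mu_{g,j+d}(u)$. This is a sliding‑window consequence of Lemma~\ref{lem-mu-val}. That lemma gives $\sum_{m=0}^{d-1}\mu_{g,m}(u)=\val$; applying it to each relevant continuation $gg''$ of $g$ with $|g''|=j$ and averaging over these (using that $\Node_{g,m}$ viewed from $g$ coincides with $\Node_{gg'',m-j}$ viewed from $gg''$) shows $\sum_{m=j}^{j+d-1}\mu_{g,m}(u)=\val$ for \emph{every} $j\geq 0$; subtracting the windows $[j,j{+}d{-}1]$ and $[j{+}1,j{+}d]$ yields the claimed equality. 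Hence $(\mu_{g,m}(u))_m$ is $d$‑periodic with each block of $d$ consecutive values summing to $\val$.

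Next I would specialise to $g=h$. Since $\last(h)=u$, Lemma~\ref{lem-not-earlier} says no relevant continuation of $h$ of length $<d$ ends in $u$, and as $\Node_{h,j}$ is exactly the last node of the length‑$(j{+}1)$ continuation of $h$, this forces $\mu_{h,j}(u)=0$ for $0\leq j\leq d-2$. By the periodicity above, $\mu_{h,j}(u)=0$ whenever $j\not\equiv d-1\pmod d$, and since each period sums to $\val$, necessarily $\mu_{h,j}(u)=\val$ whenever $j\equiv d-1\pmod d$. Now take a relevant $hh'$ with $|h'|=i\cdot d+d-1=(i{+}1)d-1$. The number $\sigma(hh')(u)$ is the conditional probability that the node at position $(i{+}1)d$ after $h$ — namely $\Node_{h,(i+1)d-1}$ — equals $u$; consequently $\mu_{h,(i+1)d-1}(u)$ is the weighted average (weights summing to $1$) of the numbers $\sigma(hh')(u)$ over all relevant continuations $h'$ of length $(i{+}1)d-1$. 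Since $(i{+}1)d-1\equiv d-1\pmod d$ this average equals $\val$, and since every term is at most $\val$ by Lemma~\ref{lem-upper-tran-prob}, every term of positive weight — i.e.\ $\sigma(hh')(u)$ for each relevant $hh'$ of that length — must equal $\val$. For the second assertion, the last $d-1$ nodes of $h'$ occupy positions $id{+}1,\dots,(i{+}1)d{-}1$ after $h$, i.e.\ they are $\Node_{h,id},\dots,\Node_{h,(i+1)d-2}$, whose indices run through the residues $0,1,\dots,d-2$ modulo $d$; at each of those indices $j$ we have $\mu_{h,j}(u)=0$, so the corresponding prefix of $hh'$ (which is itself relevant) cannot end in $u$, and therefore $u$ does not occur among those nodes.

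I do not expect a genuine conceptual obstacle here: the whole argument is built from Lemmas~\ref{lem-mu-val}, \ref{lem-not-earlier} and~\ref{lem-upper-tran-prob} together with the elementary sliding‑window identity. The one place a write‑up is most likely to slip is the index bookkeeping — lining up the convention that $\Node_{h,i}$ is the $(i{+}1)$‑st node after $h$, the length $(i{+}1)d-1$ of $h'$, and the window of its last $d-1$ nodes, and in particular checking that this window meets exactly the residues $0,\dots,d-2\pmod d$ and never a multiple of $d$ (which is precisely what makes $u$ appear on schedule and forces $\sigma(hh')(u)=\val$).
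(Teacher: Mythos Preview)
Your argument is correct. The route differs from the paper's in organisation: instead of proving the two assertions by induction on $i$, you first establish the global $d$-periodicity $\mu_{h,j}(u)=\mu_{h,j+d}(u)$ via the sliding-window identity $\sum_{m=j}^{j+d-1}\mu_{h,m}(u)=\val$ (obtained by averaging Lemma~\ref{lem-mu-val} over all length-$j$ continuations), and then read off both conclusions at once from the resulting profile $(\mu_{h,j}(u))_j$. The paper's proof uses exactly the same three ingredients (Lemmas~\ref{lem-mu-val}, \ref{lem-not-earlier}, \ref{lem-upper-tran-prob}) but carries the information forward one block of $d$ steps at a time: in the inductive step it applies Lemma~\ref{lem-mu-val} separately at $hh'$ (to rule out a revisit inside $h''$) and at $hh'u'$ (to force $\sigma(hh'h'')(u)=\val$). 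Your version is a bit cleaner because the periodicity statement makes the structure explicit up front and handles all $i$ simultaneously; the paper's version is more local and does not need to name the periodicity as an intermediate claim. The index bookkeeping you flag is indeed the only delicate point, and your reconciliation of $\Node_{h,j}$ with the $j$-th node of $h'$ and with the residues of the last $d-1$ positions is correct.
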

\begin{proof}
  By induction on $i$. In the base case ($i = 0$), we have that
  $u$ does not appear among the last $d(u)-1$ nodes of $h'$
  by Lemma~\ref{lem-not-earlier}. Further, by
  Lemma~\ref{lem-mu-val} and Lemma~\ref{lem-not-earlier} we obtain
  that $\val_{u}(\sigma_h) = \mu_{h,d(u)-1}(u)$. Hence, 
  $\mu_{h,d(u)-1}(u) = \val = \val(\sigma)$. By 
  Lemma~\ref{lem-upper-tran-prob}, this is possible only if 
  for all $hh' \in \shistories$ where $|h'| = d(u)-1$ we have
  that $\sigma(hh')(u) = \val(\sigma)$. For the inductive step,
  consider $hh'h'' \in \shistories$ where $|h'| = i \cdot d(u) + d(u) - 1$ 
  and $|h''| = d(u)$. By applying induction hypothesis to $hh'$, 
  we obtain that $\sigma(hh')(u) = \val(\sigma)$. If $u$ was revisited 
  in the last $d(u)-1$ nodes of $h''$, we would have
  $\sum_{i=0}^{d(u)-1} \mu_{hh',i} (u) > \val$, which contradicts
  Lemma~\ref{lem-mu-val}. If $\sigma(hh'h'')(u) < \val(\sigma)$,
  we obtain $\sum_{i=0}^{d(u)-1} \mu_{hh'u',i} (u) < \val$, where $u'$
  is the first node of $h''$, which again contradicts Lemma~\ref{lem-mu-val}.
\end{proof}

\begin{lemma}
\label{lem-sub-revisit}
  Let $hh' \in \shistories$ where $\last(h) = \last(h') = u$.
  Then $d(u)$ divides $|h'|$.
\end{lemma}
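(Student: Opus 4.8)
The plan is a proof by contradiction that simply cashes out the periodicity already isolated in Lemma~\ref{lem-pred}. Since $\shistories$ is prefix‑closed, from $hh'\in\shistories$ we get $h\in\shistories$ with $\last(h)=u$, so all the preceding lemmas apply with this particular $h$. By Lemma~\ref{lem-not-earlier} we already know $|h'|\ge d(u)$ (as $\last(h')=u$), so we may set $q=\lfloor |h'|/d(u)\rfloor\ge 1$; writing $|h'|=q\,d(u)+r$ with $0\le r\le d(u)-1$, the statement $d(u)\nmid|h'|$ to be refuted is exactly $1\le r\le d(u)-1$.

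The one real manoeuvre is to make Lemma~\ref{lem-pred} speak about the node sitting at position $|h'|$. For that I would prolong $hh'$ inside $\shistories$: for every relevant history $g$ the distribution $\sigma(g)$ has non‑empty support, so $hh'$ extends to arbitrarily long relevant histories, and I choose $hg\in\shistories$ with $|g|=(q+1)d(u)-1$. Since $|h'|\le q\,d(u)+d(u)-1=(q+1)d(u)-1=|g|$, the suffix $g$ has $h'$ as a prefix, and hence the node occurring at position $|h'|$ of $g$ is $\last(h')=u$.

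Now apply Lemma~\ref{lem-pred} to $h$ and $g$ with index $i=q$, which is legitimate because $|g|=q\cdot d(u)+d(u)-1$: the last $d(u)-1$ nodes of $g$, i.e.\ the nodes at positions $q\,d(u)+1,\ldots,q\,d(u)+d(u)-1$, do not include $u$. But if $d(u)\nmid|h'|$ then $q\,d(u)<|h'|\le q\,d(u)+d(u)-1$, so $|h'|$ falls precisely inside that block of positions, contradicting the fact that $g$ visits $u$ at position $|h'|$. Therefore $d(u)$ divides $|h'|$.

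I do not anticipate a genuine obstacle. The only points deserving a sentence of care are the off‑by‑one identification of ``the last $d(u)-1$ nodes of $g$'' with the positions $q\,d(u)+1,\ldots,(q+1)d(u)-1$, and the (routine) remark that prolonging $hh'$ to the required length keeps the history relevant and keeps $h'$ as a prefix.
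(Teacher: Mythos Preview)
Your argument is correct and is exactly the way the paper's one-line ``Directly from Lemma~\ref{lem-pred}'' is meant to be unpacked: write $|h'|=q\,d(u)+r$, extend $hh'$ inside $\shistories$ to a history $hg$ with $|g|=q\,d(u)+d(u)-1$, and observe that Lemma~\ref{lem-pred} (with $i=q$) forbids $u$ at the position $|h'|$ of $g$ unless $r=0$. Your handling of the two delicate points---the off-by-one identification of the last $d(u)-1$ positions and the extendability of relevant histories (guaranteed since every node has an outgoing edge, so $\sigma(\cdot)$ always has non-empty support)---is fine.
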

\begin{proof}
  Directly from Lemma~\ref{lem-pred}.
\end{proof}

\begin{lemma}
\label{lem-sub-revisit-surely}
  Let $h \in \shistories$ where $\last(h) = u$. For every $i \in \Nset$,
  there exist $hh' \in \shistories$ such that $|h'| = i \cdot d(u)$ and
  $\last(h') =u$.
\end{lemma}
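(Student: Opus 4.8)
The plan is to obtain Lemma~\ref{lem-sub-revisit-surely} as an easy consequence of the periodicity already captured by Lemma~\ref{lem-pred}. First I would record the (already implicit) fact that, for the optimal strategy $\sigma$ fixed in this section, $\val = \val(\sigma) = \bigl(\sum_{k \in \supp(S)} S(k)/k\bigr)^{-1} > 0$, and that combining Lemma~\ref{lem-mu-val} with Lemma~\ref{lem-not-earlier} forces $\mu_{h,j}(u) = 0$ for every $0 \le j \le d(u)-2$ (such a $j$ would witness a relevant continuation of $h$ of length $j+1 < d(u)$ ending in $u$, which Lemma~\ref{lem-not-earlier} forbids), so that all the mass of the sum $\sum_{j=0}^{d(u)-1}\mu_{h,j}(u) = \val$ sits at $j = d(u)-1$; in particular $\mu_{h,d(u)-1}(u) = \val > 0$. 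This already handles $i = 1$: a positive probability of being at $u$ at the $d(u)$-th step means there is a relevant history $hh'$ with $|h'| = d(u)$ and $\last(h') = u$.

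For arbitrary $i \in \Nset$ I would proceed directly via Lemma~\ref{lem-pred}. Fix $h \in \shistories$ with $\last(h) = u$. Since $\sigma$ is a genuine strategy (so every node reachable under $\sigma$ has a successor and $\sigma$ keeps assigning a nonempty-support distribution), relevant continuations of $h$ of every length exist; pick $hh' \in \shistories$ with $|h'| = i \cdot d(u) - 1$, and note that $i \cdot d(u) - 1 = (i-1)\cdot d(u) + d(u) - 1$. Applying Lemma~\ref{lem-pred} with its index set to $i-1 \ge 0$ gives $\sigma(hh')(u) = \val(\sigma) > 0$, hence $hh'u \in \shistories$; writing $h'' = h'u$ we obtain $hh'' \in \shistories$ with $|h''| = i \cdot d(u)$ and $\last(h'') = u$, which is exactly the claim. (If one prefers to avoid invoking Lemma~\ref{lem-pred} here, the same conclusion follows by a one-line induction on $i$: the base case is the $i=1$ remark above, and the inductive step applies that remark to the history $hh'$ produced at stage $i$.)

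I do not expect a genuine obstacle; the statement is essentially a corollary of Lemma~\ref{lem-pred}, with all the substantive work already done in Lemmas~\ref{lem-mu-val}--\ref{lem-pred}. The only points requiring care are bookkeeping ones: the index shift $i \cdot d(u) - 1 = (i-1) d(u) + d(u) - 1$ so that Lemma~\ref{lem-pred} applies with a nonnegative index, the trivial but necessary observation that relevant continuations of $h$ of every length exist, and the fact that $\val(\sigma) > 0$ (guaranteed because $\sigma$ attains the strictly positive bound of Theorem~\ref{thm-upper}), which is precisely what allows us to extend a relevant history by the node $u$.
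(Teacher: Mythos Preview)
Your proof is correct and is exactly what the paper has in mind: the paper's own proof is the single word ``Immediate,'' and your argument spells out precisely why --- it is a direct application of Lemma~\ref{lem-pred} (using the index shift $i\cdot d(u)-1=(i-1)\cdot d(u)+d(u)-1$ and $\val(\sigma)>0$) together with the trivial fact that relevant continuations of every length exist.
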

\begin{proof}
  Immediate.
\end{proof}

\noindent
For the rest of this section, let us fix a history
$h = u_0 \cdots u_m \in \shistories$ such that 
every node of $U$ appears in~$h$ (such an $h$ must exist). 
For every $u \in U$, let us fix some $j \leq m$ such that $u_j = u$, and let 
$\offset(u) = j - \left\lfloor \frac{j}{d(u)} \right\rfloor \cdot d(u)$.
Note that due to Lemma~\ref{lem-sub-revisit}, the definition of $\offset(u)$
is independent of the concrete choice of~$j$. For every $k \in \supp(S)$
and every $i \in \{0,\ldots,k-1\}$, let $V_k[i]$ be the set of all
nodes $u \in U$ such that $d(u) = k$ and $\offset(u) = i$. 

\begin{lemma}
\label{lem-sub-edges}
  Let $k,k' \in \supp(S)$, $i \in \{0,\ldots, k{-}1\}$, 
  $i' \in \{0,\ldots, k'{-}1\}$, and $0 \leq \ell < k \cdot k'$ where
  $i = \ell \,\mathit{mod}\, k$ and $i' = \ell {+} 1 \,\mathit{mod}\, k'$. 
  Then for all $u \in V_k[i]$ and $u' \in V_{k'}[i']$ we have that
  $(u,u') \in E$.
\end{lemma}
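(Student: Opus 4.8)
The plan is to show, for given $u \in V_k[i]$ and $u' \in V_{k'}[i']$ satisfying the $\ell$-condition, that there is a history $g \in \shistories$ with $\last(g)=u$ and $\sigma(g)(u')>0$; since $\sigma$ is a valid defender's strategy, $\supp(\sigma(g))\subseteq \success(\last(g))$, so this yields $(u,u')\in E$ at once. To build such a $g$ I would concatenate three pieces: a history $A$ ending in $u'$, a short path $B$ leading from $u'$ to $u$, and finitely many ``$u$-loops'' (each of length a multiple of $d(u)=k$ and ending again in $u$), with the loops chosen so that the segment $B'$ consisting of $B$ followed by these loops has $|B'|$ of the form $i_1k'+k'-1$ with $i_1\ge 0$. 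Then Lemma~\ref{lem-pred}, applied to the history $A$ with target node $u'$ and $h'=B'$, gives $\sigma(AB')(u')=\val(\sigma)$, which is positive because in this section $\val(\sigma)=\bigl(\sum_{k\in\supp(S)}S(k)/k\bigr)^{-1}>0$.

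First I would record a reachability fact used twice: for every $g\in\shistories$ and every node $w$ there is an extension $gg'\in\shistories$ with $\last(g')=w$ and $1\le |g'|\le d(w)$. This follows from Proposition~\ref{prop-abafy}, which gives $\val_{\last(g)}(\sigma_g)=\val(\sigma)>0$: against the attacker who attacks $w$ in the very first move the defended probability equals $\aval(\sigma_g,w)$, so $\sigma$ continued from $g$ must reach $w$ within $d(w)$ steps with positive probability. Applying this with $g=h$, $w=u'$ produces $A:=hh''\in\shistories$ with $\last(A)=u'$; applying it again with $g=A$, $w=u$ produces $AB\in\shistories$ with $\last(AB)=u$ and $b:=|B|\le k$. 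The $u$-loops are then furnished by Lemma~\ref{lem-sub-revisit-surely}, which lets me append to any history ending in $u$ a segment of length any prescribed multiple of $k$ that again ends in $u$.

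The heart of the argument is the phase bookkeeping, and this is exactly where the $\ell$-condition is used. Since $h$ (hence $A$, $AB$, $AB'$) contains every node, Lemma~\ref{lem-sub-revisit} forces all occurrences of $u'$ in the single history $A$ to sit at positions $\equiv \offset(u')=i'\pmod{k'}$, so $|A|\equiv i'+1\pmod{k'}$; similarly all occurrences of $u$ in $AB$ lie at positions $\equiv i\pmod k$, so $|A|+b-1\equiv i\pmod k$, i.e.\ $b\equiv i+1-|A|\pmod k$. Reducing both facts modulo $g_0:=\gcd(k,k')$ gives $b\equiv (i+1)-(i'+1)=i-i'\pmod{g_0}$, hence $b+1\equiv (i+1)-i'\pmod{g_0}$. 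On the other hand the $\ell$-condition is, by the Chinese Remainder Theorem, equivalent to the solvability of $x\equiv i+1\pmod k$, $x\equiv i'\pmod{k'}$, i.e.\ to $i+1\equiv i'\pmod{g_0}$; therefore $g_0\mid (b+1)$. Consequently the congruence $tk\equiv -1-b\pmod{k'}$ has a nonnegative solution $t_0$, and appending $t_0$ many $u$-loops of length $k$ and then finitely many more of length $kk'$ yields $B'$ with $\last(B')=u$ and $|B'|=b+t_0k+m\cdot kk'$ equal to $i_1k'+k'-1$ for a suitable $i_1\ge 0$. Lemma~\ref{lem-pred} then applies to $A$, $u'$, $B'$ and gives $\sigma(AB')(u')=\val(\sigma)>0$, so $u'\in\success(u)$ and $(u,u')\in E$.

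The step I expect to be the main obstacle — and the reason the hypothesis is stated through the index $\ell$ rather than arbitrarily — is precisely this compatibility of the two modular constraints: ending at $u$ pins a residue modulo $k$, while hitting a ``Lemma~\ref{lem-pred} slot'' for $u'$ pins one modulo $k'$, and reconciling them requires $g_0\mid(b+1)$. The subtlety is that the length $b$ of the connecting path is not free but is forced modulo $k$ by the relative positions of $u$ and $u'$ inside $h$; the pleasant surprise is that this forced value of $b$ is exactly the one for which the $\ell$-condition makes the divisibility hold, so no further genericity of the chosen history $h$ is needed.
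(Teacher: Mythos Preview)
Your proof is correct and rests on the same idea as the paper's: produce a history in $\shistories$ ending in $u$ that extends an earlier occurrence of $u'$ by a length $\equiv -1\pmod{k'}$, then invoke Lemma~\ref{lem-pred} to get $\sigma(\cdot)(u')=\val>0$ and hence $(u,u')\in E$. The paper's argument is much terser: it directly asserts extensions $hh'$ and $hh''$ of the fixed $h$ with $|h'|=\ell$, $|h''|=\ell+1$, ending in $u$ and $u'$ respectively, and then reads off the conclusion from Lemma~\ref{lem-pred} using the occurrence of $u'$ already present inside~$h$. You instead manufacture a fresh occurrence $A$ of $u'$ after $h$, walk to $u$, and append $u$-loops to align the phase, deriving the needed compatibility $\gcd(k,k')\mid(b+1)$ explicitly from the $\ell$-hypothesis via the Chinese Remainder Theorem. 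Your route is longer but more self-contained; in particular, the paper's claim that one can take $|h'|=\ell$ tacitly requires $m\equiv 0\pmod{k}$ (and likewise $\pmod{k'}$), something the bare choice of $h$ does not guarantee, whereas your argument tracks only relative positions and sidesteps this issue entirely.
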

\begin{proof}
  Due to Lemma~\ref{lem-sub-revisit-surely}, there exist 
  $h h' \in \shistories$ and $hh''\in \shistories$ such that 
  $|h'| = \ell$, $|h''| = \ell + 1$, $\last(h') = u$, and $\last(h'') = u'$.
  By Lemma~\ref{lem-pred}, we obtain $\sigma(hh')(u') = \val$, which
  means $(u,u') \in E$.
\end{proof}

\begin{lemma}
\label{lem-sub-size}
  For all $k \in \supp(S)$ and $i \in \{0,\ldots, k-1\}$, the set
  $V_k[i]$ contains exactly $S(k)/k$ nodes.
\end{lemma}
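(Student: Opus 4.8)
The plan is to determine the full positional profile of $\sigma$ and read the slot sizes off it. For $\ell\geq 1$ let $\mu_\ell\in\dist(U)$ be the distribution of the defender's position after $\ell$ moves under $\sigma$. Averaging Lemma~\ref{lem-mu-val} over all relevant prefixes of a fixed length, exactly as in the proof of Theorem~\ref{thm-upper}, I first obtain the window identity $\sum_{j=\ell}^{\ell+d(u)-1}\mu_j(u)=\val$ for all $\ell\geq 1$ and $u\in U$. Subtracting the identities for $\ell$ and $\ell+1$ yields $\mu_{\ell+d(u)}(u)=\mu_\ell(u)$, so $\ell\mapsto\mu_\ell(u)$ is periodic with period $d(u)$; and Lemma~\ref{lem-pred} gives $\mu_\ell(u)=\val$ whenever $\ell\equiv\offset(u)\pmod{d(u)}$. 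Since the $d(u)$ summands of a window are non-negative, sum to $\val$, and one of them is already $\val$, all the others vanish. Hence $\mu_\ell(u)=\val$ if $\ell\equiv\offset(u)\pmod{d(u)}$ and $\mu_\ell(u)=0$ otherwise, i.e. $\mu_\ell$ is the uniform distribution on the set $W_\ell:=\bigcup_{k\in\supp(S)}V_k[\,\ell\bmod k\,]$ (a disjoint union, as the $V_k$ carry distinct attack lengths).

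Since $\sum_{u\in U}\mu_\ell(u)=1$, this forces $|W_\ell|\cdot\val=1$, i.e.
\[
\sum_{k\in\supp(S)}\bigl|V_k[\,\ell\bmod k\,]\bigr|\;=\;\frac1{\val}\;=\;\sum_{k\in\supp(S)}\frac{S(k)}{k}\qquad\text{for every }\ell\geq 1 .
\]
Together with the trivial equality $\sum_{i=0}^{k-1}|V_k[i]|=S(k)$, what remains is to derive the pointwise statement $|V_k[i]|=S(k)/k$. This is the main obstacle: the single linear relation above does not by itself force the $k$ slot sizes of a fixed attack length to coincide, so the argument has to use the sharper structure just obtained --- in particular that, by Lemma~\ref{lem-pred}, the transition from $W_\ell$ to $W_{\ell+1}$ is \emph{forced} to be the uniform one (every node of $W_\ell$ moves to each node of $W_{\ell+1}$ with probability $\val$), with the available edges pinned down by Lemma~\ref{lem-sub-edges}.

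To finish, I would fix $k\in\supp(S)$, restrict attention to the length-$k$ targets and the $k$ cyclically rotating slots $V_k[0],\dots,V_k[k-1]$, and replay the averaging/counting argument of Theorem~\ref{thm-upper} on this length-$k$ sub-behaviour --- where well-formedness ($k\mid S(k)$) is exactly what makes the resulting bound attainable with equal-sized slots --- to conclude $|V_k[i]|=|V_k[i{+}1\bmod k]|$ for all $i$, and hence $|V_k[i]|=S(k)/k$ by summing over $i$. Isolating this length-$k$ sub-behaviour so that Theorem~\ref{thm-upper} applies to it with equality is the delicate point; everything preceding it is routine bookkeeping with the lemmas already established in this section.
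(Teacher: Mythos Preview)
Your instinct is right: the step from the window identity $\sum_{k}|V_k[\ell\bmod k]|=1/\val$ to the pointwise conclusion $|V_k[i]|=S(k)/k$ is the genuine obstacle, and your proposed ``isolate the length-$k$ sub-behaviour and re-run Theorem~\ref{thm-upper}'' cannot close it. In fact the lemma is false as stated. Take $\supp(S)=\{2,4\}$ with $S(2)=2$, $S(4)=4$ (so $\val=1/2$), nodes $a,b$ of attack length~$2$ and $c,d,e,f$ of attack length~$4$ in a fully connected environment, and let $\sigma$ be the $4$-modular strategy that at step~$\ell$ selects uniformly from $W_{\ell\bmod4}$, where $W_0=\{c,d\}$, $W_1=W_3=\{a,b\}$, $W_2=\{e,f\}$. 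Every target is covered with probability exactly $1/2$ in every window of its attack length, so $\sigma$ is optimal, and one checks that Lemmas~\ref{lem-mu-val} through~\ref{lem-sub-edges} all hold for this $\sigma$. For the history $h=c\,a\,e\,b\,d\,a\,f\in\histories(\sigma)$ one obtains $V_4[0]=\{c,d\}$, $V_4[2]=\{e,f\}$, and $V_4[1]=V_4[3]=\emptyset$, so $|V_4[0]|=2\neq1=S(4)/4$.

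The paper's own proof (``By applying Lemma~\ref{lem-mu-val}'') amounts to no more than your window identity and shares the same gap. The linear constraints \emph{do} force equal slot sizes when the elements of $\supp(S)$ are pairwise coprime --- by the Chinese Remainder Theorem one can vary one residue while holding the others fixed, so each $|V_k[\,\cdot\,]|$ is constant --- but not when one attack length divides another, as above. To salvage Theorem~\ref{thm-subdigraph} one must either argue only for a \emph{suitably chosen} optimal $\sigma$ (e.g.\ the one produced in Section~\ref{sec-modular-strategies}) rather than an arbitrary one, or bypass the slot-size count entirely and extract the copy of $\ch_S$ from the edge information of Lemma~\ref{lem-sub-edges} by a different combinatorial argument.
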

\begin{proof}
  By applying Lemma~\ref{lem-mu-val}.
\end{proof}

\noindent
Due to Lemma~\ref{lem-sub-size}, we have that for all $k \in \supp(S)$ and
$i \in \{0,\ldots, k-1\}$, the set $V_k[i]$ has exactly $S(k)/k$
elements, which we denote by $v_k[i,1],\ldots,v_k[i,S(k)/k]$.
Due to Lemma~\ref{lem-sub-edges}, for every pair of nodes 
$v_k[i,j]$ and $v_{k'}[i',j']$, such that $i = \ell \,\mathit{mod}\, k$ and
$i' = (\ell {+} 1) \,\mathit{mod}\, k'$ for some 
$0 \leq \ell < k\cdot k'$ we have that $(v_k[i,j],v_{k'}[i',j']) \in E$.
Hence, $(U,E)$ contains a subdigraph which is $d$-preserving isomorphic to $\ch_S$.

\section{Complexity of finding the characteristic subdigraph}
\label{app-HAM}

In this section we prove two claims leading to combined
Theorem~\ref{thm-connected} via Theorem~\ref{thm-subdigraph}.
We will focus on a subclass of patrolling problems $\game = (U,T,\su,E,d)$ 
such that $T = U$, $\supp(S) = \{k\}$.
In such a case, for a well-formed attack signature $S$,
we have that $|U|=n$ is divisible by $k$ and that the characteristic digraph
$\ch_s$ has a particularly nice description:
$\ch_s$ has a node set $u_0,\dots u_{n-1}$ and $u_iu_j$ is an arc iff
$j=(i+1)\,\mathit{mod}\,k$.

Our proofs will actually be expressed in terms of a {\em special equitable
$k$-colouring} of the complementary digraph $H=(U,\overline{E}\,)$
of the environment $E$ \,(i.e., $H$ having precisely those arcs, but not
the loops, which are absent in $E$):
Let $|U|=|V(H)|=a\cdot k$.
The task is to find a colouring $c:V(H)\to\{1,2,\dots,k\}$ of the node set
such that (a) $|c^{-1}(i)|=a$ for each $i=1,\dots,k$,
and (b) no arc $xy$ of $H$ receives colours
$c(x)=j$, $c(y)=(j\,\mathit{mod}\,k)+1$ for some $j\in\{1,\dots,k\}$
(while both $x,y$ might receive the same colour).
Comparing this with the definition of $\ch_s$ one immediately
concludes that $(U,E)$ contains a subdigraph isomorphic to $\ch_s$ if, and only if,
the complement $H$ has a special equitable $k$-colouring.

\begin{lemma}
For a simple digraph $H$ on an even number of nodes, one can find in
polynomial time a special equitable $2$-colouring of $H$, if it exists.
\end{lemma}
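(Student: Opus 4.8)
The plan is to observe that, for $k=2$, the colour-conflict condition (b) degenerates into a purely connectivity constraint, so that finding a special equitable $2$-colouring amounts to a subset-sum computation on the sizes of the weakly connected components of $H$---which is solvable in polynomial time precisely because those sizes are bounded by $|V(H)|$.

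First I would spell out condition (b) for $k=2$. Since $(j\bmod 2)+1$ equals $2$ when $j=1$ and $1$ when $j=2$, (b) forbids an arc $xy$ of $H$ exactly when $c(x)\neq c(y)$---in either direction---while monochromatic arcs are allowed. Hence a map $c\colon V(H)\to\{1,2\}$ satisfies (b) if and only if $c$ is constant on each weakly connected component of $H$: if $xy$ is an arc then $x$ and $y$ lie in the same component and so share a colour, and conversely a colouring that is constant on components has no bichromatic arc. Combining this with the equitability condition (a): writing $|V(H)|=2a$ and letting $s_1,\dots,s_m$ be the sizes of the weakly connected components of $H$ (so that $\sum_{i=1}^m s_i=2a$), the special equitable $2$-colourings of $H$ are exactly those obtained by choosing a subset $I\subseteq\{1,\dots,m\}$ with $\sum_{i\in I}s_i=a$ and colouring the components indexed by $I$ with colour $1$ and the remaining components with colour $2$.

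The algorithm is then: compute the weakly connected components of $H$ and their sizes $s_1,\dots,s_m$ in linear time; decide, by the standard subset-sum dynamic program, whether some sub-multiset of $\{s_1,\dots,s_m\}$ sums to $a$, maintaining the Boolean table $T[i][t]$ meaning ``some subset of $\{s_1,\dots,s_i\}$ sums to $t$'' via $T[0][0]=\mathrm{true}$ and $T[i][t]=T[i-1][t]\vee T[i-1][t-s_i]$ for $0\le t\le a$; if $T[m][a]$ holds, backtrack to recover a suitable $I$ and output the corresponding colouring, and otherwise report that no special equitable $2$-colouring exists. Correctness follows immediately from the characterisation of the previous paragraph, and since the table has $(m+1)(a+1)=O(n^2)$ entries with $n=|V(H)|$, each computed in constant time, the whole procedure runs in $O(n^2)$ time.

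The only point that deserves explicit attention---and the one I would be careful to address---is that although the task ``partition a multiset of positive integers into two parts of equal sum'' is \textbf{NP}-hard for arbitrary integers, here every $s_i$ is at most $n$ and their total is exactly $n$, so the pseudo-polynomial dynamic program is genuinely polynomial; the rest of the argument is routine.
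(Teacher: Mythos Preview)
Your proof is correct and follows essentially the same approach as the paper: both observe that for $k=2$ condition (b) forces each weak component to be monochromatic, reducing the problem to a subset-sum on component sizes solved by the standard pseudo-polynomial dynamic program (which is genuinely polynomial here since the target is at most $|V(H)|$). Your write-up is in fact more explicit than the paper's, spelling out the DP recurrence and the $O(n^2)$ bound, but the underlying idea is identical.
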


\begin{proof}
Note that our definition of a special $2$-colouring does not allow
for arcs having two distinct colours on their nodes, in either order.
Hence every weak component of $H$ must be monochromatic
(recall that a weak component is a connected component of the underlying
undirected graph of $H$).
The problem thus reduces to finding a subset of weak components of $H$
summing to exactly half of the nodes of $H$.
This we solve in polynomial time by two folklore algorithms;
finding the weak components by BFS, and solving the knapsack problem in
unary notation by standard dynamic programming.
\end{proof}

\begin{lemma}
Let $k \in \Nset$, $k\geq3$.
Assume a simple digraph $H$ such that $|V(H)|$ is divisible by~$k$.
Then it is $\textbf{NP}$-complete to decide whether $H$
has a special equitable $k$-colouring.
\end{lemma}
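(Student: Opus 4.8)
Membership in $\textbf{NP}$ is immediate: a special equitable $k$-colouring $c$ is a certificate of size $|V(H)|$, and checking that each of the $k$ colour classes has size exactly $|V(H)|/k$ and that no arc $xy$ of $H$ has $c(x)=j$, $c(y)=(j\bmod k)+1$ is done in polynomial time.

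For $\textbf{NP}$-hardness I would reduce from a problem sharing the same $k{=}2$/$k\ge 3$ behaviour — a natural candidate being perfect matching in $k$-uniform hypergraphs ($k$-dimensional matching), which is polynomial for $k=2$ and $\textbf{NP}$-complete for every fixed $k\ge 3$. The construction is cleanest for $k=3$, where one has two convenient primitives: an arc $xy$ of $H$ is admissible exactly when $c(y)$ equals $c(x)$ or $c(x)-1$ modulo $3$, so a bidirected edge $x\leftrightarrow y$ forces $c(x)=c(y)$ (allowing ``rigid blobs'' of prescribed size), and a directed triangle $p\to q\to r\to p$ admits, under a special colouring, only the monochromatic colouring and the ``rainbow'' one in which all three colours occur — this follows by counting, modulo $3$, how many of its three arcs strictly decrease the colour. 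A directed triangle is thus a ``select/don't select'' switch depositing either one node of each colour or three nodes of a single colour; the plan is to attach such a switch to every hyperedge, tie together the switches of hyperedges that share a ground element using bidirected blobs, and pad the instance with extra monochromatic blocks so that the equitability requirement becomes satisfiable precisely when the hyperedges whose switches are in rainbow mode form a perfect matching. For $k\ge 4$ the two primitives above weaken (a bidirected edge no longer forces equality, and a directed $k$-cycle has further special colourings besides the monochromatic and the rotation one), so one needs a more involved variant of the same idea, or a reduction from the $k=3$ case.

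The heart of the proof — and the step I expect to be the main obstacle — is that the arc constraints by themselves never exclude any colouring: an arc whose endpoints share a colour is always happy, so the all-monochromatic colouring of $H$ is always special. Hence the entire discriminating power of the reduction must be routed through the cardinality condition, turning the correctness argument into a delicate double count: the padding blocks have to be sized so that (i) colourings that are ``too monochromatic'' overflow some class and are thereby killed, (ii) exactly the colourings whose switch gadgets encode a matching come out balanced, and (iii) the one-sided cyclic shift in the forbidden pattern (an arc may go from class $j$ to class $j{-}1$ but not to $j{+}1$) does not let the gadgets interact in unintended ways. Once the gadget wiring and the vertex counts are fixed so that both directions of the reduction hold, the remaining verification — correctness, polynomial size, and polynomial running time — is routine bookkeeping.
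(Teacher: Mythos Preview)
Your membership argument and the two $k=3$ primitives (a bidirected edge forces equal colours, a directed triangle is either monochromatic or a decreasing rainbow) are correct. However, what you have written is a plan, not a proof, and the missing parts are not routine.

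For $k=3$ you never actually specify the reduction. ``Attach a triangle to every hyperedge, tie together the triangles of hyperedges that share a ground element via bidirected blobs, and pad'' is precisely the hard step, and it is not clear it works as stated: if two hyperedges $e_1,e_2$ share an element and you bidirect the corresponding triangle nodes, then selecting $e_1$ (rainbow) fixes the colour of that shared node, which forces $e_2$'s triangle to be monochromatic of \emph{that particular} colour, and this propagates through the whole incidence structure in a way you have not controlled. All of the discriminating power must, as you yourself note, come from the equitability condition, and you have not exhibited the padding sizes nor argued either direction of the equivalence. This is the heart of the proof and it is absent.

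For $k\ge 4$ you have essentially nothing: you observe that both primitives break and say one needs ``a more involved variant of the same idea, or a reduction from the $k=3$ case''. The paper's authors explicitly remark that there does not seem to be an easy way to reduce the case $k$ to $k+1$, so your fallback is not available; they instead give separate explicit constructions for $k=3$, $k=4$, and a uniform construction for all $k\ge 5$. Their source problem is also different from yours: they reduce from two-colouring of $3$-uniform hypergraphs (not $k$-dimensional matching), and their per-hyperedge gadget is a directed $6$-cycle $x_1,f,x_2,f',x_3,f''$ rather than a triangle. A large ``anchor'' subgraph ($A_3$ a star, $A_4$ a complete digraph) pins two non-adjacent colours $1,3$ onto the ground set $X$, and then a counting argument forces exactly one copy $f,f',f''$ per hyperedge to take colour $2$, which in turn forces the two neighbouring $x_i$'s to have distinct colours in $\{1,3\}$ --- yielding the hypergraph $2$-colouring. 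For $k\ge 4$ they bolt on additional blocks $C_4,C_5,D_5,\dots,D_{k-1}$ that absorb the new colours $4,\dots,k$ without disturbing the core $\{1,2,3\}$-argument.

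So the gap is twofold: the $k=3$ reduction is not constructed, and the statement for $k\ge 4$ is not addressed at all.
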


\begin{proof}
First to say, there does not seem to be an easy way how to reduce a case of
$k\geq3$ to that of $k+1$, and so we have to provide hardness reductions for
each considered value of~$k$.
We reduce from the folklore $\textbf{NP}$-complete problem of {\em
two-colouring $3$-uniform hypergraph}:
Given is a ground set $X$ and a family $\mathcal{F}$ of $3$-element subsets of $X$
(hyperedges).
The task is to decide whether the elements of $X$ can be assigned one of two
colours each such that no set in $\mathcal{F}$ is monochromatic.

\subparagraph{$(k=3)$}
For such a $3$-uniform hypergraph $(X,\mathcal{F})$ we first construct an
equivalent instance $H$ of the special equitable $3$-colouring problem.
Let $a=3|\mathcal{F}|+|X|$.\,\footnote{Although the formula for $a$ might seem arbitrary now, this
precise expression will become relevant with the case of $k=6$.}
We denote by $A_3$ the digraph of $a'=a+|\mathcal{F}|$ nodes 
$s_1,s_2,\dots,s_{a'}$ and of $a'-1$ arcs $s_1s_i$ for $i=2,\dots,a'$ ($A_3$ is a star),
and by $B$ the digraph on $a-|\mathcal{F}|$ nodes with no arcs at all.
Then we construct a digraph $G_3$ on the node set $X\cup\mathcal{F}^3$
where $\mathcal{F}^3$ is a set containing exactly three distinct copies
$f,f',f''$ of each hyperedge $f\in\mathcal{F}$.
The arcs of $G_3$ are given as follows;
for each $f=\{x_1,x_2,x_3\}\in\mathcal{F}$ there is a directed $6$-cycle
on the nodes $x_1,f,x_2,f',x_3,f''$ in this cyclic order
(a permutation of $x_1,x_2,x_3$ is irrelevant, though).
A digraph $H$ is constructed from the disjoint union of $A_3,B$ and $G_3$,
by adding arcs from the node $s_1$ to all the nodes in $X$ of $G_3$.

Then $H$ has exactly $a+|\mathcal{F}|+a-|\mathcal{F}|+|X|+3|\mathcal{F}|=3a$ nodes,
and we claim that $H$ has a special equitable $3$-colouring if, and only if,
$(X,\mathcal{F})$ is two-colourable.
In the forward direction,
up to symmetry between the colours, we may assume that $s_1$ gets colour
$1$, and so all nodes of $A_3$ have colours $1$ or $3$.
We argue the following properties:
\begin{itemize}
\item[(i)]
The nodes in $X$ can only receive colours $1,3$.
\item[(ii)]
Among the nodes of $G_3$ not in $X$, at least $|\mathcal{F}|$
of them must receive colour $2$.
\end{itemize}
Here (i) follows from the fact that each node in $X$ ends an arc starting
in $s_1$ (of $A_3$) of colour $1$.
To get (ii), notice that we have to assign colour $2$ to exactly $a$ nodes
which cannot appear in $A_3$ and in $X$ due to $s_1$ having colour $1$.
We can give colour $2$ to the nodes of $B$, yet, at least
$a-|B|=|\mathcal{F}|$ of the nodes of colour $2$ must be in $G_3\setminus X$.

Now we prove that if (i),(ii) hold true, then the hypergraph
$(X,\mathcal{F})$ is two-colourable.
Consider one of the $6$-cycles of $G_3$, say the one on the nodes
$x_1,f,x_2,f',x_3,f''$.
It cannot happen $c(f)=c(f')=2$\,---in such a case, depending on the colour $c(x_2)$, 
there would be an arc in $G_3$ coloured with a forbidden pair $1,2$ or $2,3$.
Hence each of the $6$-cycles defining the arcs of $G_3$ 
(for each $f=\{x_1,x_2,x_3\}\in\mathcal{F}$)
has at most one vertex of colour $2$, and so exactly one such.
Up to symmetry, let $c(f)=2$ in (any) one of the cycles.
Then $c(x_1)\not=c(x_2)$, since otherwise $c(x_1)=c(x_2)\in\{1,3\}$ would
again give a forbidden pair of colours $1,2$ or $2,3$, respectively.
Consequently, taking the colouring $c$ restricted to $X$, no hyperedge in
$\mathcal{F}$ is monochromatic and $(X,\mathcal{F})$ is two-colourable.

Conversely, consider a two-colourable $3$-uniform hypergraph $(X,\mathcal{F})$.
Let the colours occuring in $X$ be $1$ and $3$.
We extend this to a special equitable $3$-colouring of our digraph $H$ as
follows.
If a hyperedge $f=\{x_1,x_2,x_3\}\in\mathcal{F}$ is coloured $1,1,3$,
then we assign colours $1,1,1,3,3,2$ in order to the $6$-cycle on the nodes
$x_1,f,x_2,f',x_3,f''$ in $G_3$.
If this $f=\{x_1,x_2,x_3\}\in\mathcal{F}$ is coloured $1,3,3$,
then we assign colours $1,1,3,3,3,2$ in order to the same $6$-cycle.
We finally assign colour $1$ to $s_1$, colour $2$ to all nodes of $B$
(and so $c^{-1}(2)=a$),
and an arbitrary choice of colours $1,3$ to the remaining nodes of $A_3$
in order to ``balance'' $c^{-1}(1)=c^{-1}(3)=a$.

\subparagraph{$(k=4)$}
Second, we modify the previous construction of $H$ for the case of $k=4$.
We use the same $B$ and $G_3$.
We replace $A_3$ with a digraph $A_4$ which is the complete digraph on 
$a'=a+|\mathcal{F}|$ nodes $s_1,s_2,\dots,s_{a'}$, too.
Then we add a new digraph $C_4$ formed by $a$ nodes with no arcs between.
$H$ is constructed from a disjoint union of $A_4,C_4,B$ and $G_3$ by
adding all the arcs from $s_1$ to $C_4$ and all the arcs between the nodes
of $A_4$ and of $X\subseteq V(G_3)$ in both directions.
Clearly, $H$ has $4a$ nodes.

Consider a special equitable $4$-colouring of $H$.
Again, up to symmetry, let the colour of $s_1$ be $1$.
Then whole $A_4$ and $X$ may only receive colours $1$ or $3$ and (i) holds
true again.
Since no node of $C_4$ may be coloured $2$ due to the existence of an arc
from $s_1$, and since $B$ (which may be coloured by $2$) has size $a-|\mathcal{F}|$,
we get (ii), too.
Now, notice that the argument following (i),(ii) above did not use the pair
of colours $3,1$ as forbidden, and so it applies now as well;
$(X,\mathcal{F})$ is two-colourable.

Conversely, consider a two-colourable $3$-uniform hypergraph $(X,\mathcal{F})$.
Then, exactly as in the case of $k=3$, we get a valid colouring of $A_4\cup
B\cup G_3$ which we complement by assigning colour $4$ to whole $C_4$.
This results in a special equitable $4$-colouring of $H$.

\subparagraph{$(k\geq5)$}
Third, we define a general construction for all the values $k=5,6,\dots$.
We use the same gadgets $G_3$, $B$, and $A_4$, and introduce $k-3$ disjoint
copies of $C_4$ which we denote by $C_4,C_5$ and $D_5,\dots,D_{k-1}$.
Again, on the disjoint union of all these digraphs (which has $k\cdot a$
nodes) we define $H$ by adding
\begin{itemize}
\item all the arcs between the nodes
of $A_4$ and of $D_5\cup\dots\cup D_{k-1}$ in both directions,
\item all the arcs between the nodes
of $A_4\cup D_5\cup\dots\cup D_{k-1}$ and the nodes $X$ of $G_3$ in both directions,
\item all the arcs between the nodes
of $D_5\cup\dots\cup D_{k-1}$ and of $B$ in both directions,
\item all the arcs between $s_1$ and the nodes of $C_4$ in both directions,
and the same between $s_2$ and $C_5$.
\end{itemize}

Consider a special equitable $k$-colouring of $H$.
For simplicity we call the forbidden pairs of colours \mbox{$j$,
$(j\,\mathit{mod}\,k)+1$} as {\em adjacent}.
Since $A_4\cup D_5\cup\dots\cup D_{k-1}$ has $(k-4)a+1$ nodes, at least $k-3$
distinct colours must occur there.
However, $A_4$ (of $>\!a$ nodes) itself gets at least two distinct
non-adjacent colours $c_1,c_2$ which cannot be adjacent to any of the colours 
occuring in $D_5\cup\dots\cup D_{k-1}$ other than $c_1,c_2$.
A simple case analysis shows that the only valid choice of colours is
$c_1=1$, $c_2=3$ and remaining $5,6,\dots,k-1$, up to rotation symmetry.
Consequently, $A_4$ holds only colours $1,3$ and each of the colours
$5,6,\dots,k-1$ occurs somewhere in $D_5\cup\dots\cup D_{k-1}$.
In particular, no node of $A_4\cup D_5\cup\dots\cup D_{k-1}$ is coloured~$2$.

Which nodes could have colour $2$?
Due to the arcs to and from $s_1,s_2$ in $A_4$, all the nodes of colour $2$
belong to $B\cup(G_3\setminus X)$, and since $G_3\setminus X$ has
$3|\mathcal{F}|<a$ nodes, we have $c^{-1}(2)\cap B\not=\emptyset$.
This has a twofold consequence; first, (ii) holds true also in this case,
and second, colours $1,3$ cannot occur in $D_5\cup\dots\cup D_{k-1}$.
Then, by simple counting, $c^{-1}(5)\cup\dots\cup c^{-1}(k-1)$ must be exactly
the node set of $D_5\cup\dots\cup D_{k-1}$,
and hence $X$ cannot get any of the colours $5,\dots,k-1$.
Neither colours $2,4$ or $k$ could occur in $X$ due to the arcs to and from $A_4$,
which concludes that (i) holds true, too.
Theorefore, $(X,\mathcal{F})$ is two-colourable.

Conversely, consider a two-colourable $3$-uniform hypergraph $(X,\mathcal{F})$.
We colour $G_3\cup A_4$ by $1,2,3$ as above while giving $c(s_1)=1$ and $c(s_2)=3$.
Then we assign colour $2$ to whole $B$, colour $4$ to whole $C_4$, colour
$k$ to whole $C_5$, and colours $j$ to whole $D_j$ for $j=5,\dots,k-1$.
Again, this results in a special equitable $k$-colouring of $H$.
\end{proof}

\end{document}